%% file: arXiv_v2.tex
\documentclass[12pt]{article}
\usepackage[english]{babel}

\usepackage{fullpage}
\usepackage{cite}

\input{macros}

\usepackage{authblk}
\usepackage[dvipsnames]{xcolor}

\renewcommand{\EE}[1]{\underset{\scaleobj{.8}{#1}}{\mathds{E}\,}}
\renewcommand{\PP}[1]{\underset{\scaleobj{.8}{#1}}{\mathds{P}\,}}

\allowdisplaybreaks 
\usepackage{graphicx,import,xcolor,transparent} 
\usepackage{centernot}

\newcommand{\errata}[1]{{\color{Black} #1}}

\usepackage{caption} 
\newcommand{\mycaption}[2]{\caption[#1]{%
{\centering #1 \par}\vspace{0.4em}
        \normalfont #2}
        }

\usepackage{subcaption}
\usepackage{tabularray}
\UseTblrLibrary{booktabs} 
\usepackage{pifont}

\begin{document}

\title{Quantum Shannon theory made robust:\\
a tale of three protocols for almost i.i.d.\ sources}

\author[1]{Filippo Girardi\thanks{filippo.girardi@sns.it}}
\author[2]{Nilanjana Datta}
\author[3]{\\Giacomo De Palma}
\author[1]{Ludovico~Lami}
\affil[1]{Scuola Normale Superiore, Piazza dei Cavalieri 7, 56126 Pisa, Italy}
\affil[2]{Department of Applied Mathematics and Theoretical Physics,

University of Cambridge, Cambridge CB3 0WA, United Kingdom}
\affil[3]{Department of Mathematics, University of Bologna, Piazza di Porta San Donato 5, 40126 Bologna, Italy}

\date{}
\setcounter{Maxaffil}{0}
\renewcommand\Affilfont{\itshape\small}

\maketitle
\begin{abstract}

The asymptotic rates of information-theoretic protocols — including error exponents, data-compression rates, and channel capacities — are traditionally derived under the idealised assumption that the underlying resources are independent and identically distributed (i.i.d.). Somewhat surprisingly, even slight departures from the exact i.i.d.\ structure can drastically alter the asymptotic behaviour predicted by the i.i.d.\ theory. If the precise nature of the perturbation is known, for instance in the case of a pointwise defect, one can design a bespoke protocol that compensates for it, e.g.\ by discarding the corrupted subsystem. In realistic physical settings, however, exact i.i.d.\ behaviour cannot be guaranteed, and deviations from the ideal regime cannot generally be identified precisely. This raises a fundamental question: which notions of almost i.i.d.\ structure are sufficiently robust to preserve the asymptotic predictions of quantum Shannon theory? We investigate this question for three central information-theoretic tasks: asymmetric hypothesis testing, classical and quantum data compression, and classical communication through quantum channels. Rather than designing protocols tailored to specific defects, we seek robust protocols that remain asymptotically optimal and that are universal within a broad class of almost i.i.d.\ resources whose precise deviations from the ideal regime are unknown. To this end, we study three inequivalent notions of almost i.i.d.\ structure, and determine which of them preserve the asymptotic rates and error exponents predicted by the i.i.d.\ theory. Along the way, we introduce the notion of an \emph{almost i.i.d.\ process} and a new distance measure between quantum channels — the \emph{club distance} — designed to capture stability under local perturbations. These notions may be of independent interest.
\end{abstract}

\newpage
\tableofcontents

\newpage

\section{Introduction}

A central paradigm of information theory is that resources\footnote{These include information sources, channels, and entanglement resources.} are independent and identically distributed (i.i.d.), that is, memoryless. In realistic physical settings, however, exact i.i.d.\ behaviour is rarely realised: experimental imperfections (e.g.~local defects), environmental interactions, and residual correlations inevitably lead to departures from the idealised tensor-power structure. This raises a fundamental question of robustness: when do information-theoretic protocols remain stable under perturbations of the i.i.d.\ assumption, and when can weak correlations within the resources give rise to qualitatively different behaviour?

This question has motivated a growing body of work on approximate notions of independence in quantum information theory. One of the earliest systematic approaches was introduced by Renner in the context of quantum cryptography, through the notion of ``almost-power'' states~\cite[Theorem~4.3.2]{RennerPhD}. Related ideas later appeared in work of Brand\~ao and Plenio on the generalised quantum Stein's lemma~\cite[Eq.~(66)--(67)]{Brandao2010}, where controlled deviations from tensor-power structure played a central role, and in subsequent works resolving the same problem~\cite{Lami_2025}. A detailed discussion of these notions can be found in~\cite{Mazzola_2026}. More recently, a systematic study of different notions of approximate i.i.d.\ structure and the relations between them was initiated in~\cite{almost_iid}.

An important theme underlying these developments is that the expression ``almost i.i.d.'' does not refer to a single well-defined concept. Rather, there are several inequivalent ways in which a correlated quantum source may approximately resemble an i.i.d.\ one, each capturing a different operational or physical intuition. In some approaches, only a small number of constituents are allowed to behave anomalously; in others, one requires merely that sufficiently small marginals approximately coincide with those of an i.i.d.\ source.

This distinction is operationally significant because different information-theoretic tasks probe correlations at different scales. Protocols based on local measurements may be largely insensitive to sparse long-range correlations, whereas tasks involving entropy concentration, data compression, channel coding, or hypothesis testing can depend delicately on global structure. Consequently, two notions of approximate i.i.d.\ structure that appear nearly indistinguishable from a local perspective may nevertheless lead to markedly different asymptotic behaviour. In particular, a source may reproduce the local statistics of an i.i.d.\ source while failing dramatically to preserve its collective entropic properties (see Section~\ref{sec:intro_states}).\smallskip

These observations naturally lead to a central question of the present line of research: which notions of approximate i.i.d.\ structure are sufficiently robust to preserve the operational predictions of the i.i.d.\ setting? Equivalently, under which classes of perturbations do fundamental information-theoretic quantities — such as data compression rates, channel capacities, and hypothesis-testing exponents — remain stable?

In this work, we address this question in a systematic way, while relaxing the assumption that the underlying sources or channels are perfectly known. Rather than designing protocols adapted to specific perturbations of the i.i.d.\ regime, we seek universal mechanisms that remain effective across broad classes of almost i.i.d.\ resources.
\smallskip

Let us consider a concrete example in the setting of classical communication theory. Take the noiseless channel $\pazocal{I}$ on a bit $\{0,1\}$. It would be possible to achieve its capacity $C(\pazocal{I})=1$ by leveraging this naive communication protocol at the $n$-copies level: the encoder $\pazocal{E}_n: \{0,1\}^n\to \{0,1\}^n$ and the decoder $\pazocal{D}_n: \{0,1\}^n\to \{0,1\}^n$ for $\pazocal{I}_2^{\times n}$ could simply be chosen to be the identity maps, yielding a zero-error code with rate 1. Suppose that, for a pointwise defect on your communication line, your actual channel is not $\pazocal{I}_2^{\times n}$ but $\tilde{\pazocal{I}}_2^{(n)} \coloneqq\pazocal{F}\times\pazocal{I}_2^{\times n-1}$, where $\pazocal{F}$ is the (deterministic) bit flip, i.e.\ $\pazocal{F}(y|x)=1$ if and only if $y=x\oplus 1$. This perturbed channel could legitimately be called \emph{almost i.i.d.}, at least in the asymptotic limit $n\to \infty$, as it just differs from the i.i.d.\ one in a single site over $n$. The previous code immediately becomes useless, as the error probability immediately turns to 1. Of course, if we had known that the defect was taking place in the first use of the channel, we could have designed a code ignoring the first use, again restoring the same asymptotic performance. Mathematically speaking, the capacity of the perturbed channel is defined as an optimisation over all codes, \emph{which are perfectly aware} of the nature of the defect. But, in a practical setting, the exact nature of the local defect is likely to be unknown to the parties involved in the communication protocol. The only information available is that they are communicating on an unknown channel $\tilde{\pazocal{I}}_2^{(n)}$ which behaves almost like $n$ i.i.d.\ copies of $\pazocal{I}_2$. The quest for robustness not only involves the following question
\begin{center}
    \emph{Could optimal asymptotic rates of information theoretic protocols remain invariant, or even increase, when the original i.i.d.\ setting is replaced by any almost i.i.d.\ one?}
\end{center}
but also this dichotomy between theoretical optimum and lack of knowledge of the actual deviation from the i.i.d.\ structure:
\begin{center}
    \emph{Are there protocols which can still achieve the optimal asymptotic rates when the actual source has small unknown deviations from the ideal i.i.d.\ source?}
\end{center}

The remainder of the paper is organised as follows. In Section~\ref{sec:notation} we recall the notation and the basic notions we are going to use throughout the paper; in Section~\ref{sec:intro_protocol}, we give a concise recap of hypothesis testing, data compression and communication in the i.i.d.\ setting; in Section~\ref{sec:intro_W} we provide a self-contained introduction on the quantum Wasserstein distance of order 1, \errata{which will play a central role in the definition of almost i.i.d.\ states and processes.} In Section~\ref{sec:almost} we review the different almost i.i.d.\ structures for states, and we introduce the notion of almost i.i.d.\ process when considering sequences of channels. In Section~\ref{sec:hp} we study the robustness of asymmetric quantum hypothesis testing. Then, in Section~\ref{sec:data_comp} we address the problem of compression of classical and quantum almost i.i.d.\ sources. Finally, in Section~\ref{sec:channels} \errata{we discuss the robustness of channel coding when transmitting classical messages via classical and quantum channels}.\smallskip

For the convenience of the reader, we include a brief recap of the results of this work in Table~\ref{table}.

\begin{table}[t]
    \centering 
    \begin{talltblr}[
     caption = {Overview of the main results of this paper.},
     label = table,
    ]{
        width = \textwidth, 
        colspec = {X[1.7,c] X[0.9, c] X[c] X[0.9, c] X[1.2, c]}, 
    }
        \toprule
        \SetCell[c=5]{c}{\textbf{Almost i.i.d.\ states}} & & & & \\
        \toprule
        & $\rho_n\xrightharpoonup{w}\rho$ & $\rho_n\xrightharpoonup{W_1}\rho$ & $\rho_n\xrightharpoonup{\rm MSR}\rho$ &  \\
        \midrule
        \SetCell[c=5]{c}{\textit{Asymmetric hypothesis testing: robustness of the Stein exponent}} & & & & \\
        \midrule
        null hypothesis    & \ding{51} & \ding{51} & \ding{51} & Theorem~\ref{Stein_aiid_thm} \\
        alternative hypothesis & \ding{55} & \ding{55} & \ding{51} & and Remark~\ref{rem:trace} \\
        \midrule
        \SetCell[c=5]{c}{\textit{Data compression: robustness of the compression rate}} & & & & \\   
        \midrule
        classical source   & \ding{51} & \ding{51} & \ding{51} & Theorem~\ref{thm:c-data-comp}  \\
        quantum source  & \ding{51} & \ding{51} & \ding{51} & Theorem~\ref{thm:q-data-comp}\\
        \toprule
        \SetCell[c=5]{c}{\textbf{Almost i.i.d.\ channels}} & & & & \\
        \toprule
         && $\pazocal{N}_n\xrightharpoonup{\clubsuit}\pazocal{N}$ && \\
        \midrule
        \SetCell[c=5]{c}{\textit{Channel coding: robustness of classical information transmission}} & & & & \\ 
        \midrule
        capacity &  & \ding{51} &  & Theorem~\ref{thm:channels} \\
        reliability function  &  & \ding{55} &  & Section~\ref{sec:reliability} \\
        \midrule
    \end{talltblr}
\end{table}

\subsection{Notation and preliminaries}\label{sec:notation}

Let $\XX$ be a finite alphabet. \errata{The set of probability distributions on $\XX$ is denoted by 
$\mathcal{P}(\XX)$.} For a given positive integer $n$, we will denote as $\XX^n$ the set of strings $x^n=(x_1,\ldots,x_n)$ of length $n$ composed of symbols from $\XX$. An \deff{$n$-type} on $\XX$ is a probability distribution $t:\XX\to [0,1]$ on $\XX$ such that $nt(x)$ is an integer for all $x\in \XX$. The symbol $\mathcal{T}_n$ stands for the set of $n$-types on $\XX$. 

The type associated with a string $x^n\in \XX^n$ is defined as $t_{x^n}(x) \coloneqq \frac{N(x|x^n)}{n}$, where $N(x|x^n)$ is the number of times the symbol $x\in \XX$ appears in $x^n$. For a given $t\in \mathcal{T}_n$, we write 
\bb
T_{n,t} \coloneqq \left\{ x^n\in \XX^n:\ t_{x^n}=t \right\}
\ee
for the associated \deff{type class}, i.e.\ the set of strings $x^n$ with type $t$. Let $p=\{p(x)\}_{x\in\mathcal X}$ and
$q=\{q(x)\}_{x\in\mathcal X}$ be probability distributions on a finite alphabet $\mathcal X$. The \deff{total variation distance} between $p$ and $q$ is defined as
$\|p- q\|_1 \coloneqq \frac12 \sum_{x\in\mathcal X} |p(x) - q(x)|$.

A quantum system is modelled by a Hilbert space $\HH$; all Hilbert spaces in this work are assumed to be finite-dimensional. The set of linear operators acting on $\HH$ is denoted by $\mathcal{L}(\mathcal{H})$. A state of the system is represented by a \deff{density operator} (or density matrix), i.e.\ a positive semi-definite operator on $\HH$ with unit trace. The set of density operators on $\HH$ is denoted by $\D(\HH)$. \deff{Pure states} are simply rank-one projectors $\psi = \ketbra{\psi}$, where $\ket{\psi}\in \HH$ has unit norm, i.e.\ $\braket{\psi|\psi} = \|\ket{\psi}\|^2 = 1$.

The \deff{trace distance} between any two states $\rho, \sigma\in \D(\HH)$ is given by $\frac{1}{2}\|\rho - \sigma\|_1$, where $\|A\|_1 \coloneqq \Tr\big[|A|\big] = \Tr\big[\sqrt{A^\dagger A}\big]$. The corresponding (Umegaki) \deff{relative entropy} is instead defined as~\cite{Umegaki1962}
\bb
D(\rho\|\sigma) \coloneqq \Tr \big[ \rho (\log \rho - \log \sigma) \big]
\label{Umegaki}
\ee
provided that $\supp(\rho) \subseteq \supp(\sigma)$ (and, in this case, the logarithms are calculated on the support only); if $\supp(\rho) \not\subseteq \supp(\sigma)$, we set instead $D(\rho\|\sigma) = +\infty$.

Multipartite quantum systems are modelled by the tensor products of the local Hilbert spaces. Given some positive integer $n$ and a Hilbert space $\HH$, the Hilbert space corresponding to $n$ copies of the quantum system modelled by $\HH$ is therefore $\HH^{\otimes n}$. A permutation $\pi\in S_n$ of the underlying quantum systems, where $S_n$ stands for the symmetric group, is naturally represented by a unitary $U_\pi$ on $\HH^{\otimes n}$. A state $\rho_n\in \D\big(\HH^{\otimes n}\big)$ is said to be \deff{permutationally symmetric} if 
\bb
U_\pi^{\vphantom{\dag}} \rho_n U_\pi^\dag = \rho_n \qquad \forall\ \pi\in S_n\, .
\ee

A quantum channel ${\pazocal E}: \D(\HH) \to \D({\mathcal K})$, where $\HH$ and ${\mathcal K}$ are finite-dimensional Hilbert spaces,  is a linear completely positive trace-preserving (CPTP) map. We denote the identity map by ${\rm id}: \D(\HH) \to \D(\HH)$, and the identity operator on $\HH$ by $\id$. The \deff{diamond distance} between two channels $\Lambda_1, \Lambda_2: \D(\HH) \to \D({\mathcal K})$ is given by 
\bb\label{eq:diamond}
\|\Lambda_1 -\Lambda_1\|_{\diamondsuit} \coloneqq \sup_{\nu \in \D({\mathcal H}_A \otimes {\mathcal H}_R) }\|({\rm Id}_{R} \otimes \Lambda_1)(\nu) - ({\rm Id}_{R} \otimes \Lambda_2)(\nu)\|_1,
\ee 
where the supremum can be restricted to a reference system $R$ whose Hilbert space is isomorphic to that of $A$.

For a state $\rho\in\mathcal D(\mathcal H)$, and a quantum channel 
$\Lambda:{\mathcal D}(\mathcal H)\to{\mathcal D}(\mathcal K)$, the \emph{entanglement fidelity} of $\Lambda$ with respect to $\rho$ is defined as
\bb\label{eq:ent-fid}
F_e(\rho,\Lambda)
\coloneqq
\bra{\psi_\rho}
({\rm Id}\otimes\Lambda)
\bigl(
\ket{\psi_\rho}\!\bra{\psi_\rho}
\bigr)
\ket{\psi_\rho},
\ee
where $\ket{\psi_\rho}\in\mathcal H_R\otimes\mathcal H$ is any purification of $\rho$, i.e.
$
\Tr_R
\bigl(
\ket{\psi_\rho}\!\bra{\psi_\rho}
\bigr)
=
\rho .
$
The quantity $F_e(\rho,\Lambda)$ is independent of the choice of purification. For a CPTP map $\Lambda$ with Kraus operators $\{A_k\}_k$ we have (see e.g.~\cite[Eq. (9.135)]{NC})
\bb\label{eq:NC}
F_e(\rho, \Lambda) = \sum_k |\Tr(\rho A_k)|^2,
\ee

\subsection{Three protocols in quantum Shannon theory}\label{sec:intro_protocol}

\subsubsection{Hypothesis testing}\label{sec:intro_hp}

One of the key primitives of quantum information theory is quantum hypothesis testing. Given $n$ identical copies of a quantum system, each of which may be in a state $\rho$ (null hypothesis) or $\sigma$ (alternative hypothesis), the goal is to design a test, modelled by a binary POVM $\{E,\id-E\}$, which can distinguish between these two options reliably. To capture this concept quantitatively, one needs to distinguish between the two types of error one can make in the process: a type I error, which consists in guessing the alternative hypothesis when the null hypothesis holds, and a type II error, which, vice versa,  consists in guessing the null hypothesis when the alternative hypothesis holds. For the simplest case $n=1$ and a given test $0\leq E\leq \id$, which corresponds to guessing $\rho$, the error probabilities take the form
\bb
\pr\{\text{type I error}\,|\, E\} = \Tr[\rho (\id-E)]\, ,\qquad \pr\{\text{type II error}\,|\, E\} = \Tr[\sigma E]\, .
\ee
Given a threshold $\e\in [0,1)$ on the former, the minimum of the latter quantity is captured by the \deff{hypothesis testing relative entropy}~\cite{Buscemi2010}
\bb
D_H^\e(\rho\|\sigma) \coloneqq -\log \min\left\{ \Tr \sigma E:\ 0\leq E\leq \id\, ,\ \Tr \rho E \geq 1-\e\right\} .
\ee

If $n$ copies of the unknown state are available, we expect the error probabilities to decay exponentially fast as a function of $n$. In the setting of asymmetric hypothesis testing, which is deeply connected with both classical~\cite{Feinstein1954, Blackwell1959, Verdu1994, PPV} and quantum~\cite{Ogawa2002, Ogawa2007, Cheng2023Nov} coding theory, the goal is to minimise the type II error probability for a fixed type I error probability. The corresponding asymptotic figure of merit is the \deff{Stein exponent}, given by
\bb
\stein(\rho\|\sigma) \coloneqq \lim_{\e\to 0^+} \liminf_{n\to\infty} \frac1n\, \rel{D_H^\e}{\rho^{\otimes n}}{\sigma^{\otimes n}}\, .
\ee
A foundational result of quantum information theory is the proof, due to Hiai and Petz~\cite{Hiai1991} and refined by Ogawa and Nagaoka~\cite{Ogawa2000} that the above quantity is exactly equal to the Umegaki relative entropy~\eqref{Umegaki}: formally, the (strong converse) quantum Stein's lemma states that
\bb
\lim_{n\to\infty} \frac1n\, \rel{D_H^\e}{\rho^{\otimes n}}{\sigma^{\otimes n}} = D(\rho\|\sigma) \qquad \forall\ \e\in (0,1)\, ,
\ee
ensuring in particular that $\stein(\rho\|\sigma) = D(\rho\|\sigma)$.

\subsubsection{Classical and quantum data compression for i.i.d.~sources}\label{sec:intro_data}
Data compression is one of the central problems of information theory. The modern theory of data compression was initiated by Claude Shannon in his seminal paper~\cite{Shannon}. We briefly recall the notions of data compression (or source coding) for classical and quantum i.i.d.~sources below.

A classical memoryless (or i.i.d.)\ source $X \sim P$ is modelled by a sequence of independent and identically distributed random variables \(X_1,X_2,\ldots\), each taking values in a finite alphabet \(\mathcal X\) according to a fixed probability distribution \(\{P(x)\}_{x \in {\mathcal X}}\). In $n$ uses, the source outputs sequences
\bb
x^{n}=(x_1,\ldots,x_n)\in\mathcal X^n
\quad
\text{\,with probability\,} \quad
P^{\times n}(x^{n})
=
\prod_{i=1}^n P(x_i).
\ee

A fixed-length source code of rate \(R\) consists of the following encoding and decoding maps:
\begin{align}
{\pazocal E}_n:
\mathcal X^n
&\to
\{0,1\}^{\lceil nR\rceil}
\quad ; \quad
{\pazocal D}_n:
\{0,1\}^{\lceil nR\rceil}
\to
\mathcal X^n.
\end{align}

An \((n,R,\varepsilon)\) source code is a fixed-length source code
\(
{\pazocal C}_n
=
({\pazocal E}_n,{\pazocal D}_n,R)
\)
with blocklength \(n\), rate \(R\), and average probability of error satisfying
\bb
    p_{\rm err}(\pazocal{C}_n,P^{\times n})\leq \epsilon,
\ee
where
\bb
 p_{\rm err}(\pazocal{C}_n,P^{\times n})
\coloneqq
\PP{X^n\sim P^{\times n}}\left[
{\pazocal D}_n({\pazocal E}_n(X^{n}))
\neq
X^{n}\right]
\ee

A sequence of source codes \(({\mathcal C}_n)_n\) is said to be \emph{reliable} if
$
\lim_{n\to\infty}
p_{\mathrm{err}}^{(n)}({\mathcal C}_n)
=
0.$
A rate \(R\) is said to be \emph{achievable} for the source $X \sim p$ if there exists a reliable sequence of source codes of rate \(R\). Equivalently, \(R\) is achievable for $X \sim P$ if there exists a $(n, R+ \delta, \varepsilon)$ source code for every \(\varepsilon>0\), $\delta >0$, and sufficiently large $n$. The data compression limit for $X \sim p$ is then given by the optimal achievable rate:
$$R^*(P) \coloneqq \inf \{R \,: \, R \text{\, is achievable for \,} X^{\times n} \sim P^{\times n}\}.$$
Shannon's source coding theorem~\cite{Shannon} established that 
\bb
R^*(P) = H(P),
\ee
where $H(P)
=
-\sum_{x\in\mathcal X} P(x)\log P(x),$
is the Shannon entropy of the source.\bigskip

Quantum data compression is the quantum analogue of classical source coding. In the finite-dimensional setting, a quantum information source with a finite-dimensional Hilbert space \(\mathcal H\) is specified by the state of a quantum system i.e.\ its density matrix $\rho \in {\mathcal D}({\mathcal H})$. Such a source is said to be memoryless (or i.i.d.) if $\rho^{\otimes n}\in\mathcal D(\mathcal H^{\otimes n})$ characterises $n$ uses of it.

A quantum source code of rate \(R\) consists of completely positive trace-preserving (CPTP) encoding and decoding maps:
\bb
{\pazocal E}_n:
\mathcal D(\mathcal H^{\otimes n})
\to
\mathcal D({\mathcal K}_n),
\qquad
{\pazocal{D}}_n:
\mathcal D({\mathcal K}_n)
\to
\mathcal D(\mathcal H^{\otimes n}),
\ee
        where the compressed Hilbert space 
        \({\mathcal K}_n \subseteq {\mathcal H}^{\otimes n}\) 
        satisfies
\bb
\dim {\mathcal K}_n
\le
2^{\lceil nR\rceil}.
\ee

Since the state \(\rho^{\otimes n}\) may be part of a larger pure state entangled with a reference system, the appropriate figure of merit for the compression scheme is the entanglement fidelity of the overall compression--decompression map.

An \((n,R,\varepsilon)\) quantum source code is given by the triple
\({\mathcal C}_n
=
({\mathcal E}_n,{\mathcal D}_n,R)
\)
such that
\[
F_e\!\left(
\rho^{\otimes n},
{\pazocal D}_n\circ{\pazocal E}_n
\right)
\ge
1-\varepsilon,
\]
where \(F_e(\omega,\Lambda)\) denotes the entanglement fidelity of the state \(\omega\) under the quantum channel (i.e.\ CPTP map)  \(\Lambda\) (defined in~\eqref{eq:ent-fid}).
A sequence of quantum source codes \(({\pazocal C}_n)_n\) is said to be \emph{reliable} if
\[
\lim_{n\to\infty}
F_e\!\left(
\rho^{\otimes n},
{\pazocal D}_n\circ{\pazocal E}_n
\right)
=
1.
\]

A rate \(R\ge0\) is said to be \emph{achievable} for the source \(\rho\) if there exists a reliable sequence of quantum source codes of rate \(R\), and the quantum data compression limit for \(\rho\) is defined as
\bb
R^*(\rho) \coloneqq \inf \{ R:\, R \text{ is achievable for } \rho \}\, .
\ee

Schumacher's quantum source coding theorem~\cite{Schumacher} states that
\bb
R^*(\rho)
=
S(\rho),
\ee
where
\bb
S(\rho)
=
-\Tr[\rho\log\rho]
\ee
is the von Neumann entropy of the source.

\subsubsection{Reliable communication of classical information}\label{sec:intro_reliable}

Let $n\geq 1$ and let $\pazocal{N}^{(n)}_{A^n\to B^n}$ be a quantum channel mapping the states of $\pazocal{H}_A^{\otimes n}$ into states of $\pazocal{H}_B^{\otimes n}$. For $M\geq 2$, a code $\pazocal{C}_n=(\pazocal{E}_n,\pazocal{D}_n)$ of size $M$ is given by an encoder
\bb
    \pazocal{E}_n:[M]\to \mathcal{D}(\mathcal{H}_A^{\otimes n})
\ee
and a decoder
\bb\label{eq:decoder_def}
    \pazocal{D}_n: \mathcal{D}(\mathcal{H}_B^{\otimes n})\to \errata{\mathcal{P}([M])},
\ee
When using the code $\pazocal{C}_n$ to communicate the set of messages $[M]$ via the channel $\pazocal{N}^{(n)}$, the (average) error probability is
\bb\label{eq:p_err}
    p_{\rm err}\big(\pazocal{C}_n,\pazocal{N}^{(n)}\big)\coloneqq\frac{1}{M}\sum_{m=1}^{ M}\mathbb{P}\big(m=m'\sim\pazocal{D}_n\circ\pazocal{N}^{(n)}\circ\pazocal{E}_n(m)\big),
\ee

The case that is commonly considered \errata{when computing~\eqref{eq:p_err}} is the i.i.d.\ one, i.e.\ $\pazocal{N}^{(n)}=\pazocal{N}^{\otimes n}$ for a fixed channel $\pazocal{N}_{A\to B}$. We can then define
\bb\label{eq:c_epsilon}
    C_\epsilon(\pazocal{N})\coloneqq\sup\Big\{r\geq 0 :&&\!\! \limsup_{n\to\infty}\inf_{\substack{\pazocal{C}_n \text{ code}\\ \text{of size } \lceil 2^{rn}\rceil}}p_{\rm err}\big(\pazocal{C}_n,\pazocal{N}^{\otimes n}\big)\leq \epsilon\Big\}
\ee
The classical capacity $C(\pazocal{N})$ of $\pazocal{N}$ is defined as
\bb\label{eq:c}
    C(\pazocal{N})\coloneqq \lim_{\epsilon\to 0}C_\epsilon(\pazocal{N}).
\ee
For all rates below the capacity of $\pazocal{N}$, namely $0\leq r\leq C(\pazocal{N)}$, we can quantify the exponential decay of the error probability as $n\to \infty$ by defining the reliability function -- i.e.\ the error exponent -- as
\bb\label{eq:reliability}
    E(r,\pazocal{N})\coloneqq\liminf_{n\to\infty}-\frac{1}{n}\log \inf_{\substack{\pazocal{C}_n \text{ code}\\ \text{of size } \lceil 2^{rn}\rceil}}p_{\rm err}\big(\pazocal{C}_n,\pazocal{N}^{\otimes n}\big).
\ee
Let us conclude by a brief overview of the main results regarding classical communication via classical and quantum channels. The capacity of a classical channel $\pazocal{W}$ was introduced and identified by Shannon~\cite{Shannon} as the maximal mutual information between random input and output of $\pazocal{W}$, namely
\bb
    C(\pazocal{W})=\max_{P_X\in\mathcal{P}(\mathcal{X})}I(X:Y)\qquad X\sim P_X\qquad Y\sim \pazocal{W}_{Y|X}P_X.
\ee
The quantum generalisation is given by the Holevo--Schumacher--Westmoreland theorem~\cite{H-Schumacher-Westmoreland, Holevo-S-W}: for a quantum channel $\pazocal{N}_{A\to B}$, we have
\bb
    C(\pazocal{N})=\chi^{\infty}(\pazocal{N})\coloneqq\lim_{n\to\infty}\frac 1n\chi(\pazocal{N}^{\otimes n}),
\ee
with $\chi(\pazocal{N})$ being the Holevo quantity of $\pazocal{N}$, defined as
\bb
\chi(\pazocal{N}) \coloneqq \sup_{\rho_{XA}} I(X:B)_{{\rho}'}, 
\ee
where the supremum is over classical-quantum states $\rho_{XA} = \sum_x P_X(x) \ketbra{x} \otimes \rho_x^A$, and $\rho_{XB}' \coloneqq ({\rm Id}_X\otimes\pazocal{N}_{A\to B})(\rho_{XA})$.

There is no closed form expression for the reliability function of arbitrary channels at all rates $0<r<C(\pazocal{W})$. However, some important upper and lower bounds have been identified. It is important to mention at least Gallager's random coding bound~\cite{Gallager1965, gallager1968information}, which provides an achievable error exponent and turns out to be tight for rates larger than a critical value (called \emph{critical rate}) due to a matching converse bound, known as \emph{sphere-packing bound}, by Shannon--Gallager--Berlekamp~\cite{SHANNON196765,SHANNON1967522}. In Section~\ref{sec:channels}, we will leverage the idea of random coding (see e.g.~\cite[Chapter~5]{gallager1968information}) and we will consider a modified version of the maximal likelihood decoder in order to identify a valid code for almost i.i.d.\ sequences of channels.

\subsection{Transportation distances and the quantum Wasserstein distance of order 1}\label{sec:intro_W}

Standard distinguishability measures in quantum information theory -- such as trace distance, fidelity, and relative entropy -- are unitarily invariant and therefore regard any pair of orthogonal states as maximally distinguishable. While natural in many settings, this feature becomes problematic when one aims to quantify robustness under local perturbations. In multipartite systems, for example, one would intuitively expect the product state $|0\rangle^{\otimes n}$ to be substantially closer to $|1\rangle\otimes |0\rangle^{\otimes (n-1)}$ than to $|1\rangle^{\otimes n}$, since the former differs from the reference state only on a single subsystem. More generally, one seeks notions of distance compatible with the geometry induced by the Hamming metric on product spaces and stable under local modifications. Such considerations are particularly important in almost i.i.d. quantum information theory, where the relevant states depart from an exact tensor-product structure only on a small subset of subsystems.

These issues are closely connected with the continuity properties of entropic quantities. The von Neumann entropy is intrinsically robust under local perturbations: modifying a single qubit can alter the entropy by at most a constant independent of the total number of subsystems. However, this robustness is not reflected by unitarily invariant distances, since even a local operation may transform a state into an orthogonal one, thereby yielding maximal distinguishability. This discrepancy motivates the introduction of alternative metrics that explicitly incorporate locality.

In the classical setting, Wasserstein distances from optimal transport theory provide a natural framework for this purpose~\cite{villani2008optimal}. Given a metric space $(\mathcal X,d)$, the Wasserstein distance of order $1$ -- also known as the $W_1$ distance, Monge–Kantorovich distance~\cite{monge1781memoire,kantorovich2006translocation}, or earth mover’s distance -- between two probability distributions $\mu$ and $\nu$ on $\mathcal X$ quantifies the minimal average transportation cost required to transform one distribution into the other, where transporting a unit mass from $x\in\mathcal X$ to $y\in\mathcal X$ incurs cost $d(x,y)$.
More formally, we have
\begin{equation}
    W_1(\mu,\nu) = \min_\pi\,\EE{(X,Y)\sim\pi}\,d(X,Y)\,,
\end{equation}
where the minimization is performed over all the couplings $\pi$ between $\mu$ and $\nu$, given by the probability distributions on $\mathcal{X}\times\mathcal{X}$ with marginals $\mu$ and $\nu$.

When $\mathcal X$ is the set of strings over a finite alphabet, the natural underlying metric is the Hamming distance, which counts the number of coordinates on which two strings differ. The corresponding Wasserstein distance, known as Ornstein’s $\bar d$-distance~\cite{ornstein1973application}, has found numerous applications in ergodic theory and information theory, particularly in settings involving weak dependence, coding with memory, and rate-distortion theory~\cite{gray2011entropy}. Its central feature is its sensitivity to local perturbations: probability distributions concentrated on strings differing only in a few coordinates remain close in $W_1$ distance even when they are perfectly distinguishable in total variation distance.

These ideas have motivated the development of quantum analogues of the $W_1$ distance capable of capturing the locality structure of multipartite quantum systems. Such metrics provide a natural framework for quantifying approximate tensor-product structures and for analyzing continuity phenomena in almost i.i.d. quantum information theory.

The quantum Wasserstein distance of order $1$, or quantum $W_1$ distance, was introduced in~\cite{De_Palma_2021,De_Palma_2024b} as a generalization of the Hamming distance from classical strings to states of multipartite quantum systems. The construction is based on the notion of neighboring states. Two states $\rho$ and $\sigma$ of the $n$-partite system $A_1\ldots A_n$ are said to be \emph{neighboring} if there exists a subsystem $A_i$ such that
\begin{equation}
\Tr_{A_i}\rho=\Tr_{A_i}\sigma\,,
\end{equation}
that is, the two states become identical upon discarding subsystem $A_i$. The quantum $W_1$ norm is then defined as the norm whose unit ball is the convex hull of differences between neighboring states, and the associated quantum $W_1$ distance is the metric induced by this norm.

More precisely, we have the following definition.

\begin{Def}[(Quantum $W_1$ distance)]
For any $\rho,\,\sigma\in \mathcal{D}(\mathcal{H}_A^{\otimes n})$, we define
\bb
    \|\rho-\sigma\|_{W_1} \coloneqq \min \Bigg\{ \;\sum_{i=1}^n c_i\quad  &\text{such that}\quad  && c_i\geq 0, \qquad  \rho-\sigma=\sum_{i=1}^n c_i\left(\tau^{(i)}-\eta^{(i)}\right) \\
    & \text{with}&& \tau^{(i)},\eta^{(i)}\in \mathcal{D}(\mathcal{H}^{\otimes n}_A), \quad \Tr_{A_i}\tau^{(i)}=\Tr_{A_i}\eta^{(i)}&\Bigg\}.
\ee
\end{Def}

The quantum $W_1$ distance admits a dual characterization in terms of a quantum analogue of the Lipschitz constant for observables.

\begin{Def}[(Quantum Lipschitz constant~\cite{De_Palma_2021})]
Let $X$ be an observable on the $n$-partite system $A_1\ldots A_n$. For each $i\in[n]$, define the dependence of $X$ on subsystem $A_i$ as
\begin{equation}
    \partial_i X = 2\min\left\{\left\|X - Y_{A_i^c}\right\| : Y_{A_i^c}\textnormal{ does not act on }A_i\right\}\,.
\end{equation}
The quantum Lipschitz constant of $X$ is then
\begin{equation}
    \|X\|_L = \max_{i\in[n]}\partial_i X\,.
\end{equation}
\end{Def}

The $W_1$ distance between two states can therefore be expressed as the maximal difference in expectation values over observables with unit Lipschitz constant.

\begin{prop}[{\cite{De_Palma_2021}}]
Let $\rho$ and $\sigma$ be states of the $n$-partite system $A_1\ldots A_n$. Then
\begin{equation}
        \left\|\rho-\sigma\right\|_{W_1} = \max_{\|X\|_L=1}\mathrm{Tr}\left[\left(\rho-\sigma\right)X\right]\,.
\end{equation}
\end{prop}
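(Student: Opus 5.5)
This is LP duality applied to the norm defined as a gauge. I first rewrite the definition: let $\mathcal{N}_i \coloneqq \{\tau-\eta:\ \tau,\eta\in\D(\HH_A^{\otimes n}),\ \Tr_{A_i}\tau=\Tr_{A_i}\eta\}$ and $B\coloneqq \mathrm{conv}\bigcup_i \mathcal{N}_i$. Each $\mathcal{N}_i$ is convex, compact and symmetric under $\Delta\mapsto-\Delta$, and contains $0$. Hence $\|\Delta\|_{W_1}=\min\{t\geq 0:\ \Delta\in tB\}$ for traceless Hermitian $\Delta$. This is the Minkowski functional of $B$. It is a genuine norm on the space $V$ of traceless Hermitian operators, because $B$ spans $V$: differences of states agreeing outside one site generate all traceless operators through telescoping over sites.

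By finite-dimensional convex duality (the bipolar theorem), the dual norm of the gauge of $B$ is the support function $h_B(X)=\max_{\Delta\in B}\Tr[\Delta X]$. Moreover,
\[
\|\Delta\|_{W_1}=\max\{\Tr[\Delta X]:\ h_B(X)\leq 1\}.
\]
Here $X$ ranges over Hermitian operators modulo multiples of $\id$, which is harmless since $\Tr\Delta=0$. The support function of a convex hull of a union is the maximum of the support functions, so $h_B(X)=\max_i \max_{\Delta\in\mathcal{N}_i}\Tr[\Delta X]$.

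The key step is to show that
\[
\max\{\Tr[(\tau-\eta)X]:\ \Tr_{A_i}\tau=\Tr_{A_i}\eta\}=\partial_i X.
\]

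For the upper bound, take any $Y$ not acting on $A_i$. Then $\Tr[(\tau-\eta)Y]=0$, so $\Tr[(\tau-\eta)X]=\Tr[(\tau-\eta)(X-Y)]\le \|\tau-\eta\|_1\|X-Y\|\le 2\|X-Y\|$. Minimising over $Y$ gives the bound $\partial_iX$.

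For the reverse inequality, I use duality once more. The left side is a semidefinite program over pairs $(\tau,\eta)$ with a linear constraint on the complement $A_i^c$. Its Lagrangian dual introduces a multiplier $Y$ on $A_i^c$ and gives $\max_\tau\Tr[\tau(X-Y)]+\max_\eta\Tr[\eta(Y-X)]$, minimised over $Y$. This equals $\min_Y\big(\lambda_{\max}(X-Y)-\lambda_{\min}(X-Y)\big)$. Shifting $Y$ by multiples of $\id$ (which do not act on $A_i$) centres the spectrum, so this expression equals $\min_Y 2\|X-Y\|=\partial_iX$. Strong duality holds by Slater's condition, for instance $\tau=\eta$ maximally mixed.

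Combining these, $h_B(X)=\max_i\partial_iX=\|X\|_L$. The dual formula therefore gives $\|\rho-\sigma\|_{W_1}=\max\{\Tr[(\rho-\sigma)X]:\|X\|_L\le1\}$. Since the objective is linear and $\|\cdot\|_L$ is a seminorm vanishing only on multiples of $\id$, the maximum is attained on $\|X\|_L=1$ unless $\rho=\sigma$, where both sides are $0$. The main obstacle is this local duality step: checking the strong duality and the eigenvalue-centring argument that turns $\lambda_{\max}-\lambda_{\min}$ into twice an operator norm.
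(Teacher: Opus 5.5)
The paper gives no proof of this proposition; it is quoted from De Palma et al.\ \cite{De_Palma_2021}, so there is no in-paper argument to match against. Your argument is correct and complete. Its global skeleton is the standard one:
the $W_1$ unit ball is $B=\mathrm{conv}\bigcup_i\mathcal{N}_i$, the dual norm is $\max_i\max_{\Delta\in\mathcal{N}_i}\Tr[\Delta X]$, and finite-dimensional bipolarity on traceless operators (with observables taken modulo $\id$) recovers $\|\cdot\|_{W_1}$.

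The genuine difference is in how you obtain the local identity $\max_{\Delta\in\mathcal{N}_i}\Tr[\Delta X]=\partial_iX$.

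\emph{The more common route.} First show that $\mathcal{N}_i=\{\Delta=\Delta^\dagger:\ \Tr_{A_i}\Delta=0,\ \|\Delta\|_1\le 2\}$. This uses the Jordan decomposition of a traceless Hermitian $\Delta$, for which $\Tr\Delta_+=\Tr\Delta_-=\tfrac12\|\Delta\|_1$, together with convexity of $\mathcal{N}_i$ and $0\in\mathcal{N}_i$. Then apply subspace/quotient duality: the dual of the trace norm restricted to $\ker\Tr_{A_i}$ is the operator norm on the quotient by the annihilator of that subspace. That annihilator is exactly $\{\id_{A_i}\otimes Y'\}$, i.e.\ the operators not acting on $A_i$. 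This gives $2\min_Y\|X-Y\|$ in one line, with no constraint qualification to check.

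\emph{Your route.} You stay with pairs $(\tau,\eta)$ and dualise the constraint $\Tr_{A_i}\tau=\Tr_{A_i}\eta$ with a multiplier on $A_i^c$. You certify strong duality by Slater's condition at the maximally mixed pair. You then convert the spectral spread $\lambda_{\max}-\lambda_{\min}$ into $2\|X-Y\|$ by absorbing a multiple of $\id$ into $Y$.

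\emph{Comparison.} Your approach avoids having to characterise $\mathcal{N}_i$ as a trace-norm ball, and it exhibits the local quantity as an explicit SDP. The quotient-norm approach is shorter and purely norm-theoretic. Both feed into the same bipolar step.

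Your closing remark correctly handles the form in which the statement is written. It passes from $\|X\|_L\le 1$ to $\|X\|_L=1$ using that $\|\cdot\|_L$ vanishes only on multiples of $\id$, and treats $\rho=\sigma$ separately.
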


The relevance of the quantum $W_1$ distance stems from its ability to detect when two states differ only on a small fraction of subsystems. Indeed, while a perturbation on a single subsystem may already render two states orthogonal -- and therefore maximally distinguishable according to any unitarily invariant distance such as trace distance, fidelity, or relative entropy -- the quantum $W_1$ distance remains small whenever the discrepancy is localised. Owing to this feature, the quantum $W_1$ distance has found applications across several areas of quantum information theory, including quantum machine learning through quantum generative adversarial networks~\cite{Kiani_2022}, limitations of variational quantum algorithms~\cite{De_Palma_2023}, quantum differential privacy~\cite{hirche2023quantumdifferentialprivacyinformation}, equivalence of ensembles in quantum statistical mechanics~\cite{De_Palma_2022,De_Palma_2025}, learning of many-body quantum states~\cite{De_Palma_2024,Rouz__2024,Rouz__2024b}, and rapid thermalization for geometrically local Hamiltonians~\cite{Bardet_2024,Kochanowski_2025,bakshi2025dobrushinconditionquantummarkov}.

The quantum $W_1$ distance always lies between the trace distance and $n$ times the trace distance~\cite[Proposition~2]{De_Palma_2021}:
\bb\label{eq:trW}
    \frac 12 \|\rho-\sigma\|_1\leq \|\rho-\sigma\|_{W_1}\leq \frac{n}{2}\|\rho-\sigma\|_1\,,
\ee
\errata{and cannot increase under the action of a quantum channel acting on a single subsystem~\cite[Proposition 3]{De_Palma_2021}.}
One of the most relevant properties of the quantum $W_1$ distance is that the von Neumann entropy per subsystem is uniformly continuous with respect to the normalised $W_1$ distance.

\begin{thm}[{(Continuity of the von Neumann entropy~\cite[Theorem 9.1]{De_Palma_2023b})}]\label{thm:SW}
For any $n\in\mathbb{N}$ and any $\rho,\,\sigma\in\mathcal{D}\left(\mathcal{H}^{\otimes n}\right)$ we have
\begin{equation}
    \frac{1}{n}\left|S(\rho) - S(\sigma)\right| \le h_2\left(w\right) + w\ln\left(\left(\dim\mathcal{H}\right)^2-1\right)\,\qquad \text{where}\qquad w\coloneqq \frac1n \left\|\rho-\sigma\right\|_{W_1}
\end{equation}
and, {\errata{for any $x \in [0,1]$, }} $h_2(x) := -x\ln x - \left(1-x\right)\ln\left(1-x\right)$ is the binary entropy function.
\end{thm}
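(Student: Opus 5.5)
The plan is to decompose $\rho-\sigma$ into a chain of local moves, charge the entropy change of each move to a single subsystem, and then bound the total by a Fano-type concavity argument. The natural tool is the primal characterisation of $\|\rho-\sigma\|_{W_1}$ as a minimal weighted sum of differences of neighbouring states. Since $S$ is not linear, the decomposition $\rho-\sigma=\sum_i c_i(\tau^{(i)}-\eta^{(i)})$ cannot be used directly. Instead I would build a coupling-type interpolation. The first step is a structural lemma in the spirit of the Marton/Ornstein $\bar d$ coupling: for any $\rho,\sigma$ with $\|\rho-\sigma\|_{W_1}=nw$, there exist a classical random variable $J$ and a joint construction in which the two states agree on all subsystems except a random subset whose expected size is $nw$.

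Concretely, I would write $\rho=\sum_k p_k\rho_k$ and $\sigma=\sum_k p_k\sigma_k$ from the optimal decomposition, rescaled so that $\rho_k$ and $\sigma_k$ differ only by a channel acting on a single site $A_{i_k}$. The quantum analogue of "changing one letter" is a replacement channel $\Phi_i$ acting on $A_i$: if $\Tr_{A_i}\tau=\Tr_{A_i}\eta$, then $\eta$ is obtained from a common extension of the reduced state by acting on $A_i$ alone. The expected number of sites acted upon is $nw$.

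The second step is the entropy estimate. Using an auxiliary classical register, I would express $S(\rho)-S(\sigma)$ as at most the conditional entropy of the "which sites were modified and how" information. At each site, the modification is a flag (binary, with probability $w_i$ averaged to $w$) together with, when flagged, a change that costs at most $\log((\dim\mathcal H)^2-1)$. The latter is because the change can be represented by a nontrivial element of an orthonormal operator basis of $\mathcal L(\mathcal H)$ with the identity excluded, for example a Weyl--Heisenberg basis. Subadditivity of entropy across sites, followed by concavity of $h_2$ and of the linear term, gives $\frac1n|S(\rho)-S(\sigma)|\le h_2(w)+w\ln(d^2-1)$, exactly the form of the continuity bound in Theorem~\ref{thm:SW}.

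The main obstacle is the first step: producing a genuinely quantum "coupling" in which local modifications can be tracked by a classical register without paying extra entropy. I would attempt this via a telescoping argument instead. Order the sites, define hybrid states $\omega_j$ that agree with $\sigma$ on the first $j$ sites and with $\rho$ on the rest, and bound each $|S(\omega_{j-1})-S(\omega_j)|$ by the conditional-entropy continuity of Alicki--Fannes--Winter type for a single site, in terms of $\frac12\|\cdot\|_1$ of a local difference. The delicate point is that those local trace distances must sum to at most $\|\rho-\sigma\|_{W_1}$. This is the place where the dual (Lipschitz) characterisation may be needed, and where the sharp constant $\ln(d^2-1)$ rather than a cruder $2\ln d$ must be recovered.
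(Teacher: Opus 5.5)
The paper does not prove this statement; it imports it by citation. So the question is whether your sketch closes on its own, and it does not. The gap is the structural coupling lemma in your first step. That lemma is false in the classical-label sense you describe, and no constant-factor version of it holds.

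Take $n=1$, $\rho=\ketbra{0}$ and $\sigma=\ketbra{\psi}$ with $\ket{\psi}$ close to $\ket{0}$. Any two one-site states are neighbouring, so $\|\rho-\sigma\|_{W_1}=\frac12\|\rho-\sigma\|_1$, which is as small as you like. But pure states are extreme points, so any representation $\rho=\sum_J P(J)\rho_J$, $\sigma=\sum_J P(J)\sigma_J$ has $\rho_J=\rho$ and $\sigma_J=\sigma$. The ``modified set'' must therefore be the whole system with probability one. Tensorising gives $n$-site examples with $\|\rho^{\otimes n}-\sigma^{\otimes n}\|_{W_1}\le\frac n2\|\rho-\sigma\|_1$ but an expected modified set of size $n$. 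Hence no bound $\mathbb{E}|J|\le C\,\|\rho-\sigma\|_{W_1}$ is possible.

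The reason is that $\rho-\sigma=\sum_i c_i(\tau^{(i)}-\eta^{(i)})$ decomposes only the difference. The would-be common part $\rho-\sum_i c_i\tau^{(i)}$ need not be positive, and $\sum_i c_i=nw$ is not a probability weight. Two further steps also fail:
\begin{itemize}
\item A one-site change is in general not a mixture of Weyl conjugations; no such mixture maps $\ketbra{0}$ to $\ketbra{+}$. So the per-site cost $\ln(d^2-1)$ has no primal meaning.
\item With a genuinely quantum coupling, the classical Fano step fails because $S(A'^n|A^n)$ can be negative.
\end{itemize}
If you restore positivity in the standard way, via $\rho+\sum_i c_i\eta^{(i)}=\sigma+\sum_i c_i\tau^{(i)}$ together with concavity and the mixing bound, you only get $|S(\rho)-S(\sigma)|\le g(nw)+nw\ln(d^2 n)$, where $g(t)=(t+1)\ln(t+1)-t\ln t$. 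The which-site label costs $\ln n$, and uniformity per site is lost.

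Your telescoping fallback does not rescue this. In the chain-rule form $S=\sum_j S(A_j|A_{<j})$, each Alicki--Fannes--Winter step is controlled by $\frac12\|\rho_{A_{\le j}}-\sigma_{A_{\le j}}\|_1$, and these do not sum to $W_1$. For $\rho=\ketbra{0}^{\otimes n}$ and $\sigma=\ketbra{1}\otimes\ketbra{0}^{\otimes(n-1)}$ one has $\|\rho-\sigma\|_{W_1}=1$, while every prefix distance equals $1$. Product hybrids $\sigma_{A_{\le j}}\otimes\rho_{A_{>j}}$ also fail, because a single step changes the correlations across the cut.

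The missing idea is to avoid couplings altogether and argue in the dual. For any full-rank state $\omega$, the inequality $S(\rho)\le-\Tr[\rho\ln\omega]$ gives
\[
S(\rho)-S(\sigma)\le D(\sigma\|\omega)+\Tr\big[(\rho-\sigma)(-\ln\omega)\big]\le D(\sigma\|\omega)+\|\rho-\sigma\|_{W_1}\,\|\ln\omega\|_L .
\]
Take $\omega=\Delta_q^{\otimes n}(\sigma)$ with $\Delta_q(X)=(1-q)X+\frac{q}{d^2-1}\sum_{k\neq 0}V_kXV_k^\dagger$. Here $V_0=\id,V_1,\dots,V_{d^2-1}$ are the Weyl operators and $d=\dim\mathcal{H}$.
\begin{itemize}
\item Since $\omega\ge(1-q)^n\sigma$, you get $D(\sigma\|\omega)\le n\ln\frac{1}{1-q}$.
\item The twirl identity on site $i$ is $\sum_k V_kXV_k^\dagger=d\,\id_{A_i}\otimes\Tr_{A_i}X$. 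For $q\le 1-1/d^2$ it gives $\frac{qd}{d^2-1}\,\id_{A_i}\otimes\xi\le\omega\le(1-q)d\,\id_{A_i}\otimes\xi$, with $\xi$ not acting on $A_i$.
\item Operator monotonicity of $\ln$ then yields $\partial_i\ln\omega\le\ln\frac{(1-q)(d^2-1)}{q}$.
\end{itemize}
Choosing $q=w$ makes the two terms add up to exactly $n\,h_2(w)+nw\ln(d^2-1)$. For $w>1-1/d^2$, the right-hand side is at least $\ln(d^2-1)\ge\ln d$ (recall $w\le 1$), so the claim is trivial there. Exchanging $\rho$ and $\sigma$ gives the absolute value.

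This is where your two intuitions actually live. The binary flag of probability $w$ becomes the depolarising parameter $q=w$. The $d^2-1$ non-identity Weyl operators enter through the smoothing channel, not through a primal coupling.
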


Furthermore, for states that are diagonal in the canonical basis, the quantum $W_1$ distance recovers the classical $W_1$ distance.

\begin{prop}[{\cite[Proposition 6]{De_Palma_2021}}]\label{prop:cl}
Let $p$ and $q$ be probability distributions on $[d]^n$, and let
\begin{equation}\label{eq:defpq}
\rho = \sum_{x\in[d]^n}p(x)\,|x\rangle\langle x|\,,\qquad \sigma = \sum_{y\in[d]^n}q(y)\,|y\rangle\langle y|\,.
\end{equation}
Then,
\begin{equation}\label{eq:CQ}
\left\|\rho-\sigma\right\|_{W_1} = W_1(p,q)\,.
\end{equation}
\end{prop}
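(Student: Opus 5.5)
To prove that $\|\rho-\sigma\|_{W_1}=W_1(p,q)$ for diagonal $\rho,\sigma$, we establish the two inequalities separately. The upper bound $\|\rho-\sigma\|_{W_1}\le W_1(p,q)$ comes from a primal construction: we turn an optimal classical coupling into an admissible decomposition in the definition of the quantum $W_1$ norm. The lower bound $\|\rho-\sigma\|_{W_1}\ge W_1(p,q)$ comes from the dual characterisation of the preceding proposition: we turn an optimal classical $1$-Lipschitz function into a diagonal observable with quantum Lipschitz constant at most $1$.

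\emph{Upper bound.} Let $\pi$ be an optimal coupling of $p$ and $q$ for the Hamming cost, so that $W_1(p,q)=\sum_{x,y}\pi(x,y)\,d_H(x,y)$.

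For each pair $(x,y)$, fix a path $x=z^{(0)},z^{(1)},\dots,z^{(k)}=y$ with $k=d_H(x,y)$, where consecutive strings differ in exactly one coordinate $i_j$. Then
\[
|x\rangle\langle x|-|y\rangle\langle y|=\sum_{j=1}^{k}\Big(|z^{(j-1)}\rangle\langle z^{(j-1)}|-|z^{(j)}\rangle\langle z^{(j)}|\Big),
\]
and each summand is a difference of two neighbouring states, since the two strings agree outside site $i_j$ and tracing out $A_{i_j}$ yields the same state.

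Multiplying by $\pi(x,y)$ and summing gives $\rho-\sigma=\sum_{x,y}\pi(x,y)\big(|x\rangle\langle x|-|y\rangle\langle y|\big)$. Grouping the terms by site $i$, the contribution at site $i$ is $c_i(\tau^{(i)}-\eta^{(i)})$ with $c_i$ equal to the total weight of steps taken at site $i$, and $\tau^{(i)},\eta^{(i)}$ the normalised mixtures of the corresponding states. Mixing preserves the relation $\Tr_{A_i}\tau^{(i)}=\Tr_{A_i}\eta^{(i)}$ because partial trace is linear. Hence the decomposition is admissible, and $\sum_i c_i=\sum_{x,y}\pi(x,y)\,d_H(x,y)=W_1(p,q)$.

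\emph{Lower bound.} By Kantorovich--Rubinstein duality, there is $f:[d]^n\to\mathbb R$ that is $1$-Lipschitz for the Hamming distance and satisfies $\sum_x (p(x)-q(x))f(x)=W_1(p,q)$. Set $X=\sum_x f(x)|x\rangle\langle x|$, so that $\Tr[(\rho-\sigma)X]=W_1(p,q)$. It remains to show $\partial_i X\le 1$ for every $i$.

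For fixed $i$ and fixed values $x_{i^c}$ of the other coordinates, the numbers $f(x_i,x_{i^c})$ over $x_i\in[d]$ lie in an interval of length at most $1$, by the Lipschitz property. Let $m(x_{i^c})$ be the midpoint of this interval, and define $Y=\sum_{x_{i^c}} m(x_{i^c})\,|x_{i^c}\rangle\langle x_{i^c}|\otimes \id_{A_i}$, which does not act on $A_i$. The operator $X-Y$ is diagonal with entries bounded by $1/2$ in absolute value, so $\|X-Y\|\le 1/2$ and $\partial_i X\le 2\cdot\tfrac12=1$.

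If $\|X\|_L<1$ we rescale $X$ to have unit Lipschitz constant, which can only increase the objective when $W_1(p,q)\ge 0$; the degenerate case $X$ constant gives $W_1(p,q)=0$ trivially. The dual formula then yields $\|\rho-\sigma\|_{W_1}\ge W_1(p,q)$.

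\emph{Main obstacle.} The delicate step is the Lipschitz estimate. One must check that the factor $2$ in the definition of $\partial_i$ is exactly compensated by the midpoint choice of $Y$: the diagonal $Y$ reduces $\|X-Y\|$ to half the oscillation of $f$ along each coordinate fibre, which is at most $1$. The primal grouping step is routine.
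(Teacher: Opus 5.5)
Your proof is correct and complete. The paper gives no proof of its own for this proposition; it quotes it from De Palma et al., so there is no in-paper argument to compare against.

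Your route is the natural self-contained one, and it uses only the two tools the paper recalls just before the statement: the primal definition of $\|\cdot\|_{W_1}$ and its Lipschitz dual.
\begin{itemize}
\item For $\|\rho-\sigma\|_{W_1}\le W_1(p,q)$, you unfold an optimal Hamming coupling along single-site paths and regroup the steps site by site.
\item For the reverse inequality, you lift a Kantorovich--Rubinstein potential $f$ to a diagonal observable. The fibrewise-midpoint choice of $Y$ then gives $\partial_i X\le 1$.
\end{itemize}

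Two trivial details are worth writing out. If some $c_i=0$, take $\tau^{(i)}=\eta^{(i)}$ to be any state. The rescaling step uses $\partial_i(\lambda X)=\lambda\,\partial_i X$ for $\lambda>0$.
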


\section{Almost i.i.d.\ states and channels}\label{sec:almost}

In this section we are going to review three notions of almost i.i.d.-ness, and we will introduce the family of almost i.i.d.\ channels (according to the quantum Wasserstein distance of order 1), which are metrised by a novel channel distance.

\subsection{Three families of almost i.i.d.\ states}\label{sec:intro_states}
\begin{figure}[t]
  \centering
  \def\svgwidth{0.8\linewidth}
  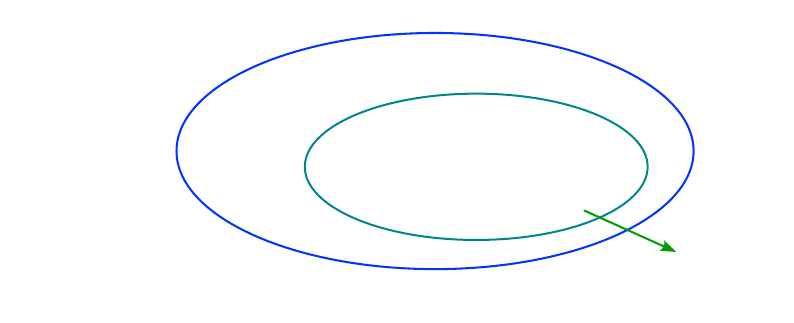
  \mycaption{Three different notions of almost i.i.d.\ processes.}{A pictorial representation of the hierarchical relation between the different notions of almost i.i.d.-ness. As discussed in~\cite{almost_iid}, the Mazzola--Sutter--Renner notion is the strictest, the one derived from Wasserstein distance is the intermediate one, and, 
  finally weak almost i.i.d.-ness is the broadest. The classical probability source $(\tilde p_n)_n$ and the pure-state source $\Psi_n$ are presented in~\cite[Section~3.3.1]{almost_iid}, while the source $(\xi_n)_n$ is studied in~\cite[Section~3.3.2]{almost_iid}. In this paper, we provide a complementary discussion on the notion of trace distance almost i.i.d.-ness, arguing that it is physically less interesting than the other three, and we discuss three new examples of sources, indicated in what follows by $(\tilde\sigma_n)_n$, $(\eta_n)_n$ and $(\gamma_n)_n$.} 
  \label{fig:almost}
\end{figure}

What does it mean, operationally, to claim that a quantum source is i.i.d.? The standard approach to answer this questions would be to assume that the $n$-copy state $\rho_n$ it produces is equal to a tensor product $\rho^{\otimes n}$. However, it is arguably impossible to guarantee that this is the case for arbitrarily large $n$. First of all, because we assume that $n$ is the total sample size, and we do not have access to \emph{multiple copies} of $\rho_n$, as one would need to carry out tomography on it. Secondly, we do not want to measure the state of the entire source, otherwise we would destroy its quantum properties. What we can guarantee, at best, is that for an arbitrarily large $n$ the relevant information about $\rho$ is preserved and can be efficiently tested or employed for compression or communication. There is no standard figure of merit to quantify how much information of the ideally i.i.d.\ source is present in the actual state once we take into account the action of the environmental noise. The naive guess that a source can robustly be treated as i.i.d.\ when the trace distance with respect to the ideal case asymptotically vanishes, i.e.\ $\|\rho_n-\rho^{\otimes n}\|_1\to 0$ as $n\to \infty$, not only, in general, is a too stringent and non realistic requirement (see Section~\ref{sec:intro_W}), but it also turns out that in some tasks trace distance does not even capture the properties of the source that are needed to ensure robustness (see Remark~\ref{rem:trace}). 
For this reason, in order to precisely identify the conditions for robustness in the protocols introduced in Section~\ref{sec:intro_protocol}, we are going to review different notions of almost i.i.d.-ness recently introduced in~\cite{Mazzola_2026, almost_iid}.

The weakest notion of almost i.i.d.\ source $(\rho_n)_n$ has the only guarantee that, for an arbitrarily large but fixed $k\in \N^+$, in the limit where $n\to\infty$ the $k$-body marginals of $\rho_n$, i.e.\ the reduced states of $\rho_n$ on the sub-systems of size $k$, will be close to $\rho^{\otimes k}$, at least \emph{on average}.
\begin{Def}[{(Weakly almost-i.i.d.\ source~\cite{almost_iid})}]
    Let $\rho\in\mathcal{D}(\mathcal{H})$ be a quantum state on a Hilbert space $\mathcal{H}$. A sequence $(\rho_n)_n$ of states $\rho_n\in\mathcal{D}(\mathcal{H}^{\otimes n})$ is said to be a \emph{weakly almost-i.i.d.\ source along $\rho$} if  
    \bb
        \lim_{n\to \infty} \EE{\substack{I\subseteq [n],\\ |I|=k}} \big\|(\rho_n)_I - \rho^{\otimes I}\big\|_1 = 0\qquad \forall k\geq 1
    \ee
    where the random variable $I$ has uniform distribution over the subsets of $[n]$ of size $k$. If this is the case, then we write
    \bb
        \rho_n\xrightharpoonup{w}\rho,
    \ee
    and we denote by $\pazocal{A}_\rho^w$ the set of all weakly almost i.i.d.\ sources along $\rho$:
    \bb
        \pazocal{A}_\rho^w &\coloneqq\{(\rho_n)_n: (\rho_n)_n \xrightharpoonup{w} \rho\}.
    \ee
\end{Def}

Under this notion, not even the normalised von Neumann entropy of an almost i.i.d.~source $\rho_n\xrightharpoonup{w}\rho$ is guaranteed to converge to the entropy of $\rho$~\cite[Section 3.3.1]{almost_iid}. However, as we already discuss in Section~\ref{sec:hp}, such a weak notion is sufficient to ensure some relevant robustness properties of quantum hypothesis testing.\bigskip

A stronger definition -- which, incidentally, ensures the asymptotic continuity of the von Neumann entropy -- naturally stems from and it is motivated by the discussion of Section~\ref{sec:intro_W}.

\begin{Def}[{(Wasserstein almost-i.i.d.\ source~\cite{almost_iid})}]
    Let $\rho\in\mathcal{D}(\mathcal{H})$ be a quantum state on a Hilbert space $\mathcal{H}$. A sequence $(\rho_n)_n$ of states $\rho_n\in\mathcal{D}(\mathcal{H}^{\otimes n})$ is called a \emph{Wassertein almost-i.i.d.\ source along $\rho$} if  
    \bb
        \lim_{n\to\infty}\frac 1 n \|\rho_n-\rho^{\otimes n}\|_{W_1}=0.
    \ee
    If this is the case, then we write
    \bb
        \rho_n\xrightharpoonup{W_1}\rho,
    \ee
    and we denote by $\pazocal{A}_\rho^{W_1}$ the set of all Wasserstein almost i.i.d.\ sources along $\rho$:
    \bb
        \pazocal{A}_\rho^{W_1} &\coloneqq\{(\rho_n)_n: (\rho_n)_n \xrightharpoonup{W_1} \rho\}.
    \ee
\end{Def}

Eventually, the strictest definition of almost i.i.d.-ness is the one inspired by the celebrated exponential quantum de Finetti theorem~\cite{RennerPhD, Renner2007} and formally introduced for the first time in~\cite{Mazzola_2026}. Here, we present the formulation in which the permutational invariance of the systems is relaxed.

\begin{Def}[{(Mazzola--Sutter--Renner (MSR) almost i.i.d.\ states~\cite{Mazzola_2026})}]\label{def:DeFinetti_almost_iid} 
Let $\rho \in \mathcal{D}(\mathcal{H}_A)$, and $n,r \in \mathbb{N}$ such that $r \leq n$. Let 
\bb
\pazocal{V}^n_r(\mathcal{H}_{AE}, \ket{\psi}):=\{U_\pi(\ket{\psi}^{\otimes n-r} \otimes \ket{\omega^{(r)}}): \pi \in S_n, \ket{\omega^{(r)}} \in \mathcal{H}_{AE}^{\otimes r} \}.
\ee
Then, $\rho_n \in \mathcal{D}(\mathcal{H}_A^{\otimes n})$ is called a $\binom{n}{r}$-\emph{almost i.i.d.\ state in} $\rho $ if there exists a purification $\ket{\psi_\rho}_{AE}$ of $\rho$ and an extension $\rho_n^{A^n E^n}$ of $\rho_n^{A^n}$ such that $\supp(\rho_n^{A^n E^n}) \subseteq  \mathrm{span}\,\pazocal{V}^n_r(\mathcal{H}_{AE},\ket{\psi_\rho}_{AE})$.
 Furthermore, we say that a sequence $(\rho_n)_n$ of states $\rho_n\in\mathcal{D}(\mathcal{H}^{\otimes n})$ is a \emph{MSR almost i.i.d.\ source along $\rho$} if, for $r_n=o(n)$, the state $\rho_n$ is $\binom{n}{r_n}$-almost i.i.d., and we write
 \bb
        \rho_n\xrightharpoonup{\rm MRS}\rho,
    \ee
    We denote by $\pazocal{A}_\rho^{W_1}$ the set of all MRS almost i.i.d.\ sources along $\rho$:
    \bb
        \pazocal{A}_\rho^{\rm MSR} &\coloneqq\{(\rho_n)_n: (\rho_n)_n \xrightharpoonup{\rm MSR} \rho\}.
    \ee
\end{Def}
The inclusion relation among these classes is represented in Figure~\ref{fig:almost}. We have also represented the family of \emph{trace distance almost i.i.d.\ sources}, which can naturally be defined as follows. 

\begin{Def}[(Trace distance almost-i.i.d.\ source)]
    Let $\rho\in\mathcal{D}(\mathcal{H})$ be a quantum state on a Hilbert space $\mathcal{H}$. A sequence $(\rho_n)_n$ of states $\rho_n\in\mathcal{D}(\mathcal{H}^{\otimes n})$ is called a \emph{trace distance almost-i.i.d.\ source along $\rho$} if  
    \bb
        \lim_{n\to\infty} \|\rho_n-\rho^{\otimes n}\|_1=0.
    \ee
\end{Def}

From~\eqref{eq:trW}, we immediately see that trace distance almost i.i.d.\ sources are included in the family of Wasserstein almost i.i.d.\ sources, but the inclusion is strict. We conclude this section by considering the following sequences of states, which are also represented in Figure~\ref{fig:almost}. 
\begin{itemize}
    \item Let $\gamma\in(\mathbb{C})^{\otimes 2}$ be the state $\gamma=\ketbra 0$. Consider the sequence $\gamma_n=\ketbra 1\otimes\ketbra 0 ^{\otimes n-1}$. Being $\rho_n$ orthogonal to $\gamma^{\otimes n}$, it cannot be a trace distance almost i.i.d.\ source along $\gamma$. However, it is not difficult to see that $\gamma_n\xrightharpoonup{W_1}\gamma$, and, more in particular, $\gamma_n\xrightharpoonup{\rm MSR}\gamma$. This source exemplifies the fragility of the trace distance request: even a local defect may hinder the convergence.
    \item A simple change in the previous example yields a source being both MSR almost i.i.d.\ and trace distance almost i.i.d.: let
\bb
    \eta_n=\left(\frac 1n \ketbra 1 + \frac {n-1}n\ketbra 0\right)\otimes\ketbra 0 ^{\otimes n-1}\qquad \eta = \ketbra 0.
\ee
For this source, it is easy to prove that $\|\eta_n-\eta^{\otimes n}\|_1\to 0$ as $n\to\infty$, and $\eta_n\xrightharpoonup{\rm MSR}\eta$.
\item Finally, the source
\bb
    \tilde \sigma_n = \left(1-\frac 1n\right)\sigma^{\otimes n}+\frac{1}{n}\rho^{\otimes n}
\ee
is trace distance almost i.i.d.\ along $\sigma$, but it is not MSR almost i.i.d.\ along $\sigma$ whenever $\rho\neq \sigma$. This second claim will be proved in Remark~\ref{rem:trace}.
\end{itemize}

\subsection{Almost i.i.d.\ processes}

The most frequently used distance measure between quantum channels is the diamond distance, defined in~\eqref{eq:diamond}.
Similarly to quantum states, when two quantum channels act on several subsystems, it may be natural to consider them as close if they differ only on a small fraction of the subsystems.
Since it is based on the trace distance, the diamond distance does not capture this property.
It seems then natural to define a notion of distance between quantum channels that is based on the quantum $W_1$ distance between their outputs.

\begin{Def}\label{def:club_norm} Let $\mathcal{H}_A$ be an arbitrary Hilbert space and, for $n\geq 1$, let $\mathcal{H}_{B^n}$ an $n$-partite Hilbert space. Given a linear map $\Delta\Phi_{A\to B^n}:\mathcal{L}(\mathcal{H}_A)\to\mathcal{L}(\mathcal{H}_{B^n})$ that is trace annihilating (i.e. $\mathrm{Tr}_{B^n}\circ\Delta\Phi=0$), we define its \emph{club norm} as
\bb
    \big\|\Delta\Phi\big\|_{\,\clubsuit}\coloneqq \sup_{\rho_{A}}\big\|\Delta\Phi(\rho_{A})\big\|_{W_1}.
\ee
\end{Def}

We mention that a different definition, 
involving the $W_1$ distance between the Choi states of the channels, was proposed in~\cite{Duvenhage_2023}. 

\errata{In this work we do not consider the \emph{completely bounded} extension of the club norm, i.e.\ its stabilised version, as we are interested in studying classical communication without entanglement assistance. For this purpose, Definition \ref{def:club_norm} is sufficient.}

\begin{prop}
    The club norm is a norm, and for any trace-annihilating map $\Delta\Phi:A\to B^n$ we have
    \begin{equation}
        \left\|\Delta\Phi\right\|_\clubsuit \le n\left\|\Delta\Phi\right\|_\diamond\,.
    \end{equation}
\end{prop}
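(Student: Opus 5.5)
The proof splits into two parts: verifying the norm axioms, and then proving the comparison with the diamond norm. Throughout, the supremum in Definition~\ref{def:club_norm} runs over density operators $\rho_A$, and the set of these is compact. The map $\rho_A\mapsto\|\Delta\Phi(\rho_A)\|_{W_1}$ is continuous, since $\Delta\Phi$ is linear on a finite-dimensional space and $\|\cdot\|_{W_1}$ is a norm on trace-zero operators. Hence $\|\Delta\Phi\|_\clubsuit$ is finite, and it makes sense on the vector space of trace-annihilating maps. Note that $\Delta\Phi(\rho_A)$ is traceless, so its $W_1$ norm is defined. For a traceless, not necessarily Hermitian operator I would use the extension of the $W_1$ norm from~\cite{De_Palma_2021}, or restrict attention to Hermiticity-preserving maps.

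\textbf{Norm axioms.} Absolute homogeneity is immediate: $\Delta\Phi(\rho_A)$ scales linearly in $\Delta\Phi$, and $\|\cdot\|_{W_1}$ is absolutely homogeneous. For the triangle inequality, fix any $\rho_A$. Then
\[
\|(\Delta\Phi_1+\Delta\Phi_2)(\rho_A)\|_{W_1}\le \|\Delta\Phi_1(\rho_A)\|_{W_1}+\|\Delta\Phi_2(\rho_A)\|_{W_1}\le \|\Delta\Phi_1\|_\clubsuit+\|\Delta\Phi_2\|_\clubsuit,
\]
and taking the supremum over $\rho_A$ gives the claim. For definiteness, suppose $\|\Delta\Phi\|_\clubsuit=0$. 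Then $\Delta\Phi(\rho_A)=0$ for every density operator, because $\|\cdot\|_{W_1}$ is a norm. Density operators span $\mathcal{L}(\mathcal{H}_A)$: every operator is a combination of Hermitian ones, and every Hermitian operator is a difference of positive multiples of states. By linearity, $\Delta\Phi=0$.

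\textbf{Comparison with the diamond norm.} For each $\rho_A$, the upper bound in~\eqref{eq:trW} extends from differences of states to any traceless Hermitian $X$. I would prove this as follows. Write $X=X_+-X_-$ with $\Tr X_+=\Tr X_-=\tfrac12\|X\|_1$. Then
\[
X=\tfrac12\|X\|_1\bigl(\tau-\eta\bigr)
\]
for the states $\tau=X_+/\Tr X_+$ and $\eta=X_-/\Tr X_-$. Applying homogeneity together with~\eqref{eq:trW} to $\tau-\eta$ gives
\[
\|X\|_{W_1}\le \tfrac n2\|X\|_1.
\]
Setting $X=\Delta\Phi(\rho_A)$, it remains to show $\|\Delta\Phi(\rho_A)\|_1\le\|\Delta\Phi\|_\diamond$. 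This holds because the diamond norm in~\eqref{eq:diamond} is a supremum over inputs on $AR$. Choosing the product input $\nu=\rho_A\otimes\ketbra{0}_R$ gives
\[
\|({\rm Id}_R\otimes\Delta\Phi)(\nu)\|_1=\|\Delta\Phi(\rho_A)\|_1\le\|\Delta\Phi\|_\diamond.
\]
Combining the two bounds yields
\[
\|\Delta\Phi\|_\clubsuit\le \tfrac n2\|\Delta\Phi\|_\diamond\le n\|\Delta\Phi\|_\diamond,
\]
which is the statement. If one instead uses the convention that the diamond norm carries a factor $\tfrac12$, so that it equals the trace distance of the outputs, the bound comes out exactly as $n\|\Delta\Phi\|_\diamond$.

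\textbf{Main obstacle.} The only delicate point is the non-Hermitian case. If $\Delta\Phi$ is not Hermiticity-preserving, $\Delta\Phi(\rho_A)$ may not be Hermitian. I would handle this by using the extension of the $W_1$ norm to all traceless operators in~\cite{De_Palma_2021}. That extension is built from the same neighbouring-difference unit ball, with complex coefficients allowed, and it still satisfies $\|X\|_{W_1}\le n\|X\|_1$. I would decompose $X$ into its Hermitian and anti-Hermitian parts, both traceless, apply the Hermitian bound to each part, and use that each part has trace norm at most $\|X\|_1$. Each part contributes at most $\tfrac n2\|X\|_1$, so the total is at most $n\|X\|_1$, and the slack in the constant $n$ absorbs the factor of $2$.
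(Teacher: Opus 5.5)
Your proof is correct and follows essentially the same route as the paper. The norm axioms come from bounding the supremum of a sum by the sum of suprema and from linearity, and the comparison combines $\|\Delta\Phi(\rho_A)\|_{W_1}\le\frac n2\|\Delta\Phi(\rho_A)\|_1$ (the bound~\eqref{eq:trW}, i.e.\ Proposition~2 of De Palma et al.) with $\|\Delta\Phi(\rho_A)\|_1\le\|\Delta\Phi\|_\diamond$; your extra points (finiteness of the supremum, the explicit product input, the observation that the unnormalised diamond norm actually gives the constant $\frac n2$, and the Hermitian/anti-Hermitian splitting for maps that do not preserve Hermiticity, which the paper never needs because it only applies the club norm to differences of channels) are sound refinements of details the paper leaves implicit.
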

\begin{proof}
    Nonnegativity and homogeneity with respect to scaling are obvious from the definition.
    The triangle inequality holds since
    \begin{align}
        \left\|\Delta\Phi_1 + \Delta\Phi_2\right\|_\clubsuit &= \sup_\rho\left\|\Delta\Phi_1(\rho) + \Delta\Phi_2(\rho)\right\|_{W_1} \le \sup_\rho\left\|\Delta\Phi_1(\rho)\right\|_{W_1} + \sup_\rho\left\|\Delta\Phi_2(\rho)\right\|_{W_1}\nonumber\\
        &= \left\|\Delta\Phi_1\right\|_\clubsuit + \left\|\Delta\Phi_2\right\|_\clubsuit\,.
    \end{align}
    Finally, $\big\|\Delta\Phi\big\|_{\,\clubsuit} = 0$ if and only if $\Delta\Phi(\rho_{A})=0$ for any input state $\rho_A$, which happens if and only if $\Delta\Phi=0$.
    For the second part, we have
    \begin{equation}
        \left\|\Delta\Phi\right\|_\clubsuit = \sup_\rho\left\|\Delta\Phi(\rho)\right\|_{W_1} \le \frac{n}{2}\sup_\rho\left\|\Delta\Phi(\rho)\right\|_1 \le n\left\|\Delta\Phi\right\|_\diamond\,,
    \end{equation}
    where we have used~\cite[Proposition 2]{De_Palma_2021}.
\end{proof}

\begin{figure}[t]
  \centering
  \def\svgwidth{\linewidth}
  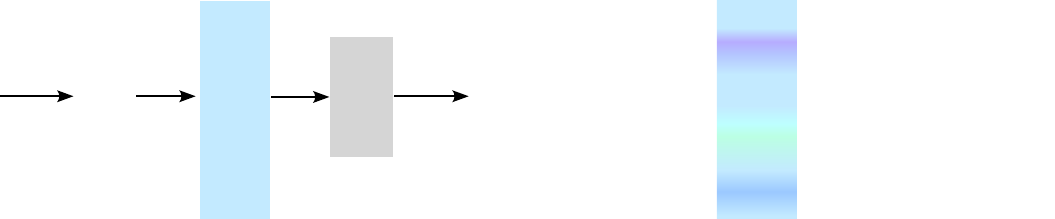
  \mycaption{From $n$ i.i.d.\ uses of a channel to an almost i.i.d.\ process.}{A common assumption when evaluating the transmission capabilities -- namely, error probability and communication rate -- of multiple uses of a given channel $\pazocal{N}$ is the possibility to access exact i.i.d.\ instances of the channel. As a consequence, coding theorems and strategies are intended for parallel uses of channels satisfying such idealised behaviour. However, in a realistic scenario, the action of the channel might fluctuate over time, and memory effects might induce correlations. For this reason, we introduce the notion of almost i.i.d.\ process, replacing the tensor product structure $\pazocal{N}^{\otimes n}$ with a more general channel $\tilde{\pazocal{N}}^{(n)}$ asymptotically satisfying the condition stated in Definition~\ref{def:channel}.} 
  \label{fig:channel}
\end{figure}

In Section~\ref{sec:intro_states} we have argued that it is not realistic to {\errata{consider}}
\errata{quantum sources} to be exactly i.i.d.
\errata{Similar considerations apply to quantum channels: the transformations they describe may be affected by small temporal fluctuations, and therefore need not be exactly memoryless.} This situation is schematically depicted in Figure~\ref{fig:channel} in the particular case of channel coding. This motivates the following definition.

\begin{Def}\label{def:channel} Let $\Lambda:\mathcal{D}(\mathcal{H}_A)\to \mathcal{D}(\mathcal{H}_B)$ be a quantum channel and, for every $n\geq 1$, let $\tilde \Lambda^{(n)}$ be a channel from $A^n$ to $B^n$.  The sequence of channels $\tilde \Lambda=\big(\tilde \Lambda^{(n)}\big)_{n\geq 1}$ is an \emph{almost i.i.d.\ process} along $\Lambda$ if
\bb
    \lim_{n\to\infty}\frac{1}{n}\big\|\tilde\Lambda^{(n)}-\Lambda^{\otimes n}\big\|_{\,\clubsuit}=0
\ee
Furthermore, if $(\Lambda_n)_n$ is an almost i.i.d.\ process along $\Lambda$, we write
\bb
    (\Lambda_n)_n \xrightharpoonup{\clubsuit} \Lambda.
\ee
\end{Def}

\subsection{Alternative quantum Wasserstein distances}
\errata{

Several inequivalent quantum analogues of classical Wasserstein distances have been proposed in the literature. In this work, we employ the quantum $W_1$ distance introduced in~\cite{De_Palma_2021}, which is particularly well suited to the locality structure relevant for almost i.i.d.\ systems. We briefly review some alternative approaches below.

One prominent line of work, developed by Carlen, Maas, Datta, and Rouz\'e~\cite{carlen2014analog,carlen2017gradient,carlen2020non,rouze2019concentration,datta2020relating,van2020geometrical}, defines a quantum Wasserstein distance of order~2 through a Riemannian metric on the space of quantum states, constructed from a quantum analogue of differential calculus. This framework establishes deep connections between Wasserstein geometry, entropy, and Fisher information~\cite{datta2020relating}, and has been used to characterise convergence rates for the quantum Ornstein--Uhlenbeck semigroup~\cite{carlen2017gradient,de2018conditional}. Building on the same quantum differential structure, Refs.~\cite{rouze2019concentration,carlen2020non,gao2020fisher} introduce quantum analogues of Lipschitz constants and of the corresponding \(W_1\)-type, or earth mover's, distance. Other approaches based on quantum differential structures appear in Refs.~\cite{chen2017matricial,ryu2018vector,chen2018matrix,chen2018wasserstein}, while Refs.~\cite{agredo2013wasserstein,agredo2016exponential,ikeda2020foundation} define quantum earth mover's distances using distances between vectors of the canonical basis.

A different line of work, initiated by Golse, Mouhot, Paul, and Caglioti~\cite{golse2016mean,caglioti2018towards,golse2018quantum,golse2017schrodinger,golse2018wave,caglioti2019quantum,Friedland_2022,bunth2026strongkantorovichdualityquantum}, emerged from the study of semiclassical limits in quantum mechanics and introduces quantum Wasserstein distances of order~2 based on quantum couplings. In this framework, a coupling between two quantum states \(\rho\) and \(\sigma\) on \(\mathcal H\) is a quantum state \(\Pi\) on \(\mathcal H^{\otimes 2}\) whose reduced states on the first and second subsystems are \(\rho\) and \(\sigma\), respectively. The transport cost associated with \(\Pi\) is defined as the expectation value, in the state \(\Pi\), of a positive semidefinite cost operator \(C\). Different choices of \(C\) lead to different distances, obtained by minimizing the transport cost over all admissible couplings.

The construction of Refs.~\cite{golse2016mean,caglioti2018towards,golse2018quantum,golse2017schrodinger,golse2018wave,caglioti2019quantum} has the feature that the distance between a quantum state and itself may be nonzero. Ref.~\cite{chakrabarti2019quantum} observed that this pathology disappears when the support of the cost operator lies entirely in the antisymmetric subspace of \(\mathcal H^{\otimes 2}\) under subsystem exchange. Motivated by this observation, Ref.~\cite{chakrabarti2019quantum} takes the orthogonal projector onto the antisymmetric subspace as the cost operator and uses the resulting distance as a loss function for quantum generative adversarial networks.

Ref.~\cite{de2019quantum} introduces another coupling-based quantum Wasserstein distance of order~2, in which each quantum coupling naturally determines a quantum channel. This proposal also has a nonzero self-distance. This issue is addressed in the modified construction of Ref.~\cite{De_Palma_2024b}, where the self-distance is subtracted. The modified distance has been proved to satisfy the triangle inequality~\cite{bunth2026wassersteindistancesdivergencesorder,Bunth_2024,wirth2025triangleinequalityquantumwasserstein}. Connections between quantum couplings and quantum channels in the setting of von Neumann algebras have also been investigated in Ref.~\cite{duvenhage2018balance}. More generally, quantum earth mover's distances based on quantum couplings have been studied in Ref.~\cite{agredo2017quantum}.

Finally, another approach defines the distance between two quantum states as the classical Wasserstein distance between the probability distributions generated by an informationally complete measurement. This perspective has been explored for Gaussian quantum systems using heterodyne measurements in Refs.~\cite{zyczkowski1998monge,zyczkowski2001monge,bengtsson2017geometry}.

Comprehensive reviews of quantum optimal transport and quantum Wasserstein distances can be found in Refs.~\cite{maas2024optimal,beatty2025wassersteindistancesquantumstructures}.

In this paper we employ the quantum \(W_1\) distance proposed in Ref.~\cite{De_Palma_2021}, since it is particularly well-suited to the locality structure of multipartite systems and recovers the classical \(W_1\) distance on probability distributions over bit strings for states diagonal in the computational basis.
}

\section{Quantum hypothesis testing with almost i.i.d.\ states}\label{sec:hp}

\begin{figure}[t]
  \centering
  \def\svgwidth{0.7\linewidth}
  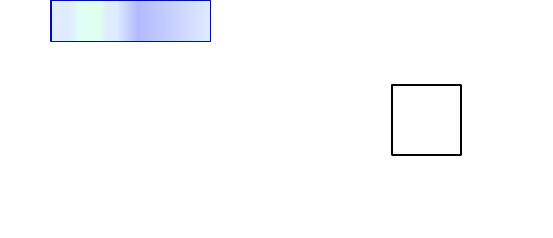
    \mycaption{Almost i.i.d.\ hypothesis testing.}{The operational task of quantum hypothesis testing with almost i.i.d.\ states can be illustrate in two instances. Suppose either one or both the i.i.d.\ hypotheses $H_0$ and $H_1$ are replaced with almost i.i.d.\ hypotheses $\tilde H_0$ and $\tilde H_1$. The first question is the following: suppose we are aware of the nature of the defects for the individual sources $\rho_n$ and $\sigma_n$; is it then possible to design a test which exploits such defects to enhance the distinguishability of the hypotheses, yielding a larger Stein exponent? The second instance, crucial for practical implementations, concerns the case in which the exact nature of the defects is not known: the only guarantee on the states that are tested is that they are almost i.i.d.\ sources along $\rho$ or $\sigma$, respectively. Is then possible to design a universal test which achieves the same performance of the i.i.d.\ setting for all the sources $\rho_n\xrightharpoonup{}\rho$ and $\sigma_n\xrightharpoonup{}\sigma$.}
  \label{fig:hp_test}
\end{figure}

What does robustness mean in asymmetric hypothesis testing? Imagine having $n$ copies of a state which obeys one of two hypotheses, say $\rho$ as a null hypothesis and $\sigma$ as an alternative hypothesis. As mentioned in Section~\ref{sec:intro_hp}, by the Hiai and Petz's quantum Stein's lemma~\cite{Hiai1991, Ogawa2000}, it is possible to design a sequence of tests $\{E_n\}_n$ such that the error of type II exponentially decays with exponent given by the Umegaki relative entropy $D(\rho\|\sigma)$, while the type I error remains bounded. In any realistic scenario, it is not possible to exclude that the ideally i.i.d.\ state to be tested -- i.e. either $\rho^{\otimes n}$ or $\sigma^{\otimes n}$ -- actually has an almost i.i.d.\ structure (see Figure~\ref{fig:hp_test}). And, more precisely, it is not possible to determine in advance which exact perturbation of the i.i.d.\ state we have to deal with. Hence, our problem can be phrased as follows:
\begin{itemize}
    \item Which notions of almost i.i.d.\ ($\rho_n \xrightharpoonup{?_0}\rho$ and $\sigma_n \xrightharpoonup{?_1}\sigma$), if any, ensure that tailor-made optimal protocols can achieve the (at least) the same performance of the i.i.d.\ setting when distinguishing the two hypotheses? Namely, is it possible to prove that for suitable almost i.i.d.\ perturbations we get
    \bb
    \liminf_{n\to\infty} \frac1n\, D_H^\e(\rho_n\|\sigma_n) \geqt{?} D(\rho\|\sigma) \qquad \forall\ \e\in (0,1)\,?
    \ee
    \item If the previous question can be answered affirmatively, are there universal protocols $\{E_n\}$, i.e.\ tests that asymptotically work for all almost i.i.d.\ sources according to such notions $\rho_n \xrightharpoonup{?_0}\rho$ and $\sigma_n \xrightharpoonup{?_1}\sigma$? Formally speaking, is it possible to prove that
    \begin{align}\nonumber
        {\rm Stein}_\epsilon\big(\pazocal{A}_\rho^{?_0}\|\pazocal{A}_\sigma^{?_1}\big)&\coloneqq \sup_{\substack{(E_n)_n\\ 0\leq E_n\leq \id_n}} \inf_{\substack{\rho_n \xrightharpoonup{?_0}\rho\\ \sigma_n \xrightharpoonup{?_1}\sigma}}\Bigg\{\errata{\liminf_{k\to\infty}}-\frac 1k\log\Tr[E_k\sigma_k]
        :\errata{\liminf_{k\to\infty}}\Tr[E_k\rho_k]\geq 1-\epsilon\Bigg\}\\
        &\geqt{?}D(\rho\|\sigma)\label{eq:stein_almost}
    \end{align}
    Note that, in the above setting, we first choose the sequence of tests, and then we evaluate the asymptotic performance of such tests with respect to almost i.i.d.\ perturbations of $\rho$ and $\sigma$.
\end{itemize}

The answer to both question is affirmative, provided that we consider the right notions of almost i.i.d.\ sources, as we prove in the following statement.

\begin{boxedthm}{}
\begin{thm}[(Quantum Stein's lemma for almost-i.i.d.\ sources)] \label{Stein_aiid_thm}
Let $\rho,\sigma\in \D(\HH)$ be two states on a finite-dimensional Hilbert space $\HH$. Let us consider
\begin{itemize}
    \item a be weakly almost-i.i.d.\ source $(\rho_n)_n$ along $\rho$,
    \item an MSR almost i.i.d.\ source $(\sigma_n)_n$ along $\sigma$.
\end{itemize} 
Then,
\bb\label{Stein_aiid_ineq_simple}
\liminf_{n\to\infty} \frac1n\, D_H^\e(\rho_n\|\sigma_n) \geq D(\rho\|\sigma) \qquad \forall\ \e\in (0,1)\, .
\ee
More generally, the relative entropy is the Stein exponent that can be achieved with a universal protocol depending only on the pair $(\rho,\sigma)$, i.e.
\bb\label{eq:A}
        {\rm Stein}_\epsilon\big(\pazocal{A}_\rho^w\|\pazocal{A}_\sigma^R\big)= D(\rho\|\sigma)\qquad \forall\ \e\in (0,1). 
\ee

\end{thm}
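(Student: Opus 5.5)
The plan is to build a single permutation-invariant ``block-counting'' test that depends only on $(\rho,\sigma)$ and on two auxiliary parameters $\delta>0$ and $k\in\N^+$. I then show it works simultaneously for all $\rho_n\xrightharpoonup{w}\rho$ and all MSR sources $\sigma_n\to\sigma$. The first inequality~\eqref{Stein_aiid_ineq_simple} follows from~\eqref{eq:A}, since $D_H^\e(\rho_n\|\sigma_n)$ is at least the exponent of any particular admissible test. The upper bound ``$\leq$'' in~\eqref{eq:A} is immediate: $(\rho^{\otimes n})_n$ and $(\sigma^{\otimes n})_n$ belong to both classes, and the strong converse of the quantum Stein's lemma recalled in Section~\ref{sec:intro_hp} bounds every test by $D(\rho\|\sigma)$. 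So everything rests on achievability.

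\emph{Construction.} Fix $\delta>0$. By the i.i.d.\ Stein's lemma, for $k$ large enough there is a test $0\le \Pi_k\le\id$ on $\HH^{\otimes k}$ with $\Tr[\Pi_k\rho^{\otimes k}]\ge 1-\delta$ and $\Tr[\Pi_k\sigma^{\otimes k}]\le 2^{-k(D(\rho\|\sigma)-\delta)}$; we may take $\Pi_k$ to be a projector. Write $N=\lfloor n/k\rfloor$. For each permutation $\pi$, the $n$ sites are split into $N$ consecutive blocks after applying $U_\pi$. Measure $\{\Pi_k,\id-\Pi_k\}$ on every block, and accept $\rho$ if at least $m=(1-2\delta)N$ blocks click. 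Finally average this POVM element over $\pi\in S_n$ uniformly, which gives a permutation-invariant $E_n$.

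\emph{Type I error.} Let $Z$ be the number of clicking blocks. Because of the random permutation, its mean under $\rho_n$ equals $N\,\EE{|I|=k}\Tr[\Pi_k(\rho_n)_I]$. By the weak almost i.i.d.\ property this is $N(1-\delta-o(1))$. For the second moment, pairs of distinct blocks form a uniformly random set of size $2k$. The same property at order $2k$ then gives $\mathbb{E}[Z^2]=N^2(1-\delta)^2+o(N^2)$, where the diagonal terms contribute only $O(N)$. Hence $\mathrm{Var}(Z)=o(N^2)$, and Chebyshev yields $\Tr[E_n\rho_n]\to 1$ for every $\e$.

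\emph{Type II error under $\sigma^{\otimes n}$.} The block outcomes are i.i.d.\ Bernoulli with parameter $p\le 2^{-k(D-\delta)}$. Therefore
\[
\Tr[E_n\sigma^{\otimes n}]\le \binom{N}{m}p^m\le 2^{N}2^{-k(D-\delta)(1-2\delta)N},
\]
so the exponent is at least $(D-\delta)(1-2\delta)-1/k$. Sending $k\to\infty$ and then $\delta\to 0$ via a diagonal choice of $(\delta,k)$ along $n$ gives $D(\rho\|\sigma)$.

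\emph{MSR alternative (main obstacle).} Here the i.i.d.\ bound must be transferred to $\sigma_n$, whose purification is supported in $\mathrm{span}\,\pazocal{V}^n_{r_n}$ with $r_n=o(n)$. I would first aim for a de Finetti-type operator inequality
\[
\sigma_n\le 2^{o(n)}\sum_{|S|=r_n}q_S\,\sigma^{\otimes S^c}\otimes\tau_S .
\]
The plan is to bound $\Pi_{\mathrm{span}\,\pazocal{V}}$ by a sum over subsets $S$ of projectors of the form $\psi^{\otimes S^c}\otimes\id_S$, losing at most a factor $\binom{n}{r_n}d_{AE}^{2r_n}=2^{o(n)}$. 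Because these projectors need not dominate the projector onto the sum of their ranges, I expect this step to be the hard part. A fallback is to estimate $\Tr[(E_n\otimes\id_E)\Pi_{\rm span}]$ directly by choosing a spanning family of vectors $U_\pi(\ket{\psi}^{\otimes n-r}\otimes\ket{\omega_j})$, with $\ket{\omega_j}$ ranging over a product basis. One then bounds the largest eigenvalue of the compressed operator by the dimension times the maximal matrix element, handling cross terms with Cauchy--Schwarz.

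Once such a reduction holds, the count is easy. For a state $\sigma^{\otimes S^c}\otimes\tau_S$, at most $r_n$ blocks can meet $S$. The remaining blocks are i.i.d.\ $\sigma^{\otimes k}$, so at least $m-r_n=(1-2\delta-o(1))N$ of the clean blocks must click. The same binomial estimate then gives the same exponent up to $o(1)$, and the $2^{o(n)}$ prefactor does not change it. This completes~\eqref{eq:A}.
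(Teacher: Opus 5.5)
There is a gap at the step you yourself flag. Your treatment of the null hypothesis and of the i.i.d.\ alternative is sound. The MSR step, however, is left as a plan, and the obstacle you name is exactly what stops it. The missing idea is a choice of basis.

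Take an orthonormal basis $\{\ket{e_0},\dots,\ket{e_{d_{AE}-1}}\}$ of $\HH_{AE}$ with $\ket{e_0}=\ket{\psi}$. Expanding $\ket{\omega}$ in the induced product basis shows that $\mathrm{span}\,\pazocal{V}^n_r$ is spanned by the product basis vectors with at most $r$ entries different from $e_0$. These vectors are orthonormal, and each one already belongs to $\pazocal{V}^n_r$. Moreover, every $P_S=\psi^{\otimes S^c}\otimes\id_S$ is diagonal in the same basis.

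So there is no non-commutativity issue at all: $\Pi_{\rm span}\le\sum_{|S|=r}P_S$. Since $\sigma_{A^nE^n}\le\Pi_{\rm span}$, tracing out $E^n$ gives
\[
\sigma_n\;\le\; d_{AE}^{\,r}\sum_{|S|=r}\sigma^{\otimes S^c}\otimes\Big(\frac{\id}{d_A}\Big)^{\otimes S},\qquad \binom{n}{r}\,d_{AE}^{\,r}\le 2^{\,n h(r/n)+r\log d_{AE}}=2^{o(n)} .
\]
This is your desired inequality, with uniform $q_S$ and maximally mixed $\tau_S$.

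Your fallback with a generic product basis does not close the gap as stated. The vectors $U_\pi(\ket{\psi}^{\otimes n-r}\otimes\ket{\omega_j})$ for different $\pi$ are then not orthogonal. Bounding the compressed operator by matrix elements would therefore require control of their Gram matrix, which you do not have.

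The inequality above is essentially Lemma~4.3 of Mazzola--Sutter--Renner, which the paper imports. There it appears in the form $\sigma_{A^nE^n}\le|\pazocal{T}|\sum_t\beta_{t,t}\Psi_t$, obtained by pinching onto such an orthonormal family. With it, your counting goes through: at most $r_n$ of the $N$ blocks meet $S$, the clean blocks are independent copies of $\sigma^{\otimes k}$, and at least $m-r_n$ of them must click. The $2^{o(n)}$ prefactor is harmless.

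Once this is filled in, your route is a valid alternative to the paper's. The paper proceeds in three steps:
\begin{itemize}
\item it proves a classical lemma showing that the empirical type of a weakly almost-i.i.d.\ source concentrates, via symmetrisation, type classes, a hypergeometric-versus-multinomial comparison, and Sanov;
\item it lifts this by measuring blocks with a measurement attaining $D^M(\rho^{\otimes h}\|\sigma^{\otimes h})$;
\item it handles the MSR alternative through the domination $\sigma_n\le 2^{o(n)}(1-\delta)^{-n}\sigma(\delta)^{\otimes n}$, with $\sigma(\delta)=(1-\delta)\sigma+\delta\,\id/d$, paying $\log\frac{1}{1-\delta}$ and using lower semicontinuity of $D$ as $\delta\to0$.
\end{itemize}
Your block-counting test differs in each respect. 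It needs only the order-$k$ and order-$2k$ marginals (a second-moment argument) rather than concentration of types. It uses the i.i.d.\ Stein test directly instead of the measured relative entropy. It absorbs the defective sites combinatorially, so no regularisation of $\sigma$ is needed. In exchange, the paper's domination step is test-independent: it reduces any MSR alternative to an i.i.d.\ one at once.

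Finally, drop the diagonal choice of $(\delta,k)$ along $n$. Convergence in the weak almost-i.i.d.\ condition at order $2k$ is not uniform over sources, so a source-independent $k_n\to\infty$ is not justified in your type~I analysis. It is also unnecessary. For fixed $(\delta,k)$, your tests have vanishing type~I error and exponent at least $(1-2\delta)(D-\delta)-1/k$ uniformly over both classes. The supremum over $(\delta,k)$ is built into ${\rm Stein}_\epsilon$, and for the individual bound you take it after the $\liminf$; either way it yields $D(\rho\|\sigma)$.
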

\end{boxedthm}
\begin{rem}
    Note that, in order to prove~\eqref{eq:A}, it is sufficient to show that
    \bb
        {\rm Stein}_\epsilon\big(\pazocal{A}_\rho^w\|\pazocal{A}_\sigma^R\big)\geq D(\rho\|\sigma)\qquad \forall\ \e\in (0,1), 
    \ee
    since the converse inequality immediately follows from the particular choice $\rho_n=\rho^{\otimes n}$ and $\sigma_n=\sigma^{\otimes n}$ in the minimisation appearing in~\eqref{eq:stein_almost}:
    \bb
        {\rm Stein}_\epsilon\big(\pazocal{A}_\rho^w\|\pazocal{A}_\sigma^R\big)\leq {\rm Stein}_\epsilon\big(\rho\|\sigma\big)=D(\rho\|\sigma).
    \ee
\end{rem}
\begin{proof}[Proof of Theorem~\ref{Stein_aiid_thm}]
The proof is based on three steps:
\begin{enumerate}
    \item first, we prove that~\eqref{Stein_aiid_ineq_simple} 
     holds in the classical setting, with a weakly almost i.i.d.\ null hypothesis $p_n$ and an i.i.d.\  alternative hypothesis $q^{\times n}$ (see Section~\ref{sec:almost_1});
    \item then, we lift the previous result to quantum states by a measurement argument (see Section~\ref{sec:almost_2});
    \item eventually, we conclude by introducing an MSR almost i.i.d.\ alternative hypothesis $\sigma_n$, so that we establish that~\eqref{Stein_aiid_ineq_simple} holds in full generality (see Section~\ref{sec:almost_3}).
\end{enumerate}
In every step, we show that the test can be chosen in a universal way. Such a strategy, which is developed in detail in the following sections for greater clarity, completes the proof of Theorem~\ref{Stein_aiid_thm}.
\end{proof}

\begin{rem}
The inequality in~\eqref{Stein_aiid_ineq_simple} can definitely be strict, even in the fully classical case. This can happen in several different ways:
\begin{itemize}
\item Consider for example the case where $\XX$ is a classical alphabet of size $|\XX|\geq 2$, $p=q\in \mathcal{P}(\XX)$ is some probability distribution such that $p(x_0) = 0$ for some $x_0\in \XX$. Take $q_n=q^{\times n}$ and 
\bb
p_n = \left(\left(1-\frac{1}{\sqrt{n}}\right) p + \frac{1}{\sqrt{n}}\, \delta_{x_0}\right)^{\otimes n},
\ee
with $\delta_{x_0}$ denoting the deterministic probability distribution concentrated on $x_0$, is a weakly almost-i.i.d.\ source along $p$. In this case, guessing the null hypothesis if and only if the symbol $x_0$ appears in the sequence achieves type-2 error probability exactly equal to zero and type-1 error probability that is asymptotically vanishing. Thus, $D_H^\e(p_n\|q^{\otimes n})=+\infty$ for all $\e\in (0,1)$, provided that $n$ is sufficiently large, while obviously $D(p\|q)=0$.

\item A different set of examples can be obtained with the construction described in~\cite[Section~3.3.1]{almost_iid}. More precisely, by considering sources $p_n$ that are not permutationally symmetric, one can better discriminate them from i.i.d.\ ones. For example, let $p_n$ be the probability distribution generated as follows. The symbols $x_j$ with odd index $j\in [n]$ are drawn in an i.i.d.\ fashion according to some distribution $p\in \mathcal{P}(\XX)$ (say, with full support); the symbols with even $j$, instead, are set to be equal to the preceding odd-$j$ symbol. For some $k\in \N^+$, a random subset $I \subseteq [n]$ of cardinality $k$ will not contain any pair of consecutive indices $\{j,j+1\}$, with $j$ odd, with asymptotically unit probability $\frac{n-2}{n} \frac{n-4}{n} \ldots \frac{n-2k}{n}$. When no such pair is present, the probability distribution of the symbols in $I$ is exactly $p^{\otimes I}$. Therefore, 
\bb
\E_{I\subseteq [n],\ |I|=k} \left\| (p_n)_I - p^{\otimes I} \right\|_1 \leq 2 \left( 1 - \frac{n-2}{n} \frac{n-4}{n} \ldots \frac{n-2k}{n} \right) \tendsn{} 0\, .
\ee
However, it is not difficult to verify that
\bb
\lim_{n\to\infty} \frac1n\, D_H^\e\big(p_n\|p^{\otimes n}\big) = \frac12\, D \left(\sumno_x p(x) \ketbra{xx} \Big\| \left(\sumno_x p(x) \ketbra{x}\right)^{\otimes 2} \right) = \frac{H(p)}{2}\, .
\ee
Therefore, $p_n$ can be distinguished from $p^{\otimes n}$ with a non-zero Stein exponent, implying that the inequality in~\eqref{Stein_aiid_ineq_simple} can be strict even in the case where $p$ has full support.
\end{itemize}
\end{rem}

\begin{rem}\label{rem:trace}
We cannot, in general, substitute $\sigma^{\otimes n}$ with a Wasserstein almost-i.i.d.\ source along $\sigma$ -- and not even with a trace distance almost i.i.d.\ source. Indeed, if that were possible then we could take $\tilde \sigma_n = \left(1-\frac1{n}\right) \sigma^{\otimes n} + \frac1{n}\, \rho^{\otimes n}$, which is an almost i.i.d.\ source along $\tilde \sigma^{\otimes n}$, but then
\bb\label{eq:0}
\frac1n\, D_H^\e(\rho^{\otimes n}\|\tilde \sigma_n) \leqt{(a)} - \frac1n \log \frac{1-\e}{n} \tendsn{} 0\, ,
\ee
i.e.\ the Stein exponent would be zero in this case. The inequality (a) can be directly obtained as follows:
\bb
     D_H^\e(\rho^{\otimes n}\|\sigma_n)&=-\log\min_{0\leq E_n\leq \id}\big\{\Tr[E_n\tilde \sigma_n]:\Tr[E_n\rho^{\otimes n}]\geq 1-\epsilon\big\}\\
     &=-\log\min_{0\leq E\leq \id}\Big\{\underbrace{\Big(1-\tfrac{1}{n}\Big)\Tr[E_n\sigma^{\otimes n}]}_{\geq 0}+\underbrace{\tfrac{1}{n}\Tr[E_n\rho^{\otimes n}]}_{\geq \frac{1-\epsilon}{n}}:\Tr[E_n\rho^{\otimes n}]\geq 1-\epsilon\Big\}.
\ee
This counterexample is particularly instructive, because it clearly shows that the trace distance, despite its restrictiveness, fails to capture the properties of the source needed to ensure robustness when perturbing the alternative hypothesis. Furthermore, since for $\rho\neq \sigma$ the source $(\tilde \sigma_n)_n$ cannot satisfy 
\eqref{Stein_aiid_ineq_simple}, we conclude that it cannot be a MSR almost i.i.d.\ source along $\sigma$. 
\end{rem}

\subsection{Classical case with i.i.d.\ alternative hypothesis}\label{sec:almost_1}

In this section, we are going to prove a classical version of Theorem~\ref{Stein_aiid_thm} where the alternative hypothesis is assumed to be i.i.d. This statement will turn out to be the core element of the proof of the next section.

\begin{lemma} \label{classical_Stein_aiid_lemma}
Let $p,q\in \mathcal{P}(\XX)$ be two probability distributions on a finite alphabet $\XX$, and let $(p_n)_n$ be a weakly almost-i.i.d.\ (classical) source along $p$. Then
\bb
\liminf_{n\to\infty} \frac1n\, D_H^\e(p_n\|q^{\otimes n}) \geq D(p\|q) \qquad \forall\ \e\in (0,1)\, .
\label{classical_Stein_aiid}
\ee
More precisely, there exists a \emph{universal} sequence of measurements -- that just depends on $p$ and $q$, not on the particular weakly almost-i.i.d.\ source $p_n$ -- which achieves a type II error exponent at least as large as $D(\rho\|\sigma)$.
\end{lemma}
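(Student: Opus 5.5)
The plan is to use a type-class (empirical-distribution) test that depends only on $p$ and $q$. For a fixed $\delta>0$, define the acceptance region
\[
\mathcal{A}_n^\delta\coloneqq\big\{x^n\in\XX^n:\ \|t_{x^n}-p\|_1\le\delta\big\},
\]
and let $E_n$ be the indicator of $\mathcal{A}_n^\delta$ (guess the null hypothesis iff the empirical type is $\delta$-close to $p$). If $D(p\|q)=+\infty$, we instead shrink to the region where $t_{x^n}(x)>0$ for some $x$ with $q(x)=0$. There are two things to check: the type-I error tends to zero for every weakly almost-i.i.d.\ $p_n$, and the type-II error under $q^{\times n}$ has exponent at least $D(p\|q)-\eta(\delta)$ with $\eta(\delta)\to0$ as $\delta\to0$. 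A diagonal argument with $\delta_n\to0$ slowly then gives a single universal sequence achieving exponent $D(p\|q)$.

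\textbf{Type II.} This is the standard i.i.d.\ method of types. We have $q^{\times n}(T_{n,t})\le 2^{-nD(t\|q)}$, and the number of types is at most $(n+1)^{|\XX|}$. Hence
\[
q^{\times n}(\mathcal{A}_n^\delta)\le (n+1)^{|\XX|}\,2^{-n\min_{\|t-p\|_1\le\delta}D(t\|q)} .
\]
By lower semicontinuity of $D(\cdot\|q)$, the minimum tends to $D(p\|q)$ as $\delta\to0$, and it diverges to $+\infty$ when $D(p\|q)=\infty$. This part does not involve $p_n$ at all, which is what makes the test universal.

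\textbf{Type I.} This is the main step and the only place where the weak almost-i.i.d.\ property enters. We need $p_n\big(\|t_{X^n}-p\|_1>\delta\big)\to0$, i.e.\ a weak law of large numbers using only $k$-body marginals with $k=2$. For each symbol $x$, write
\[
t_{X^n}(x)-p(x)=\frac1n\sum_i\big(\mathbf{1}[X_i=x]-p(x)\big).
\]
Its second moment equals
\[
\frac1{n^2}\sum_{i,j}\E\big[(\mathbf{1}[X_i=x]-p(x))(\mathbf{1}[X_j=x]-p(x))\big].
\]
The diagonal terms contribute $O(1/n)$. The off-diagonal terms are $(1-1/n)$ times the average over uniformly random pairs $I=\{i,j\}$ of the same expression evaluated under $(p_n)_I$. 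Under $p^{\otimes 2}$ that expression is $0$, and the bounded integrand differs from it by at most $O\big(\|(p_n)_I-p^{\otimes I}\|_1\big)$. Averaging over $I$ and invoking the definition of $\rho_n\xrightharpoonup{w}\rho$ with $k=2$ shows the second moment tends to $0$. Chebyshev's inequality and a union bound over the finitely many $x\in\XX$ then give $p_n\big((\mathcal{A}_n^\delta)^c\big)\to0$. So for every $\e\in(0,1)$ the constraint $\Tr[E_np_n]\ge1-\e$ eventually holds.

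\textbf{Conclusion.} Choose $\delta_n\downarrow0$ slowly enough that the Chebyshev bound still vanishes. This is possible uniformly over sources only in the sense of the $\liminf$ in~\eqref{eq:stein_almost}: for each source, the type-I error tends to $0$ whenever $\delta_n\gg n^{-1/2}$ together with the source-dependent $o(1)$ two-body term. That source-dependent term is the delicate point, so I would first try a fixed-$\delta$ test and take $\delta\to0$ only at the level of the exponent, using the $\sup$ over test sequences in~\eqref{eq:stein_almost}. Combining the two bounds yields $\liminf_n\frac1n D_H^\e(p_n\|q^{\otimes n})\ge D(p\|q)-\eta(\delta)$ for every $\delta$, which gives~\eqref{classical_Stein_aiid}.
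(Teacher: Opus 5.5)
Your proof is correct, and it differs from the paper's in how the type-I error is controlled. Both proofs share the rest of the structure. Each uses the same universal test: accept iff the empirical type lies in a small ball around $p$ (the paper uses $B_\infty(p,2\delta)$, you use an $\ell_1$-ball). Each bounds the type-II error by the same method-of-types/Sanov estimate, and each sends $\delta\to0^+$ at the level of the exponent via lower semicontinuity of $D(\cdot\|q)$.

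To show that the empirical type of $X^n\sim p_n$ concentrates at $p$, the paper proceeds as follows:
\begin{itemize}
\item it symmetrises $p_n$ and writes it as a mixture $\sum_{t_n}\alpha_{n,t_n}u_{t_n}$ of uniform distributions on type classes;
\item it bounds the hypergeometric $k$-marginals by $(u_{t_n})_{[k]}\le(1-k/n)^{-k}t_n^{\otimes k}$ and applies Sanov and Pinsker to $k$-samples;
\item it takes $n\to\infty$ and then $k\to\infty$, concluding that the mixing weight of types outside $B_\infty(p,2\delta)$ vanishes.
\end{itemize}
This uses the weak almost-i.i.d.\ condition for every $k$.

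Your second-moment computation instead gives directly, with $f(y)=\mathbf{1}[y=x]-p(x)$,
\[
\E_{p_n}\big[(t_{X^n}(x)-p(x))^2\big]\le \tfrac1n+2\,\E_{|I|=2}\|(p_n)_I-p^{\otimes I}\|_1 ,
\]
since $\E_{p^{\otimes 2}}[f\otimes f]=(\E_p f)^2=0$. This needs only the $k=2$ condition. It requires neither symmetrisation nor the type-class decomposition, because averaging over pairs does the symmetrising for you. It also yields an explicit type-I bound in terms of the pairwise defect. So your route is shorter and proves the lemma under a weaker hypothesis. The paper's route gives a more structural, de Finetti-style picture in which the weights over type classes concentrate at $p$, but that is not needed for this lemma.

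One correction to your closing paragraph: falling back to a fixed $\delta$ is not merely safer, it is necessary, so drop the diagonal argument you announce at the start. Suppose $0<D(p\|q)<\infty$; then no single test sequence can have type-II exponent at least $D(p\|q)$ while keeping $\liminf_n p_n(E_n)\ge1-\e$ for every weakly almost-i.i.d.\ source. To see this, take $r^{(j)}=(1-\frac1j)p+\frac1j q$. By convexity, $D(r^{(j)}\|q)<D(p\|q)$, so the classical strong converse gives $(r^{(j)})^{\otimes n}(E_n)\to0$ for each $j$. Splicing these into $p_n=r_n^{\otimes n}$, with $r_n\to p$ slowly enough, produces a weakly almost-i.i.d.\ source along $p$ with $p_n(E_n)\to0$. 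Universality therefore has to be read in the $\sup$--$\inf$ sense of~\eqref{eq:stein_almost}: for every $\eta>0$ there is a universal test with exponent at least $D(p\|q)-\eta$. That is exactly what your fixed-$\delta$ argument, and the paper's proof, deliver.
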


\begin{proof}
To find a lower bound on $D_H^\e(p_n\|q^{\otimes n})$, we can assume without loss of generality that $p_n$ be permutationally symmetric. Operationally, this means that we can shuffle the input sequence with a random permutation. By data processing, this can only decrease the hypothesis testing relative entropy. Furthermore, the symmetrisation of a weakly almost-i.i.d.\ source is again weakly almost-i.i.d.\ along the same state. Thus, we can write
\bb
p_n = \sum_{t_n\in\mathcal{T}_n} \alpha_{n,t_n} u_{t_n}\, ,
\label{classical_Stein_aiid_proof_eq1}
\ee
where $\mathcal{T}_n$ is the set of $n$-types on $\XX$, and $u_{t_n}$ denotes the uniform probability distribution on the type class $T_{n,t_n}$ (i.e.\ the set of sequences $x^n\in \XX^n$ with type $t_n$). For every fixed $k\in \N^+$ and every $s_k\in \mathcal{T}_k$, we then have that the $k$-symbol marginal of $u_{t_n}$, denoted with $(u_{t_n})_{[k]}$, is given by the hypergeometric distribution. 
Therefore,
\begin{align}
(u_{t_n})_{[k]}\big(T_{k,s_k}\big) &= \binom{n}{k}^{-1} \prod_x \binom{nt_n(x)}{ks_k(x)} \nonumber \\
&\leq \frac{k!}{(n-k)^k}  \prod_x \frac{(nt_n(x))^{ks_k(x)}}{(ks_k(x))!} \nonumber \\
&= \left(1-\frac{k}{n}\right)^{-k} \binom{k}{ks_k} \prod_x t_n(x)^{ks_k(x)} \\
&=\left(1-\frac{k}{n}\right)^{-k} t_n^{\otimes k}\big(T_{k,s_k}\big)\, , \nonumber
\end{align}
meaning that
\bb
D_{\max}\big((u_{t_n})_{[k]}\, \big\|\, t_n^{\otimes k} \big) = \max_{s_k\in\mathcal{T}_k} \log \frac{(u_{t_n})_{[k]}\big(T_{k,s_k}\big)}{t_n^{\otimes k}\big(T_{k,s_k}\big)} \leq - k \log\left(1-\frac{k}{n}\right) \eqqcolon \log \lambda_{k,n}\, ,
\label{classical_Stein_aiid_proof_eq3}
\ee
where $\lambda_{k,n} \coloneqq \left(1-\frac{k}{n}\right)^{-k}$.

Now, consider any set $\pazocal{A} \subseteq \mathcal{P}(\XX)$. Then for all $n\geq k$ we have that
\begin{align*}
p^{\otimes k} \!\left( \bigcup\nolimits_{s_k\in\mathcal{T}_k \cap \pazocal{A}} T_{k,s_k} \right) &\leq (p_n)_{[k]} \!\left( \bigcup\nolimits_{s_k\in\mathcal{T}_k \cap \pazocal{A}} T_{k,s_k} \right) + \frac12\left\| p^{\otimes k} - (p_n)_{[k]}\right\|_1 \\
&= \sum_{t_n\in\mathcal{T}_n} \alpha_{n,t_n} (u_{t_n})_{[k]}\left( \bigcup\nolimits_{s_k\in\mathcal{T}_k \cap \pazocal{A}} T_{k,s_k} \right) + \frac12\left\| p^{\otimes k} - (p_n)_{[k]}\right\|_1 \\
&\leqt{(a)} \lambda_{k,n} \sum_{t_n\in\mathcal{T}_n} \alpha_{n,t_n} t_n^{\otimes k} \left( \bigcup\nolimits_{s_k\in\mathcal{T}_k \cap \pazocal{A}} T_{k,s_k} \right) + \frac12\left\| p^{\otimes k} - (p_n)_{[k]}\right\|_1 \\
&\leqt{(b)} \lambda_{k,n} \sum_{t_n\in\mathcal{T}_n} \alpha_{n,t_n} \min\left\{1,\, (k+1)^{|\XX|-1}\, 2^{-k D(\pazocal{A}\|t_n)}\right\} + \frac12\left\| p^{\otimes k} - (p_n)_{[k]}\right\|_1\, ,
\end{align*}
Here, we denoted with $(p_n)_{[k]}$ the $k$-symbol marginal of the permutationally symmetric $p_n$, in~(a) we employed~\eqref{classical_Stein_aiid_proof_eq3}, and~(b) is the standard estimate due to Sanov~\cite[Exercise~2.12]{CSISZAR-KOERNER}. 

Taking the limit $n\to\infty$ yields
\bb
p^{\otimes k} \!\left( \bigcup\nolimits_{s_k\in\mathcal{T}_k \cap \pazocal{A}} T_{k,s_k} \right) \leq \liminf_{n\to\infty} \sum_{t_n\in\mathcal{T}_n} \alpha_{n,t_n} \min\left\{1,\, (k+1)^{|\XX|-1}\, 2^{-k D(\pazocal{A}\|t_n)}\right\} .
\ee
If $\pazocal{A} = B_\infty(p,\delta) \coloneqq \left\{ r\in \mathcal{P}(\XX):\ \|r-p\|_\infty \leq \delta \right\}$ 
for some $\delta>0$, 
we get
\begin{align}
&p^{\otimes k} \!\left( \bigcup\nolimits_{s_k\in\mathcal{T}_k \cap B_\infty(p,\delta)} T_{k,s_k} \right) \nonumber \\
&\qquad \leq \liminf_{n\to\infty}\sum_{t_n\in\mathcal{T}_n} \alpha_{n,t_n} \min\left\{1,\, (k+1)^{|\XX|-1}\, 2^{-k D(B_\infty(p,\delta) \| t_n)}\right\} \nonumber \\
&\qquad \leq \liminf_{n\to\infty} \sum_{t_n\in\mathcal{T}_n \cap B_\infty(p,2\delta)} \alpha_{n,t_n} + (k+1)^{|\XX|-1} \limsup_{n\to\infty} \sum_{t_n\in\mathcal{T}_n \setminus B_\infty(p,2\delta)} \alpha_{n,t_n} 2^{-k D(B_\infty(p,\delta) \|t_n)} \\
&\qquad \leqt{(c)} \liminf_{n\to\infty} \sum_{t_n\in\mathcal{T}_n \cap B_\infty(p,2\delta)} \alpha_{n,t_n} + (k+1)^{|\XX|-1} \limsup_{n\to\infty} \sum_{t_n\in\mathcal{T}_n \setminus B_\infty(p,2\delta)} \alpha_{n,t_n} e^{-2k \delta^2} \nonumber \\
&\qquad \leq \liminf_{n\to\infty} \sum_{t_n\in\mathcal{T}_n \cap B_\infty(p,2\delta)} \alpha_{n,t_n} + (k+1)^{|\XX|-1} e^{-2k \delta^2} , \nonumber 
\end{align}
where in~(c) we used Pinsker's inequality. Finally, taking the limit $k\to\infty$ and using the law of lawrge numbers on the leftmost side shows that
\bb
\lim_{n\to\infty} \sum_{t_n\in\mathcal{T}_n \cap B_\infty(p,2\delta)} \alpha_{n,t_n} = 1\, ;
\ee
recalling~\eqref{classical_Stein_aiid_proof_eq1}, this is equivalent to stating that the test that consists in checking whether the sequence $x^n$ has type in $B_\infty(p,2\delta)$ achieves an asymptotically vanishing type I error, i.e. 
\bb
\lim_{n\to\infty} p_n\!\left( \bigcup\nolimits_{t_n\in\mathcal{T}_n \cap B_\infty(p,2\delta)} T_{n,s_n} \right) = 1
\ee
for all $\delta>0$. Such a test depends only on the reference distribution $p$, not on the particular weakly almost-i.i.d.\ source, hence it is \emph{universal}. As is well known, again via Sanov's theorem~\cite[Exercise~2.13]{CSISZAR-KOERNER} the type II error probability of this test can be shown to satisfy
\bb
-\frac1n \log q^{\otimes n} \left( \bigcup\nolimits_{t_n\in\mathcal{T}_n \cap B_\infty(p,2\delta)} T_{n,s_n} \right) &\geq - \frac{|\XX|-1}{n} \log(n+1) + D\big( B_\infty(p,2\delta)\, \big\|\, q\big) \\
&\tendsn{} D\big( B_\infty(p,2\delta)\, \big\|\, q\big)\, ,
\ee
thus implying that
\bb
\liminf_{n\to\infty} \frac1n\, D_H^\e(p_n\|q^{\otimes n}) \geq D\big( B_\infty(p,2\delta)\, \big\|\, q\big) \qquad \forall\ \delta > 0\, .
\ee
We can now take the limit $\delta\to 0^+$. Using the lower semi-continuity of the relative entropy, it is not difficult to verify that $\lim_{\delta\to 0^+} D\big( B_\infty(p,2\delta)\, \big\|\, q\big) = D(p\|q)$, which completes the proof of~\eqref{classical_Stein_aiid}.
\end{proof}

\subsection{Quantum case with i.i.d.\ alternative hypothesis}\label{sec:almost_2}

The above classical solution can be lifted immediately to the quantum setting via measuring. Intuitively, this will be the strategy:
\begin{enumerate}
    \item symmetrise the $n$-partite state by applying a random permutation;
    \item for a fixed $h\leq n$, divide the $n$ parties in batches of $h$ copies, and discard the remaining systems, if needed;
    \item measure each batch with the measurement $\pazocal{M}_h$ achieving the measured relative entropy between $\rho^{\otimes h}$ and $\sigma^{\otimes h}$ 
    \item apply the strategy of Lemma~\ref{classical_Stein_aiid_lemma}, i.e.\ verify whether the sequence of the outcomes is strongly typical for $\pazocal{M}_h(\rho^{\otimes h})$.
\end{enumerate}
All these step do not depend on the specific weak almost i.i.d.\ source $\rho_n$, but only on the reference states $\rho$ and $\sigma$. Therefore, for each fixed $h$, the strategy above defined is \emph{universal}. This is in analogy with the result known as `quantum Sanov theorem'~\cite{Bjelakovic2005, Noetzel_2014}, which applies to the (different) setting in which the null hypothesis is guaranteed to be i.i.d.\ but with unknown base state, and states that also in that case the test can be chosen universally for all base states. See also~\cite{generalised-Sanov, doubly-comp-quantum} for a further strengthening of this statement.

With this approach in mind, we are ready to formally prove Theorem~\ref{Stein_aiid_thm} for i.i.d.\ alterative hypotheses, by showing that such a procedure yields a type II error exponent asymptotically equal to (or larger than) $D(\rho\|\sigma)$ as $h\to \infty$. \bigskip

Fix some $h\in \N^+$, and let $\MM_h:\D\big(\HH^{\otimes h}\big) \to \mathcal{P}(\XX)$ be a measurement on $h$ copies of the quantum system under examination. Our test consists in applying a random permutation to the $n$ systems at our disposal, partitioning them into $m \coloneqq \floor{n/h}$ batches of $h$ systems each, measuring each batch with $\MM_h$, and applying the classical test constructed in Lemma~\ref{classical_Stein_aiid_lemma}. To this end, define the classical probability distributions $p_m$ given by $p_m \coloneqq \MM_h^{\otimes m} \big( \Tr_{n - mh} \widebar{\rho}_n\big)$, where $\widebar{\rho}_n$ is the symmetrised version of the state $\rho_n$, and $\Tr_{n-mh}$ denotes the partial trace over the last $n-mh$ systems. Set also $p \coloneqq \MM_h\big(\rho^{\otimes h}\big)$ and $q \coloneqq \MM_h\big(\sigma^{\otimes h}\big)$. The problem now consists in deciding between the two hypotheses $p_m$ and $q^{\otimes m}$. 

Let $k\in \N^+$. Pick a uniformly random subset $I\subseteq [m]$ of size $|I| = k$, and construct the subset $J\subseteq [n]$ of size $kh$ obtained by joining the batches indicated by $I$. A little thought reveals that $J$ is then a uniformly random subset of size $kh$.  
Hence, due to data processing
\bb
\E_{I\subseteq [m],\ |I|=k} \left\| (p_m)_I - p^{\otimes I} \right\|_1 &\leq \E_{J\subseteq [n],\ |J|=kh} \left\| (\widebar{\rho}_n)_J - \rho^{\otimes J} \right\|_1 \\
&= \left\| \E_{J\subseteq [n],\ |J|=kh} (\rho_n)_J - \rho^{\otimes J} \right\|_1 \\
&\leq \E_{J\subseteq [n],\ |J|=kh} \left\| (\rho_n)_J - \rho^{\otimes J} \right\|_1 \\
&\tendsn{} 0\, ,
\ee
implying that the source $(p_m)_m$ is almost i.i.d.\ along $p$. Therefore, Lemma~\ref{classical_Stein_aiid_lemma} ensures that for all $\e\in (0,1)$ it holds that
\bb
\liminf_{n\to\infty} \frac1n\, D_H^\e(\rho_n\|\sigma^{\otimes n}) &\geq \liminf_{n\to\infty} \frac1n\, D_H^\e(p_m\|q^{\otimes m}) \\
&= \frac1h \liminf_{m\to\infty} \frac1m\, D_H^\e(p_m\|q^{\otimes m}) \\
&\geq \frac1h\, D(p\|q) \\
&= \frac1h\, D\Big(\MM_h\big(\rho^{\otimes h}\big)\Big\|\MM_h\big(\sigma^{\otimes h}\big)\Big)\, .
\ee
Taking the supremum over all measurements $\MM_h$ yields
\bb
\liminf_{n\to\infty} \frac1n\, D_H^\e(\rho_n\|\sigma^{\otimes n}) \geq \frac1h \rel{D^M}{\rho^{\otimes h}}{\sigma^{\otimes h}} \geq \frac1h \rel{D^M}{\rho^{\otimes h}}{\sigma^{\otimes h}} - \frac{d}{h} \log (h+1)\, ,
\ee
where $D^M(\rho\|\sigma) \coloneqq \sup_\MM \rel{D}{\MM(\rho)}{\MM(\sigma)}$ is the measured relative entropy~\cite{Donald1986, Berta2017}, and the last inequality has been established by Hiai and Petz~\cite{Hiai1991} (see also~\cite[Lemma~2.4]{berta_composite}). Now, evaluating the limit $h\to\infty$ 
proves~\eqref{Stein_aiid_ineq_simple}. 

Note that the type II error exponent $D(\rho\|\sigma)$ can be achieved with a \emph{universal} sequence of tests, as: (a)~the optimal measurement $\MM$ that achieves the above measured relative entropy only depends on $\rho$ and $\sigma$, and not on the particular almost i.i.d.\ source at hand; and (b)~the same is true of the classical decision rule applied to the resulting classical probability distributions, which only depends on $p = \MM\big(\rho^{\otimes h}\big)$.

\subsection{Quantum case in full generality}\label{sec:almost_3}

Leveraging the result of the previous section, we can eventually prove robustness when the alternative hypothesis is MSR almost i.i.d. We need to recall the following technical lemma.

\begin{lemma}[{\cite[Lemma~4.3]{Mazzola_2026}}]\label{lem:giulia}
    Let $\sigma_{A^n}$ be a $\binom{n}{r}$-almost i.i.d.\ state along $\sigma_A$. Then, there exist a purification $\ket{\sigma}_{AE}$ of $\sigma_{A}$ and an extension $\sigma_{A^nE^n}$ of $\sigma_{A^n}$ that can be written as
\begin{align}
\sigma_{A^nE^n} = \sum_{t,t' \in \pazocal{T}} \beta_{t,t'} \ket{\Psi_t} \bra{\Psi_{t'}}_{A^nE^n} \qquad \beta_{t,t'} \in \mathbb{C},
\end{align}
for a family $\{\ket{\Psi_t}_{A^nE^n}\}_{t \in \pazocal{T}}$ of orthonormal vectors from $\pazocal{V}^n_r(\mathcal{H}_{AE}, \ket{\sigma}_{AE})$, with
\begin{align}\label{eq:upper_T}
\log |\pazocal{T}| \leq n \, h\Big( \frac{r}{n} \Big) + r \log d_{AE} .
\end{align}
Furthermore, calling 
\begin{align}\label{eq:sig}
\tilde \sigma_{A^n E^n T} \coloneqq \sum_{t \in \pazocal{T}} \beta_{t,t} \ketbra{\Psi_t}_{A^n E^n} \otimes \ketbra{t}_T,
\end{align}
we have $\sigma_{A^n E^n} \leq |\pazocal{T}| \tilde \sigma_{A^n E^n}$, whence $\sigma_{A^n} \leq |\pazocal{T}| \tilde \sigma_{A^n}$.
\end{lemma}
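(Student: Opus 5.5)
The plan is to build the family $\{\ket{\Psi_t}\}_{t\in\pazocal{T}}$ as a product basis adapted to the purification, so that the coefficients $\beta_{t,t'}$ are just the matrix entries of the extension in that basis. The operator inequality then follows from a pinching (Cauchy--Schwarz) argument.

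\textbf{Step 1: choice of basis.} Since $\sigma_{A^n}$ is $\binom{n}{r}$-almost i.i.d., the definition gives a purification $\ket{\sigma}_{AE}$ and an extension $\sigma_{A^nE^n}$ with $\supp(\sigma_{A^nE^n})\subseteq \mathrm{span}\,\pazocal{V}^n_r(\mathcal{H}_{AE},\ket{\sigma}_{AE})$. Fix an orthonormal basis $\{\ket{e_0},\dots,\ket{e_{d_{AE}-1}}\}$ of $\mathcal{H}_{AE}$ with $\ket{e_0}=\ket{\sigma}_{AE}$. Let $\pazocal{T}$ be the set of strings $t=(j_1,\dots,j_n)\in\{0,\dots,d_{AE}-1\}^n$ with at most $r$ nonzero entries, and set $\ket{\Psi_t}\coloneqq\ket{e_{j_1}}\otimes\cdots\otimes\ket{e_{j_n}}$. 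These vectors are orthonormal.

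\textbf{Step 2: the $\ket{\Psi_t}$ lie in $\pazocal{V}^n_r$ and span it.} Each $\ket{\Psi_t}$ belongs to $\pazocal{V}^n_r$: choose a permutation $\pi$ moving the (at most $r$) positions with nonzero index to the last $r$ slots. Then $\ket{\Psi_t}=U_\pi(\ket{e_0}^{\otimes n-r}\otimes\ket{\omega^{(r)}})$, where $\ket{\omega^{(r)}}$ is the product of the remaining $r$ factors. Conversely, expand $\ket{\omega^{(r)}}$ in the product basis. Every vector $U_\pi(\ket{e_0}^{\otimes n-r}\otimes\ket{\omega^{(r)}})$ is then a combination of basis strings having at least $n-r$ zeros, i.e.\ of vectors $\ket{\Psi_t}$ with $t\in\pazocal{T}$. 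Hence $\mathrm{span}\,\pazocal{V}^n_r=\mathrm{span}\{\ket{\Psi_t}\}_{t\in\pazocal{T}}$. Expanding $\sigma_{A^nE^n}$ in this basis gives $\sigma_{A^nE^n}=\sum_{t,t'}\beta_{t,t'}\ket{\Psi_t}\!\bra{\Psi_{t'}}$ with $\beta_{t,t'}=\bra{\Psi_t}\sigma_{A^nE^n}\ket{\Psi_{t'}}$.

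\textbf{Step 3: cardinality bound.} We have
\[
|\pazocal{T}|=\sum_{j=0}^{r}\binom{n}{j}(d_{AE}-1)^j\leq d_{AE}^{\,r}\sum_{j\le r}\binom{n}{j}\leq d_{AE}^{\,r}\,2^{n h(r/n)},
\]
where the last step is the standard binomial-tail bound, valid for $r\le n/2$. This is the regime of interest, since $r=o(n)$. This yields~\eqref{eq:upper_T}. The only delicate points in the whole argument are the spanning identity of Step 2 and the restriction $r\le n/2$ here; everything else is routine.

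\textbf{Step 4: operator inequality.} Let $P_t\coloneqq\ketbra{\Psi_t}$. Since $\sum_t P_t$ is the projector onto a subspace containing the support of $\sigma_{A^nE^n}$, we can write $\sigma_{A^nE^n}=\sum_{t,t'}P_t\,\sigma_{A^nE^n}P_{t'}$. The pinching inequality $X\le|\pazocal{T}|\sum_t P_tXP_t$ for $X\ge0$ follows from Cauchy--Schwarz: for any vector $v$,
\[
\Big|\sum_t \bra{v}P_t X^{1/2}\Big|^2\le|\pazocal{T}|\sum_t\big\|X^{1/2}P_t\ket{v}\big\|^2 .
\]
Moreover $P_t\sigma_{A^nE^n}P_t=\beta_{t,t}\ketbra{\Psi_t}$. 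Applying the pinching inequality with $X=\sigma_{A^nE^n}$ therefore gives $\sigma_{A^nE^n}\le|\pazocal{T}|\,\tilde\sigma_{A^nE^n}$, where $\tilde\sigma_{A^nE^n}=\Tr_T\tilde\sigma_{A^nE^nT}$. Taking the partial trace over $E^n$, which preserves the operator order, gives $\sigma_{A^n}\le|\pazocal{T}|\,\tilde\sigma_{A^n}$.
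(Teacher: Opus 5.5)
The paper does not prove this lemma; it imports it from Mazzola--Sutter--Renner. Your argument is the standard proof of that result: an orthonormal basis of $\mathcal{H}_{AE}$ containing the purification, product strings with at most $r$ deviating factors, and a pinching bound obtained from Cauchy--Schwarz. Steps 1, 2 and 4 are correct as written, including the spanning identity and the passage to $\sigma_{A^n}\le|\pazocal{T}|\,\tilde\sigma_{A^n}$ by partial trace.

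The one weakness is Step 3, which as written covers only $r\le n/2$. The tail bound $\sum_{j\le r}\binom{n}{j}\le 2^{nh(r/n)}$ is false for $r>n/2$ (take $r=n$), whereas the lemma is stated for arbitrary $r\le n$. The restriction is unnecessary.

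Since $\binom{n}{r}\binom{r}{j}=\binom{n}{j}\binom{n-j}{r-j}\ge\binom{n}{j}$, you have $|\pazocal{T}|=\sum_{j=0}^{r}\binom{n}{j}(d_{AE}-1)^j\le\binom{n}{r}\sum_{j=0}^{r}\binom{r}{j}(d_{AE}-1)^j=\binom{n}{r}\,d_{AE}^{\,r}$. Equivalently, each admissible string is determined by an $r$-subset containing its nonzero positions together with its entries on that subset. Moreover, $\binom{n}{r}\le 2^{nh(r/n)}$ holds for every $0\le r\le n$, because $1\ge\binom{n}{r}p^r(1-p)^{n-r}$ with $p=r/n$. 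This gives the bound for all $r$. Your version already suffices for the application in the paper, where $r=o(n)$.
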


For some $\delta\in (0,1)$ to be specified later, set $\sigma_A(\delta)\coloneqq (1-\delta) \sigma_A + \delta \frac{\id}{d}$, where $d \coloneqq \dim \HH_A$. Using the notation of Lemma~\ref{lem:giulia}, let $\ket{\Psi_t}_{A^nE^n}=U_{\pi_t}\left(\ket{\sigma}^{\otimes n-r}_{AE}\otimes \ket{\omega_t}_{A^rE^r}\right)$ for a suitable permutation $\pi_t \in S_n$. \errata{Note that, without loss of generality, we can assume $d_{AE}=d_A^2$; indeed, up to an isometry $V_{E\to E'}$, we can reduce $\ket{\sigma}_{AE}$ to its standard purification $V_{E\to E'}\ket{\sigma}_{AE}$, with $d_{E'}=d_A$; then, we can act with $V_{E\to E'}^{\otimes n}$ on $\sigma_{A^nE^n}$ and $\tilde\sigma_{A^nE^nT}$ without changing the claim of Lemma~\ref{lem:giulia}. Now,} calling $(\omega_t)_{A^r}=\Tr_{E^r}\ketbra{\omega_t}$, we can write 
\bb
\widetilde\sigma_{A^n}
 =
 \sum_{t\in \T}
 \beta_{t,t}\,
 U_{\pi_t}
 \left(
 \sigma_A^{\otimes (n-r)}
 \otimes
 (\omega_t)_{A^r}
 \right)
 U_{\pi_t}^\dagger ,
\ee
where \((\omega_t)_{A^r}\in\mathcal D(\mathcal H_A^{\otimes r})\). Since $(\omega_t)_{A^r}\le \id_{A^r}
\le
\errata{d_A^r}\delta^{-r} \sigma_A(\delta)^{\otimes r},
$
we obtain, for each \(t\in \T\),
\bb
U_{\pi_t}
\left(
\sigma_A^{\otimes (n-r)}
\otimes
(\omega_t)_{A^r}
\right)
U_{\pi_t}^\dagger
\le
\errata{d_A^r}(1-\delta)^{-n+r} \delta^{-r}
U_{\pi_t}
\left(
\sigma_A(\delta)^{\otimes (n-r)}
\otimes
\sigma_A(\delta)^{\otimes r}
\right)
U_{\pi_t}^\dagger .
\ee
However, since \(\sigma_A(\delta)^{\otimes n}\) is permutation invariant, 
$
U_{\pi_t}
\left(
\sigma_A(\delta)^{\otimes (n-r)}
\otimes
\sigma_A(\delta)^{\otimes r}
\right)
U_{\pi_t}^\dagger
=
\sigma_A(\delta)^{\otimes n}.
$
Hence,
\bb
\widetilde\sigma_{A^n}
\le
\errata{d_A^r}(1-\delta)^{-n+r}\delta^{-r}
\sum_{t\in {\mathcal{T}}}
\beta_{t,t}
\sigma_A(\delta)^{\otimes n}.
\ee
Applying the trace to both sides of the identity~\eqref{eq:sig}, defining
$\widetilde\sigma_{A^nE^nT}$, and using linearity and multiplicativity of the trace over tensor products, yields
$
1=\sum_{t\in \T}\beta_{t,t}.
$
Hence,
\bb
\widetilde\sigma_{A^n}
\le
\errata{d_A^r}(1-\delta)^{-n+r}\delta^{-r} \sigma_A(\delta)^{\otimes n},
\ee
and by Lemma~\ref{lem:giulia},
\bb
\sigma_{A^n}\le |{\T}|\,\widetilde\sigma_{A^n}
\le
\frac{|\T|\errata{d_A^r}}{(1-\delta)^{n-r}\delta^{r}}\,
\sigma_A(\delta)^{\otimes n}.
\ee
Thus, for every positive semidefinite operator \(0\le E_n\le \id_{A^n}\),
\bb
\Tr[E_n\sigma_{A^n}]
\le
\frac{|\T|\errata{d_A^r}}{(1-\delta)^{n-r}\delta^{r}}
\Tr[E_n\sigma_A(\delta)^{\otimes n}].
\ee
Equivalently,
\bb
-\frac1n\log \Tr[E_n\sigma_{A^n}]
&\ge
-\frac1n\log \Tr\big[E_n\sigma_A(\delta)^{\otimes n}\big]\\
&\qquad -
\frac1n\log |\T| \errata{-\frac rn \log d_A}
-
\frac rn \log \frac1{\delta} - \frac{n-r}{n}\log\frac{1}{1-\delta} .
\ee
Using the upper bound from Lemma~\ref{lem:giulia} on $\log |{\T}|$, we obtain
\bb
&-\frac1n\log \Tr[E_n\sigma_{A^n}] \\
&\quad \ge
-\frac1n\log \Tr\big[E_n\sigma_A(\delta)^{\otimes n}\big]
-
h\left(\frac rn\right)
-
\errata{\frac {3r}n\log d_{A}}
-
\frac rn \log \frac1{\delta} - \frac{n-r}{n}\log\frac{1}{1-\delta} .
\ee
In particular, if \(r=o(n)\), then \(r/n\to0\) and \(h(r/n)\to0\) as \(n \to \infty\), so that
\bb
-\frac1n\log \Tr[E_n\sigma_{A^n}]
\ge
-\frac1n\log \Tr\big[E_n\sigma_A(\delta)^{\otimes n}\big]
- \log \frac{1}{1-\delta} - o(1).
\ee
As a consequence,
\bb
&\sup_{\substack{(E_n)_{n\ge1}\\0\le E_n\le \id}}
\inf_{\substack{
\rho_n\xrightharpoonup{w}\rho\\
\sigma_{A^n}\xrightharpoonup{\rm \errata{MSR}}\sigma}}\left\{
\liminf_{k\to\infty}
\left(
-\frac1k\log\Tr[E_k\sigma_{A^k}]
\right): \liminf_{k\to\infty}\,\Tr[E_k\rho_k]\ge 1-\epsilon
\right\}
\\
&\qquad\ge
\sup_{\substack{(E_n)_{n\ge1}\\0\le E_n\le \id}}
\inf_{
\rho_n\xrightharpoonup{w}\rho}
\left\{\liminf_{k\to\infty}
\left(
-\frac1k\log\Tr\big[E_k\sigma_A^{\otimes k}(\delta)\big]
\right):\liminf_{k\to\infty}\Tr[E_k\rho_k]\ge 1-\epsilon\right\} - \log \frac{1}{1-\delta} \\
&\qquad = D(\rho\|\sigma\errata{(\delta)}) - \log \frac{1}{1-\delta},
\ee
with the convention $\inf \emptyset = 0$. Taking the \errata{limit $\delta\to 0^+$} completes the proof of Theorem~\ref{Stein_aiid_thm} in full generality\errata{, due to the lower semicontinuity of the quantum relative entropy}.

\begin{rem}
    Suppose that $H_0$ is composite and $H_1$ is MSR almost i.i.d.; then, the identical proof strategy shows that
    \bb
        {\rm Stein}_\epsilon\big(\pazocal{F}\|\pazocal{A}_\sigma^R\big)\geqt{(a)} {\rm Stein}_\epsilon(\pazocal{F}\|\sigma)\eqt{(b)} D(\pazocal{F}\|\sigma),
    \ee
    where (a) holds for any arbitrary family of states $\pazocal{F}$, and (b) holds when the generalised quantum Sanov theorem applies.
\end{rem}

\section{Robustness of data compression}\label{sec:data_comp}

\textbf{One-shot setting for arbitrary sources.} Given a possibly non-i.i.d. source $(\rho_n)_n$ source, namely a sequence of states $\rho_n\in\mathcal{D}(\mathcal{H})$, the one-shot approach to data compression can be generalised as follows. For any fixed $n$, we define a (fixed-lenght) $(n,\epsilon)$-code $\pazocal{C}_n=(\pazocal{E}_n,\pazocal{D}_n)$ with compressed space $\mathcal{K}_n$ of dimension $M(n,\epsilon)$ as a pair of an encoding channel 
$\pazocal E_n:\mathcal D(\mathcal H^{\otimes n})\to \mathcal D(\mathcal K_n)$ and a decoding channel $\pazocal D_n:\mathcal D(\mathcal K_n)\to\mathcal D(\mathcal H^{\otimes n})$ such that
\bb
   \dim\mathcal K_n \leq M(n,\epsilon) \qquad \text{and} \qquad F_e\!\left(\rho_n,\pazocal D_n\circ\mathcal E_n\right)\geq 1-\epsilon.
\ee

\noindent\textbf{Asymptotic setting for arbitrary sources.} Let $(\pazocal{C}_n)_n$ be a sequence of $(n,\epsilon_n)$-codes for $(\rho_n)_n$ with compressed spaces of dimension $M(n,\epsilon_n)$. Then, if $\epsilon_n\to 0$ as $n\to \infty$, the compression rate
\bb
    R\coloneqq \limsup_{n\to\infty} \frac 1n \log M(n,\epsilon)
\ee
is said to be achievable. The optimal compression rate of the source
\((\rho_n)_n\) is defined by
\bb 
R^\ast\big((\rho_n)_n\big)\coloneqq \inf \left\{ R:
R \text{ is achievable for }
(\rho_n)_n \right\}.
\ee

\noindent\textbf{Almost i.i.d.\ sources.} The previous definition is operationally meaningful if we exactly know the source. In the almost i.i.d.\ setting, as a particular case, we could define in principle $(n,\epsilon)$-codes for individual sources, and consequently also achievable and optimal rates. However, in the physical scenario we are interested in, the exact almost i.i.d.\ source along $\rho$ is not known, but we only have guarantees on the asymptotic behaviour $\rho_n\rightharpoonup \rho$. On the one hand, it is still possible to define sequence of codes $\pazocal{C}_n=(\pazocal{E}_n,\pazocal{D}_n)$ with compressed spaces $\mathcal{K}_n$, but we cannot upper bound the error at finite $n$ \emph{for all} almost i.i.d.\ sources along a fixed $\rho$ (unless there is no compression at all). However, it is still possible to require universal asymptotic guarantees as follows.
\begin{Def}[(Data compression for all weakly almost i.i.d.\ sources along $\rho$)]\label{def:data_almost_iid}
Let $\rho\in\mathcal D(\mathcal H)$ be a fixed reference state. For any fixed $\epsilon\in (0,1)$, a rate \(R_\epsilon\ge0\) is said to be \emph{universally achievable} for all weakly almost i.i.d.\ sources along $\rho$ with asymptotic entanglement fidelity at least $1-\epsilon$ if there exist Hilbert spaces \(\mathcal K_n\), an
encoding channel (compression) 
\bb
\pazocal E_n:
\mathcal D(\mathcal H^{\otimes n})
\to
\mathcal D(\mathcal K_n),
\ee
and a decoding channel (decompression) 
\bb
\pazocal D_n:
\mathcal D(\mathcal K_n)
\to
\mathcal D(\mathcal H^{\otimes n}),
\ee
such that
\bb
\limsup_{n\to\infty}
\frac1n\log\dim\mathcal K_n
\le R,
\ee
and, for all $\rho_n\xrightharpoonup{w}\rho$,
\bb\label{eq:ent_fid_bound}
\liminf_{n\to\infty}F_e\!\left(
\rho_n,
\pazocal D_n\circ\pazocal E_n
\right)
\geq 1-\epsilon.
\ee
namely the rate is achievable by every individual almost i.i.d.\ source along $\rho_n$ using \emph{the same} sequence of codes. The optimal compression rate for all almost i.i.d.\ sources along $\rho$ is then 
is defined by
\bb\label{eq:min_rate}
R^{\epsilon,\ast}_{w}(\rho)\coloneqq\inf \big\{
R_\epsilon:
R_\epsilon &\text{ is universally achievable by weakly almost i.i.d.\ sources}\\
& \text{ along $\rho$ with asymptotic entanglement fidelity at least $1-\epsilon$}
\big\}.
\ee
\end{Def}
Clearly, we have
\bb
    R^\ast\big((\rho_n)_n\big)\leq \lim_{\epsilon\to 0^+}R^{\epsilon,\ast}_{w}(\rho)
\ee
for all $\rho_n\xrightharpoonup{w}\rho$.

\subsection{Robustness of classical data compression}

\begin{figure}[t]
  \centering
  \def\svgwidth{0.85\linewidth}
  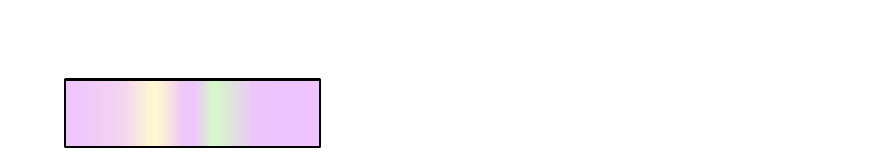
  \mycaption{Almost i.i.d.\ classical data compression.}{Let $P\in\mathcal{P}(\mathcal{X})$ be a classical probability distribution. In the i.i.d.\ setting (a), the aim of data-compression is to design a sequence of codes $(\pazocal{E}_n,\pazocal{D}_n)$ such that each sequence of symbols $X^n=(X_1,\dots, X_n)$ gets compressed to a message $m\in [M]$ with $M\ll |\mathcal{X}|^n$; then $m$ gets decompressed to the original sequence with vanishing error probability when $X^n\sim P^{\times n}$. Such a sequence of codes is \emph{robust} if, replacing $P^{\times n}$ with any arbitrary almost i.i.d.\ $P_n\xrightharpoonup{w}P$ -- as in (b) -- the error probability is still small in the limit $n \to \infty$.}
  \label{fig:data_comp}
\end{figure}

Before tackling the quantum case, let us discuss the compression of almost i.i.d.\ classical sources. More precisely, the aim of this section is to address the following question: is it possible to design a classical sequence of compression codes $(\pazocal{E}_n,\pazocal{D}_n)$ which are robust under arbitrary weakly almost i.i.d.\ perturbations of the source which is sampled (see Figure~\ref{fig:data_comp})?
\smallskip

Let $P$ denote a probability distribution on a finite alphabet $\mathcal{X}$, and for any $n \in {\mathbb{N}}_+$ let $P_n$ denote a probability distribution on ${\mathcal{X}}^n.$ Then a sequence of random variables $(X^n)_n$, with $X^n \sim P_n$, denotes a weakly almost i.i.d.~classical information source (along $P$) if $n$ uses of the source produce a sequence $x^n = (x_1, x_2, \ldots, x_n) \in \mathcal{X}^n$ with probability $P_n(x^n)$, 
and  \bb \label{cl-w-aiid}
        \limsup_{n\to \infty} \EE{\substack{I\subseteq [n],\\ |I|=k}}\|(P_n)_I - P^{\otimes I}\|_1= 0\qquad \forall k \in \mathbb{N}_+,
    \ee
    where the expectation is over uniformly random subsets $I \subseteq [n]$ of size $k$, and $(P_n)_I$ denotes the marginal of $P_n$ on $I$. We want to find a bound on the optimal rate of data compression (or source coding) for such a source.
    
    Recall that for any $\varepsilon \in (0,1)$ a fixed-length code $(n, \varepsilon)$-code, $\pazocal{C}_n$, of size $M$ for a classical information source $X^n\sim P_n$ consists of an encoding map $\pazocal{E}_n : \mathcal{X}^n \rightarrow \{1,2,\ldots, {M }\}$, and a decoding map $\pazocal{D}_n :  \{1,2,\ldots,  M \} \rightarrow {\mathcal{X}}^n$, such that the probability of error
    \bb
    p_{\rm err}(\pazocal{C}_n,{P_n})\coloneqq {\PP{X_n\sim P_n}}(X^n \neq {\pazocal{D}}_n({\pazocal{E}}_n(X^n))) \leq \varepsilon.
    \ee
    Clearly, $M \equiv M(n, \varepsilon)$. For any fixed values of  $n$ and $ \varepsilon$, let $ M^*(n, \varepsilon)$ denote the minimal value of $M(n, \varepsilon)$ over all such $(n, \varepsilon)$ codes. Then the optimal rate of data compression for the source is defined as
    \begin{align}\label{opt-rate}
        R^\ast\big((P_n)_n\big)\coloneq \lim_{\varepsilon \to 0} \limsup_{n \to \infty} \frac{\log M^*(n, \varepsilon)}{n}.
    \end{align}
     More generally, for any fixed  
     but arbitrary probability distribution $P\in\mathcal{P}(\mathcal{X})$ and $\epsilon\in(0,1)$, if there exists a sequence of codes $\pazocal{C}_n=(\pazocal{E}_n,\pazocal{D}_n)$ of size $M_n$
        \bb\label{eq:code0}
             \pazocal{E}_n : \mathcal{X}^n \to \{1,\dots, M_n \} \qquad \pazocal{D}_n :  \{1,\dots,  M_n \} \to \mathcal{X}^n\qquad n\geq 1,
        \ee
        satisfying
        \bb\label{eq:rate0}
            \limsup_{n\to\infty}\frac{\log M_n}{n}\leq R 
        \ee
        and        
        \bb\label{eq:error0}
            \limsup_{n\to\infty}p_{\rm err}(\pazocal{C}_n,P_n)\leq \epsilon,
        \ee
         for any arbitrary $P_n\xrightharpoonup{w}P$,
    we say that the rate $R_\epsilon$ is universally achievable for weakly almost i.i.d.\ sources along $P$ with error probability at most $\epsilon$ asymptotically.
    Similarly to~\eqref{eq:min_rate}, the infimum of these rates will be denoted by $R^{\epsilon,\ast}_{w}(P)$, and again
    \bb\label{eq:converse_rate}
    R^\ast\big((P_n)_n\big)\leq \lim_{\epsilon\to 0^+}R^{\epsilon,\ast}_{w}(P)
\ee
for all $P_n\xrightharpoonup{w}P$.
\smallskip
    
    The following statement establishes the robustness of data compression for weakly almost i.i.d.~classical information sources.
    \begin{boxedthm}{}
    \begin{thm}[(Robustness of classical data compression)]\label{thm:c-data-comp}
        Let $P\in \mathcal{P}(\mathcal{X})$ be fixed but arbitrary probability distribution on a finite set $\mathcal{X}$. For an individual weakly almost i.i.d.~classical information source $P_n\xrightharpoonup{w}P$, the optimal rate of data-compression satisfies
        \bb\label{eq:optimal}
        R^*\big((P_n)_n\big) \leq H(P),
        \ee
        where $H(P)$ is the Shannon entropy of $P$. Furthermore, for any arbitrary $\epsilon\in (0,1)$, there exists a sequence of codes $(\pazocal{E}_n,\pazocal{D}_n)$ achieving the optimal rate
        \bb
            R^{\epsilon,\ast}_{w}(P)=H(P)
        \ee
        with asymptotic error probability at most $\epsilon$ for all $P_n\xrightharpoonup{w}P$.
    \end{thm}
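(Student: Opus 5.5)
The plan is to build one \emph{universal} type-based code that depends only on $P$. The key input is the type-concentration property already established inside the proof of Lemma~\ref{classical_Stein_aiid_lemma}. Fix $\delta>0$ and let
\[
\mathcal{S}_{n,\delta}\coloneqq\bigcup\nolimits_{t_n\in\mathcal{T}_n\cap B_\infty(P,2\delta)}T_{n,t_n}
\]
be the set of strings whose empirical type is $2\delta$-close to $P$ in sup-norm. The encoder $\pazocal{E}_n$ assigns a distinct index in $\{1,\dots,M_n\}$ to each string of $\mathcal{S}_{n,\delta}$, with $M_n=|\mathcal{S}_{n,\delta}|+1$, and sends every other string to the spare index. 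The decoder $\pazocal{D}_n$ inverts this map on $\mathcal{S}_{n,\delta}$. The code depends only on $P$ and $\delta$, not on the source.

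\textbf{Rate.} By the standard type-counting bounds, $|\mathcal{T}_n|\le(n+1)^{|\mathcal{X}|}$ and $|T_{n,t}|\le 2^{nH(t)}$. Hence
\[
\tfrac1n\log M_n\le \max_{t\in B_\infty(P,2\delta)}H(t)+O\big(\tfrac{\log n}{n}\big).
\]
By uniform continuity of the Shannon entropy on the finite simplex, this is at most $H(P)+\eta(\delta)$ with $\eta(\delta)\to0$ as $\delta\to0^+$.

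\textbf{Error.} The error probability is exactly $1-P_n(\mathcal{S}_{n,\delta})$. The set $\mathcal{S}_{n,\delta}$ is permutation invariant, so $P_n(\mathcal{S}_{n,\delta})=\bar P_n(\mathcal{S}_{n,\delta})$, where $\bar P_n$ is the symmetrisation of $P_n$. As remarked in the proof of Lemma~\ref{classical_Stein_aiid_lemma}, $\bar P_n$ is again weakly almost i.i.d.\ along $P$. That proof (the hypergeometric/$D_{\max}$ bound~\eqref{classical_Stein_aiid_proof_eq3}, Sanov, Pinsker, then $n\to\infty$ followed by $k\to\infty$) shows that $\sum_{t_n\in\mathcal{T}_n\cap B_\infty(P,2\delta)}\alpha_{n,t_n}\to1$. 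That argument used only the weak almost-i.i.d.\ property of the null source, so it applies verbatim here. Hence $p_{\rm err}(\pazocal{C}_n,P_n)\to0$ for \emph{every} $P_n\xrightharpoonup{w}P$, which is stronger than~\eqref{eq:error0} for any $\epsilon$.

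This shows that $H(P)+\eta(\delta)$ is universally achievable for every $\delta>0$. Letting $\delta\to0^+$ gives $R^{\epsilon,\ast}_w(P)\le H(P)$ for every $\epsilon\in(0,1)$. Via~\eqref{eq:converse_rate}, this also yields~\eqref{eq:optimal}. One can also argue directly: for each fixed $P_n\xrightharpoonup{w}P$, the same codes have vanishing error at rate $H(P)+\eta(\delta)$.

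\textbf{Converse.} For the reverse inequality $R^{\epsilon,\ast}_w(P)\ge H(P)$, I would specialise to the i.i.d.\ source $P_n=P^{\times n}$, which is trivially weakly almost i.i.d. Any universally achievable rate must then be achievable for $P^{\times n}$ with asymptotic error at most $\epsilon<1$. The strong converse of Shannon's source coding theorem says that any code of rate below $H(P)$ has error tending to $1$ on $P^{\times n}$. This gives $R_\epsilon\ge H(P)$ for every fixed $\epsilon\in(0,1)$, not merely in the limit $\epsilon\to0$.

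The main obstacle is the type-concentration step, i.e.\ showing $P_n(\mathcal{S}_{n,\delta})\to1$ for an arbitrary weakly almost-i.i.d.\ source. Weak convergence controls only averaged $k$-body marginals with $k$ fixed, so the law of large numbers is not available at length $n$. I would rely on the argument already given in Lemma~\ref{classical_Stein_aiid_lemma}, making explicit that it never uses the alternative hypothesis. Everything else is routine type counting.
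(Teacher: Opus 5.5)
Your proposal is correct and follows essentially the paper's proof. The paper also uses as its codebook the acceptance set of the universal typicality test from Lemma~\ref{classical_Stein_aiid_lemma} (strings with type in $B_\infty(P,2\delta)$), gets vanishing error from that lemma's type-concentration argument, and invokes the i.i.d.\ strong converse to obtain $R^{\epsilon,\ast}_w(P)\ge H(P)$; the only difference is that it bounds $\frac1n\log|A_n^\epsilon|$ via the type-II error exponent against the uniform distribution, $\log|\mathcal{X}|-D(P\|Q)=H(P)$, which is the same type-counting estimate you carry out directly (your explicit treatment of the $\delta\to0^+$ limit through $\eta(\delta)$ is, if anything, more careful).
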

     \end{boxedthm}  
       
       \begin{proof}
       Let $n\geq 1$. Note that a fixed-length code $(\pazocal{E}_n,\pazocal{D}_n)$ of size $M_n$ can decode at most $M_n$ sequences without error. Namely, calling
       \begin{align}
           A_n \coloneqq \{ x^n \in {\cal{X}}^n \, ;  {\pazocal{D}}_n({\pazocal{E}}_n(x^n))= x^n\}.
       \end{align}
       we have $|A_n| \leq  M(n , \varepsilon)$. Conversely, any set
       $A_n \subseteq {\mathcal{X}}^n$ identifies a code $\pazocal{C}_n$ of size $|A_n|$
        by encoding elements of  $A_n$ injectively, decoding them perfectly, and mapping all elements $x^n \not\in A_n$ arbitrarily. Suppose the source $X^n$ has distribution $P_n$; then, the error probability for $\pazocal{C}_n$ is
       \bb
            p_{\rm err}(\pazocal{C}_n,{P_n})\coloneqq {\PP{X_n\sim P_n}}(X^n \notin A_n).
    \ee
       We construct a \emph{universal} code~\eqref{eq:code0}, i.e. one for which~\eqref{eq:error0} holds for all $P_n\xrightharpoonup{w}P$, whose rate satisfies $R=H(P)$. This immediately implies the upper bound~\eqref{eq:optimal} on the optimal rate of data compression by~\eqref{eq:converse_rate} for any individual weakly almost i.i.d.~source along $P$. The key idea behind constructing this code is to cast the data compression problem into the asymmetric binary hypothesis test stated below, and then leverage the robust acceptance function of Lemma~\ref{classical_Stein_aiid_lemma} to identify $A_n$. 
\begin{itemize}
    \item {\bf{Null hypothesis}}  $H_0$: the source $X^n$ is i.i.d.\ with distribution $P^{\times n}$ or, more in general, weakly almost i.i.d.\ along $P$ with unknown distribution $P_n$.
    \item 
       {\bf{Alternative hypothesis}}  $H_1$: the source is the uniform i.i.d.~source $X^n \sim Q^{\times n}$, where $Q(x)= \frac{1}{|\mathcal{X}|}$ for all $x \in {\mathcal{X}}$.
\end{itemize}
    Let us call $A_n^\epsilon$ the set of sequences that yield the acceptance of the null hypothesis according to the robust and universal test provided by Lemma~\ref{classical_Stein_aiid_lemma}, when the type I error is asymptotically constrained to be smaller than $\epsilon$. Let $e:A_n^\epsilon\to \{1,\dots, |A_n^\epsilon|\}$ be an injective enumeration of the sequences in the set $A_n^\epsilon$. Setting $M_n\coloneqq |A_n|$, we now define the following code:
    \bb
        \pazocal{E}_n(x^n)&\coloneqq \begin{cases}
            e(x^n) &  x^n\in A_n^\epsilon\\
            1 & x^n \notin A_n^\epsilon
        \end{cases}\qquad \text{and}\qquad
        \pazocal{D}_n(m)\coloneqq 
            e^{-1}(m) \quad \text{for all} \; m \in [M_n].
    \ee
     Now, suppose $P_n\xrightharpoonup{w}P$. Then,
    \bb
    p_{\rm err}(\pazocal{C}_n,{P_n})&\coloneqq \PP{X_n\sim P_n}\big(X^n \neq {\pazocal{D}}_n({\pazocal{E}}_n(X^n))\big)= \PP{X_n\sim P_n}\big(X^n \notin A_n^\epsilon\big)= \PP{X_n\sim P_n}\big(\text{type I error}\big),
    \ee
    whence, by Lemma~\ref{classical_Stein_aiid_lemma}, we have~\eqref{eq:error0} for any arbitrary $P_n\xrightharpoonup{w}P$. Let us now bound the compression rate for this sequence of codes: 
    \bb
        \frac{\log M_n}{n}&=\frac 1n \log |A_n^\epsilon|=\frac 1n \log \sum_{x^n\in A_n^\epsilon}\frac 1 {|\mathcal{X}|^n}+\log|\mathcal{X}|\\
        &=\frac 1n \log \PP{X^n \sim Q^{\times n}}\big(X^n\in A_n^\epsilon\big)+\log|\mathcal{X}|=\frac 1n \log \PP{X^n \sim Q^{\times n}}\big(\text{type II error}\big)+\log|\mathcal{X}|
    \ee
    Again leveraging Lemma~\ref{classical_Stein_aiid_lemma}, we conclude that
    \bb
        \limsup_{n\to\infty}\frac{\log M_n}{n}&\geq -D(P\|Q)+\log|\mathcal{X}| = H(P),
    \ee
    By the strong converse of Schumacher's data compression for i.i.d.\ information sources (see e.g.~\cite{Sharma2014}) we conclude that the code we have constructed is optimal:
    \bb
        H(P)=R^{\epsilon,\ast}(P)\leq R^{\epsilon,\ast}_{w}(P)\leq H(P),
    \ee
    where $R^{\epsilon,\ast}(P)$ is the optimal compression rate for the i.i.d.\ source $P^{\times n}$ with the constraint that the error probability is upper bounded by $\epsilon$.
    \end{proof}

       \subsection{Robustness of quantum data compression}

\begin{figure}[t]
  \centering
  \def\svgwidth{\linewidth}
  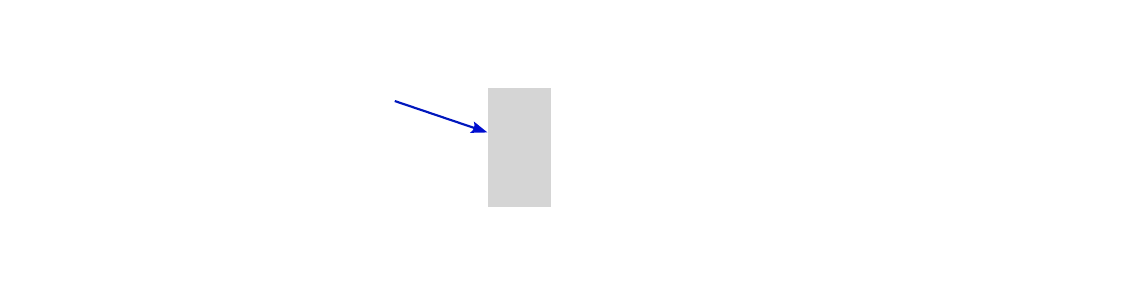
  \mycaption{Almost i.i.d.\ quantum data compression.}{Let $\rho\in\mathcal{D}(\mathcal{H}_A)$ be a state. In the i.i.d.\ setting (a), a data compression code $(\pazocal{E}_n,\pazocal{D}_n)$ acts on the subsystem $A^n$ of an arbitrary -- possibly entangled -- state $\Psi_{A^nR}$ with marginal $\Tr_R \Psi_{A^nR}=\rho^{\otimes n}$ by mapping it into a space $\mathcal{K}_n$; then, for a good code, the decompressed state $\sigma_{A^nR}=\big((\pazocal{D}_n\circ\pazocal{E}_n)\otimes {\rm Id}_R\big)(\Psi_{A^nR})$ should have high fidelity with the original state $\Psi_{A^nR}$. A sequence of codes is \emph{robust} if, replacing $\Psi_{A^nR}$ with a state $\tilde{\Psi}_{A^nR}$ having a weakly almost i.i.d.\ marginal $\rho_n\xrightharpoonup{w}\rho$ -- as in (b) -- the fidelity between $\tilde \Psi_{A^nR}$ and the new decompressed state $\tilde \sigma_{A^nR}$ is still asymptotically large.} 
  \label{fig:q_data_comp}
\end{figure}

We are now ready to generalise the idea discussed in the classical setting to quantum states, keeping in mind the framework introduced at the beginning of Section~\ref{sec:data_comp}.

The following theorem establishes the robustness of quantum data compression for weakly almost-i.i.d.~quantum information sources.
\begin{boxedthm}{}
\begin{thm}[(Robustness of quantum data compression)]\label{thm:q-data-comp}
Let $\rho\in \mathcal{D}(\mathcal{H})$ be an arbitrary fixed quantum state. For an individual weakly almost i.i.d.~quantum information source $\rho_n\xrightharpoonup{w}\rho$, the optimal rate of data-compression satisfies
        \bb\label{eq:optimal}
        R^*\big((\rho_n)_n\big) \leq S(\rho),
        \ee
        where $S(\rho)$ is the von Neumann entropy of $\rho$. Furthermore, for any arbitrary $\epsilon\in (0,1)$, there exists a sequence of codes $(\pazocal{E}_n,\pazocal{D}_n)$ achieving the optimal rate
        \bb
            R^{\epsilon,\ast}_{w}(\rho)=S(\rho)
        \ee
        with asymptotic entanglement fidelity at least $1-\epsilon$ for all $\rho_n\xrightharpoonup{w}\rho$, namely
       \bb\label{eq:error}
    \liminf_{n\to\infty}\, F_e\!\left(\rho_{n}, \pazocal{D}_n \circ \pazocal{E}_n \right) \geq 1-\epsilon .
       \ee
    \end{thm}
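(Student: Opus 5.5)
The plan is to copy the proof of Theorem~\ref{thm:c-data-comp}. We recast compression as an asymmetric hypothesis test with:
\begin{itemize}
\item null hypothesis a weakly almost i.i.d.\ source $\rho_n\xrightharpoonup{w}\rho$;
\item alternative hypothesis the i.i.d.\ maximally mixed state $(\id/d)^{\otimes n}$, where $d=\dim\HH$.
\end{itemize}
The result of Section~\ref{sec:almost_2} (weakly almost i.i.d.\ null hypothesis, i.i.d.\ alternative) supplies a universal sequence of tests $0\le E_n\le\id$. These depend only on $\rho$ and $\id/d$, through the batch size $h$, the batch measurement $\MM_h$ and the typicality window $\delta$. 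Fix $\eta>0$ and choose $h$ and $\delta$ so that two properties hold. First, $\liminf_n\Tr[E_n\rho_n]\ge 1-\epsilon'$ for every $\rho_n\xrightharpoonup{w}\rho$. Second, $\liminf_n -\frac1n\log\Tr[E_n(\id/d)^{\otimes n}]\ge D(\rho\|\id/d)-\eta=\log d-S(\rho)-\eta$. The second property is the same as $\Tr E_n\le 2^{n(S(\rho)+\eta)+o(n)}$, which plays the role of the bound on $|A_n^\epsilon|$ in the classical proof.

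The new ingredient compared with the classical case is that $E_n$ need not be a projector, since it is a permutation-averaged POVM element. I would round it by taking $\Pi_n$ to be the spectral projector of $E_n$ onto eigenvalues $\ge 1/2$. This gives the operator inequalities $\Pi_n\le 2E_n$ and $\id-\Pi_n\le 2(\id-E_n)$. Consequently $\Tr\Pi_n\le 2\Tr E_n\le 2^{n(S(\rho)+\eta)+o(n)}$, and $\Tr[\Pi_n\rho_n]\ge 1-2\Tr[(\id-E_n)\rho_n]$, whose liminf is at least $1-2\epsilon'$, uniformly over the class. We then set $\mathcal K_n\coloneqq\operatorname{ran}\Pi_n$. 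The encoder is $\pazocal E_n(X)=\Pi_nX\Pi_n+\Tr[(\id-\Pi_n)X]\,\ketbra{\phi_0}$ for a fixed $\ket{\phi_0}\in\mathcal K_n$, and $\pazocal D_n$ is the inclusion. The Kraus operators of $\pazocal D_n\circ\pazocal E_n$ include $\Pi_n$, so~\eqref{eq:NC} gives
\bb
F_e(\rho_n,\pazocal D_n\circ\pazocal E_n)\ge(\Tr[\Pi_n\rho_n])^2\ge 1-4\Tr[(\id-E_n)\rho_n].
\ee
Choosing $\epsilon'=\epsilon/4$ yields~\eqref{eq:error} for every $\rho_n\xrightharpoonup{w}\rho$ with rate $\limsup_n\frac1n\log\dim\mathcal K_n\le S(\rho)+\eta$. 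Since $\eta$ is arbitrary, $R^{\epsilon,\ast}_w(\rho)\le S(\rho)$.

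For the converse I would specialise to $\rho_n=\rho^{\otimes n}$, which is weakly almost i.i.d.\ along $\rho$. A universal code must in particular work for this source, so the strong converse of Schumacher compression (as invoked in the classical proof, e.g.~\cite{Sharma2014}) gives $R^{\epsilon,\ast}_w(\rho)\ge S(\rho)$ for every $\epsilon\in(0,1)$. For an individual source the bound~\eqref{eq:optimal} would follow from the inequality $R^\ast((\rho_n)_n)\le\lim_{\epsilon\to0^+}R^{\epsilon,\ast}_w(\rho)$ noted after Definition~\ref{def:data_almost_iid}. That inequality needs a diagonal argument: pick sequences $\epsilon_j\to0$ and $\eta_j\to0$, and switch from the $j$-th code to the $(j+1)$-th once $n$ is large enough that, for this fixed source, the fidelity and rate bounds of code $j+1$ already hold. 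This produces $\epsilon_n\to0$ at rate $\le S(\rho)$.

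The main obstacle I anticipate is making sure that the test from Section~\ref{sec:almost_2} gives a genuine type II bound on $\Tr[E_n(\id/d)^{\otimes n}]$ for every large $n$, not just along a subsequence, and that its parameters can be fixed independently of the source. I would re-derive the bound directly from the Sanov estimate in Lemma~\ref{classical_Stein_aiid_lemma}, applied to $q=\MM_h((\id/d)^{\otimes h})$ with $m=\lfloor n/h\rfloor$ batches. The leftover $n-mh$ discarded systems only contribute a factor $d^{\,h}$ to $\Tr E_n$, and that factor is subexponential. The projector-rounding step is routine once this uniform bound is in hand.
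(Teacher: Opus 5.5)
Your proof is correct and follows essentially the paper's route. You use the same test of $\rho_n$ against $(\id/d)^{\otimes n}$ taken from Section~\ref{sec:almost_2}, and you round $E_n$ to a spectral projector: the paper thresholds at $1/n$, paying a factor $n$ in rank and an additive $1/n$ in type-I error, whereas your threshold $1/2$ costs only constant factors. The remaining steps also match the paper: the encoder whose Kraus operators include $\Pi_n$, so that $F_e\ge(\Tr[\Pi_n\rho_n])^2$, and the Schumacher strong converse applied to $\rho^{\otimes n}$.

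Your explicit $\eta$-slack is in fact necessary. Take $\rho\neq\id/d$ and any fixed codes with $\limsup_n\frac1n\log\dim\mathcal K_n\le S(\rho)$. The weakly almost i.i.d.\ source $\big((1-\gamma_n)\rho+\gamma_n\id/d\big)^{\otimes n}$ has single-copy entropy exceeding $S(\rho)$ by at least $\gamma_n(\log d-S(\rho))$. If $\gamma_n\to0$ slowly enough to dominate the codes' $o(1)$ rate excess, its entanglement fidelity tends to $0$ by the standard strong-converse estimate. Hence the universal test with exponent exactly $D(\rho\|\id/d)$ that the paper invokes cannot exist, and a single code sequence of rate exactly $S(\rho)$ does not either. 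What both arguments establish is the infimum statement $R^{\epsilon,\ast}_w(\rho)=S(\rho)$, together with the bound for each individual source via your diagonal argument.
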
    
\end{boxedthm}
       
\begin{proof}
Given any arbitrary sequence of subspaces $\mathcal{K}_n \subseteq \mathcal{H}^{\otimes n}$, with $n\geq 1$, let $\Pi_n $ be the orthogonal projector onto $\mathcal{K}_n$, namely
\begin{align}
\mathcal K_n = \operatorname{supp}\Pi_n \subseteq \mathcal H^{\otimes n}.
\end{align}
Let us define an encoding map corresponding to the chosen subspace $\mathcal{K}_n$ as follows: for all $\omega \in {\mathcal{D}}({\mathcal{H}}^{\otimes n})$,
\begin{align}
{\pazocal{E}}_n(\omega)
\coloneqq
\Pi_n \omega \Pi_n
+
\Tr[(\id-\Pi_n)\omega]\,\sigma_{{\mathcal{K}}_n},
\end{align}
where $\sigma_{{\mathcal{K}}_n}$ is a fixed state with ${\rm{supp\,}}\sigma_{{\mathcal{K}}_n}\subseteq {\mathcal{K}}_n$. Choose then the decoding map ${\pazocal{D}}_n$ to be the inclusion map induced by the embedding ${\mathcal{K}}_n \subseteq {\mathcal{H}}^{\otimes n}$,  i.e.~$\forall$ $\nu \in {\mathcal{D}}({\mathcal{K}}_n)$,
\begin{align}
{\pazocal{D}}_n(\nu)\coloneqq\nu\,\oplus\, 0_{\mathcal{K}_n^\perp}.
\end{align}
Calling $\Lambda_n (\omega) \coloneqq {\pazocal{D}}_n \circ {\pazocal{E}}_n$, we then have 
\begin{align}
\Lambda_n (\omega) = {\pazocal{D}}_n \circ {\pazocal{E}}_n(\omega)
=
\Pi_n \omega \Pi_n
+
\Tr [(\id-\Pi_n)\omega]\,\sigma_{\mathcal K_n}.
\end{align}
Suppose the eigenvalue decomposition of $\sigma_{\mathcal K_n}$ be given by 
\begin{align}
\sigma_{\mathcal K_n} = \sum_j s_j |s_j\rangle\langle s_j|,
\end{align}
and let $\{|f_\alpha\rangle\}_\alpha$ be an orthonormal basis of ${\rm{ran}}(\id-\Pi_n)$. Then the operators
\bb
K_0 &\coloneqq \Pi_n,\,\,\\
K_{j,\alpha} &\coloneqq \sqrt{s_j}\, |s_j\rangle\langle f_\alpha|, &\quad j= 1,\ldots , M_n, \quad\alpha =1,\ldots, {\rm{dim}}({\rm{ran}}(I - \Pi_n)).
\ee
form a set of Kraus operators of $\Lambda_n$.
Recall that for a CPTP map $\Lambda$ with Kraus operators $\{A_k\}_k$ we have (see~\eqref{eq:NC})
\bb
F_e(\rho, \Lambda) = \sum_k |\Tr(\rho A_k)|^2,
\ee
whence
\begin{align}
    F_e(\rho_{n}, {\Lambda}_n) &\geq \left[ \Tr(\rho_n K_0)\right]^2 = [\Tr(\rho_n \Pi_n)]^2.
\end{align}
Suppose now that we identify a sequence of subspaces $\{\mathcal{K}_n\}$ such that
\begin{align}\label{cond1}\liminf_{n\to\infty}\Tr[\rho_n \Pi_n] \geq 1 - \frac \varepsilon 2 \quad \text{for all}\quad \rho_n\xrightharpoonup{w}\rho.
\end{align}
In that case, the code $(\pazocal{E}_n,\pazocal{D}_n)$ satisfies the required bound~\eqref{eq:ent_fid_bound} on the entanglement fidelity, since 
\begin{align}\label{eq:ent-fid-bd}
\liminf_{n\to\infty}F_e (\rho_n , \Lambda_n) 
\geq (1 - \varepsilon^\prime)^2
= (1 - {\varepsilon}/{2})^2 \geq 1 - \varepsilon.
\end{align}
Thus, a sequence of projectors $\{\Pi_n\}_n$ that satisfies~\eqref{cond1}
identifies a valid protocol achieving the rate
\bb
    R_\epsilon=\limsup_{n\to\infty} \frac{\log \dim\mathcal{K}_n}{n}.
\ee
As in the classical case (see the proof of Theorem~\ref{thm:c-data-comp}), we are going to cast the data compression problem into the
asymmetric binary hypothesis testing problem stated below; by leveraging the robust sequence binary POVM of Theorem~\ref{Stein_aiid_thm}, we will identify a valid sequence of subspace $\{\mathcal{K}_n\}_n$ for our purpose.

\begin{itemize}
    \item {\bf{Null hypothesis}} $H_0:$ the source is the i.i.d.\  sequence of states $(\rho^{\otimes n})_n$ or any other arbitrary (unknown) weakly almost i.i.d.\ source $\rho_n$ along $\rho$.
    \item {\bf{Alternative hypothesis}} $H_1:$ the source is a uniform i.i.d.~source $\tau^{\otimes n}$,
where $ \tau \coloneqq {\id}/{d} \in {\pazocal{D}}({\mathcal{H}})$ denotes the completely mixed state.
\end{itemize}

Let us call $\{E_n\}_n$ the robust sequence of POVMs given by Theorem~\ref{Stein_aiid_thm} with type I error bounded by $\epsilon/2$; more precisely,  for all $\rho_n\xrightharpoonup{w}\rho$,
\bb\label{eq:errors}
\limsup_{n\to\infty}\PP{\rho_n}({\hbox{type I error}}) &= \limsup_{n\to\infty}\Tr[(\id - E_n)\rho_n]\leq \epsilon/2,\\
\liminf_{n\to\infty}-\frac 1n\log\PP{\tau^{\otimes n}} ({\hbox{type II error}}) &= \liminf_{n\to\infty}-\frac 1n\log\Tr[E_n\tau^{\otimes n}] \geq D(\rho\|\tau).
\ee
Now we want to extract a projective measurement $\{T_n, \id-T_n\}$ (i.e.\ $T_n^\dagger = T_n$, $T_n^2 = T_n$) from our POVM $\{E_n, \id- E_n\}$. We proceed in the following manner. Define
\begin{align}
    T_n \coloneqq \big\{E_n \geq \tfrac{1}{n} \id\big\},
\end{align}
which implies that $E_n \leq T_n + \frac 1n \id$, \errata{as}
\bb
    E_n-\tfrac{1}{n} \id \leq \big(E_n-\tfrac{1}{n} \id\big)_+ \leq\big\{E_n\geq \tfrac{1}{n} \id\big\},
\ee
where in the last inequality we have used the fact that the spectrum of $E_n-\alpha\id$ is upper bounded by $1$. \errata{Since $E_n \leq T_n + \frac 1n \id$, we have
\begin{equation}\label{eq:T_epsilon}
    \Tr[T_n\,\rho_n] \ge \Tr[E_n\,\rho_n] - \frac{1}{n}
\end{equation}
i.e.\ the probability of type I error is at most $\frac{1}{n} + \Tr[(\id - E_n)\rho_n]$.}
Let us now compare the probabilities of type II error for the POVM $\{E_n, \id- E_n\}$ and the projective measurement $\{T_n, \id - T_n\}$. Since every eigenvalue of $E_n$ on ${\rm{ran\,}} T_n$ is at least $1/n$,
\bb
\Tr E_n \geq \frac 1n \Tr T_n,
\ee
and hence
\begin{align}\label{eq:ranT}
    {\rm rk}\, T_n &=\Tr T_n\leq n \Tr E_n  = nd^n \Tr [E_n \tau^{\otimes n}]
\end{align}
Now, we claim that $(T_n)_n$ is exactly the sequence of projectors we are looking for, namely, if we set $\Pi_n\coloneqq T_n$ we get~\eqref{cond1} and the exponential growth of $\dim \mathcal{K}_n={\rm rk}\, T_n$ is upper bounded by $S(\rho)$. Indeed, by~\eqref{eq:T_epsilon},
\bb
    \liminf_{n\to\infty}\Tr [T_n \rho_n] \geq 1 - \frac{\epsilon}{2},
\ee
and by~\eqref{eq:ranT} combined with~\eqref{eq:errors}, we get
\bb
    R_\epsilon =\limsup_{n\to\infty}\frac{\log\dim\mathcal{K}_n}{n} \leq \log d+\limsup_{n\to\infty}\frac{1}{n}\log \Tr[E_n\tau^{\otimes n}] \leq \log d -D(\rho\|\tau)=S(\rho).\label{eq:dc-ach}
\ee
This means that $S(\rho)$ is universally achievable by weakly almost i.i.d.\ sources along $\rho$, with asymptotic fidelity at least $1-\epsilon$. For i.i.d.\ quantum sources, Schumacher compression is known to satisfy a strong converse:
compression below the von Neumann entropy rate forces the entanglement fidelity to vanish asymptotically. Equivalently, for an i.i.d.\ quantum source given by $\rho$, 
$R^{\epsilon,\ast}(\rho)\geq S(\rho)$ for any $\varepsilon \in (0,1)$, where $R^{\epsilon,\ast}(\rho)$ denotes the minimal rate of data compression that can be achieved with asymptotic entanglement fidelity of at least $1-\epsilon$ (see e.g.~\cite{Sharma2014} and references therein). This also ensures that 
\bb R^{\epsilon,\ast}_{w} (\rho)\geq S(\rho),
\ee
where $R^{\epsilon,\ast}_{w} (\rho)$ is defined through~\eqref{eq:min_rate}.
This is because the i.i.d.\ source $\rho$ (i.e.\ the sequence of i.i.d.\ states $(\rho^{\otimes n})_n$) itself belongs to the class of weakly almost i.i.d.\ sources along $\rho$, and the compression schemes considered are universal for the entire class.
Hence, together with~\eqref{eq:dc-ach} this implies that 
\bb
R^{\epsilon,\ast}_{w}(\rho)=S(\rho),
\ee
which concludes the proof.
\end{proof}

\section{Robustness of classical information transmission with quantum channels}\label{sec:channels}

In order to study the robustness of classical information communication via quantum channels, we need to extend the definition of Section~\ref{sec:intro_reliable} to sequences of possibly non-i.i.d.\ channels. For $n\geq 1$, let $\tilde{\pazocal{N}}=(\tilde{\pazocal{N}}^{(n)})_n$ be a sequence of quantum channels $\tilde{\pazocal{N}}^{(n)}_{A^nB^n}$ mapping the states on 
$\pazocal{H}_A^{\otimes n}$ into states 
on $\pazocal{H}_B^{\otimes n}$. The definition given in~\eqref{eq:c_epsilon} can now be generalised,
{\errata{for any fixed $\varepsilon \in (0,1)$, to}}
\bb
    C_\epsilon(\tilde{\pazocal{N}})\coloneqq\sup\Big\{r\geq 0 :&&\!\! \limsup_{n\to\infty}\inf_{\substack{\pazocal{C}_n \text{ code}\\ \text{of size } \lceil 2^{rn}\rceil}}p_{\rm err}\big(\pazocal{C}_n,\tilde{\pazocal{N}}^{(n)}\big)\leq \epsilon\Big\}
\ee
We then define the classical capacity $C(\tilde{\pazocal{N}})$ of the sequence of channels $\tilde{\pazocal{N}}$ as
\bb
    C(\tilde{\pazocal{N}})\coloneqq \lim_{\epsilon\to 0}C_\epsilon(\tilde{\pazocal{N}}),
\ee
generalising the i.i.d.\ case of~\eqref{eq:c}.
Finally, for all rates below $C(\tilde{\pazocal{N}})$, the reliability function can be defined as
\bb
    E(r,\pazocal{N})\coloneqq\liminf_{n\to\infty}-\frac{1}{n}\log \inf_{\substack{\pazocal{C}_n \text{ code}\\ \text{of size } \lceil 2^{rn}\rceil}}p_{\rm err}\big(\pazocal{C}_n,\tilde{\pazocal{N}}^{(n)}\big).
\ee

The following theorem establishes the robustness of the classical capacity for almost i.i.d.\ processes.

\begin{boxedthm}{}
\begin{thm}[(Robustness of the classical capacity)]\label{thm:channels} Let us suppose that a sequence of quantum channels 
\bb
\tilde{\pazocal{N}}=(\tilde{\pazocal{N}}^{(n)})_{n\geq 1},\qquad \text{with}\qquad \tilde{\pazocal{N}}^{(n)}:\mathcal{D}(\mathcal{H}_A^{\otimes n})\to \mathcal{D}(\mathcal{H}_B^{\otimes n}),
\ee
is an almost i.i.d.\ process along a quantum channel \mbox{$\pazocal{N}:\mathcal{D}(\mathcal{H}_A)\to \mathcal{D}(\mathcal{H}_B)$}, i.e. 
\bb
\big(\tilde{\pazocal{N}}^{(n)}\big)_n \xrightharpoonup{\clubsuit} \pazocal{N}.
\ee
Then, the unassisted classical capacity of $\tilde{\pazocal{N}}$ exactly equals the unassisted classical capacity of $\pazocal{N}$.
    \bb
        C(\tilde{\pazocal{N}})= C(\pazocal{N}).
    \ee

\errata{
In general,} given any arbitrary quantum channel $\pazocal{N}_{A\to B}$, there exists a universal sequence of codes $(\pazocal{E}_n,\pazocal{D}_n)_{n\geq 1}$ {\errata{using which one can transmit information at a rate $C(\pazocal{N})$ with asymptotically vanishing error, for any arbitrary almost i.i.d.\ process $\tilde{\pazocal{N}}$ along $\pazocal{N}$.}}
\end{thm}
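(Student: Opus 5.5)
The plan has two parts: a converse $C(\tilde{\pazocal{N}})\le C(\pazocal{N})$ through entropy continuity, and a universal achievability scheme showing $C(\tilde{\pazocal{N}})\ge C(\pazocal{N})$ using a decoder that tolerates Hamming errors. Together they give the equality and the universality claim.

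\textbf{Converse.} By the standard Fano argument, any sequence of codes of rate $r$ with vanishing error on $\tilde{\pazocal{N}}$ forces
\[
r\le \liminf_n \tfrac1n \chi\big(\tilde{\pazocal{N}}^{(n)}\big).
\]
For a cq ensemble $\{P(x),\rho_x\}$ on $A^n$ we write
\[
I(X:B^n)=S\Big(\sum_x P(x)\tilde{\pazocal{N}}^{(n)}(\rho_x)\Big)-\sum_x P(x)\,S\big(\tilde{\pazocal{N}}^{(n)}(\rho_x)\big),
\]
and compare it with the same expression for $\pazocal{N}^{\otimes n}$. Each input, including the average input $\sum_x P(x)\rho_x$, satisfies $\|\tilde{\pazocal{N}}^{(n)}(\omega)-\pazocal{N}^{\otimes n}(\omega)\|_{W_1}\le \|\tilde{\pazocal{N}}^{(n)}-\pazocal{N}^{\otimes n}\|_\clubsuit=o(n)$. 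Theorem~\ref{thm:SW} then changes each entropy by at most $n f(w_n)$, where $w_n\to0$ and $f(w)=h_2(w)+w\ln(d_B^2-1)$. This gives
\[
\tfrac1n\chi\big(\tilde{\pazocal{N}}^{(n)}\big)\le \tfrac1n\chi\big(\pazocal{N}^{\otimes n}\big)+2f(w_n)\le C(\pazocal{N})+o(1),
\]
using superadditivity, i.e.\ $\frac1n\chi(\pazocal{N}^{\otimes n})\le\chi^\infty(\pazocal{N})$.

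\textbf{Achievability, reduction to a classical channel.} Fix a block length $h$, an input ensemble $\{P(x),\rho_x\}$ on $A^h$, and a measurement $\MM$ on $B^h$. Let $W(y|x)=\Tr[\MM_y\,\pazocal{N}^{\otimes h}(\rho_x)]$, a classical channel on $h$-blocks. We only use codewords that are products of block states, decode with $\MM^{\otimes m}$ followed by a classical decoder, and take $n=mh$ (padding the remainder). Regard $B^n$ as $m$ parties $B^h$. The $W_1$ norm with respect to this coarser partition is at most that on $n$ sites, since neighbouring states for one site are neighbouring for its block. $\MM^{\otimes m}$ acts blockwise, so by contractivity under local channels and Proposition~\ref{prop:cl} the classical output distributions of the true and i.i.d.\ channels satisfy $W_1\big(\tilde W_m(\cdot|x^m),W^{\times m}(\cdot|x^m)\big)=o(m)$ uniformly in $x^m$. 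Hence there is a coupling of $\tilde Y^m$ and $Y^m$ under which, by Markov's inequality, $d_H(\tilde Y^m,Y^m)\le\delta m$ with probability $\to1$ for every fixed $\delta>0$.

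\textbf{Achievability, Hamming-robust code.} Start from a typicality decoder for $W^{\times m}$ with decoding sets $\mathcal{D}_{m'}$, the sequences jointly typical with $x^m(m')$. Enlarge them to their Hamming blow-ups $B_{2\delta}(\mathcal{D}_{m'})$ and decode to the unique $m'$ whose blow-up contains $y^m$. The number of strings within distance $2\delta m$ of a given one is $2^{m(h_2(2\delta)+2\delta\log|\mathcal{Y}|)}$. Hence, for a random codebook drawn from $P^{\times m}$, the i.i.d.\ quantity
\[
\sum_{m'\ne m}W^{\times m}\big(B_{2\delta}(\mathcal{D}_{m'})\,\big|\,x^m(m)\big)
\]
is at most $2^{-m(I(P,W)-\eta-h_2(2\delta)-2\delta\log|\mathcal{Y}|)}$ on average. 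We fix, by expurgation, a codebook for which this is small for every message. Crucially, this property refers only to $W$ and not to $\tilde{\pazocal{N}}$, so the code is universal.

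For the true channel, the coupling gives
\[
\Pr\big[\tilde Y\notin B_{2\delta}(\mathcal{D}_m)\big]\le \Pr\big[Y\notin\mathcal{D}_m\big]+\Pr\big[d_H>\delta m\big]\to0.
\]
Likewise, $\Pr[\tilde Y\in B_{2\delta}(\mathcal{D}_{m'})]$ is bounded by the i.i.d.\ probability of the $4\delta$-blow-up plus $o(1)$. So we should use $4\delta$ in the codebook condition; this costs only an extra $O(h_2(4\delta)+\delta)$ in rate. This yields rate $\frac1h I(P,W)-O(\delta)$ for all almost i.i.d.\ processes at once. Letting $\delta\to0$, optimizing over $\MM$ (including product measurements over sub-blocks, which by HSW makes $\sup_{P,\MM}\frac1h I$ approach $\frac1h\chi(\pazocal{N}^{\otimes h})$), and letting $h\to\infty$ gives $C(\pazocal{N})$.

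The main obstacle is the control of the wrong-message error under $\tilde{\pazocal{N}}$. Remark~\ref{rem:trace} shows that perturbed alternative hypotheses are fragile, so a Stein-type decoder cannot be used directly. The Hamming blow-up trick is how I would circumvent this, since it reduces everything to i.i.d.\ probabilities plus a coupling error. A secondary point to check carefully is that the $o(m)$ transport bound holds uniformly over codewords, which is exactly what the supremum in the club norm provides.
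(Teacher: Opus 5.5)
Your proposal is correct. It follows a genuinely different route from the paper for the robust classical code, and matches the paper elsewhere.

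\textbf{Shared parts.} Your converse is the paper's: a Holevo/Fano bound plus the $W_1$-continuity of Theorem~\ref{thm:SW}, made uniform over inputs by the supremum in the club norm. Applying continuity to the average output and to each conditional output avoids the paper's need to put a $W_1$ structure on a joint $X^nB^n$ state. The achievability reduction is also the paper's. You coarse-grain into blocks (the first inequality of Lemma~\ref{lem:kr}), use contractivity under the blockwise measurement and Proposition~\ref{prop:cl}, and obtain a Hamming coupling with $o(m)$ expected distance uniformly in the codeword, followed by Markov.

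\textbf{Where you differ.} The real departure is the robust classical code. The paper smooths the channel to $\pazocal{W}_\delta=\pazocal{W}+\delta$, so that $d_H$ flipped symbols change each likelihood ratio by at most $((1+\delta)/\delta)^{2d_H}$. It feeds this into Gallager's random-coding bound to get error at most $e^{2\rho n\eta\log\frac{1+\delta}{\delta}}\bar p^\ast_{{\rm err},\delta}+p_\eta$. It then needs Lemma~\ref{lem:R_delta}, the continuity of the modified Gallager exponent as $\delta\to0$, to recover $C(\pazocal{W})$. You instead blow up joint-typicality decoding sets by a Hamming radius and pay through a ball-volume union bound. The resulting rate loss $h_2(c\delta)+c\delta\log|\mathcal{Y}|$ visibly vanishes as $\delta\to0$. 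This is more elementary and needs no smoothing or exponent-continuity argument. The paper's route, in exchange, gives explicit error exponents and the (a)+(b) split it reuses in the reliability-function discussion. You also merge the paper's two lifting steps (classical $\to$ cq $\to$ quantum) into one choice of ensemble on $A^h$ and measurement on $B^h$. Finally, you handle lengths that are not multiples of $h$ by padding, using that partial traces do not increase $W_1$; the paper passes over this.

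\textbf{Details to tighten.}
\begin{itemize}
\item Your displayed random-coding bound is missing the factor $M\approx 2^{mR}$. The exponent should read $I(P,W)-R-\eta-h_2(\cdot)-\cdots$.
\item The wrong-message error must be handled at the level of events:
\[
\{\exists m'\neq m:\tilde Y\in B_{2\delta}(\mathcal{D}_{m'})\}\subseteq\{d_H(\tilde Y,Y)>\delta m\}\cup\{\exists m'\neq m: Y\in B_{3\delta}(\mathcal{D}_{m'})\}.
\]
This way the coupling error is paid only once. Bounding each $m'$ separately by ``i.i.d.\ $+\,o(1)$'' and summing over exponentially many messages would fail.
\item For fixed $h$, $\sup_{P,\MM} I$ is an accessible-information quantity that generally stays strictly below $\chi(\pazocal{N}^{\otimes h})$, so it does not ``approach'' it. What you need, and what the paper proves via Fano, is the following. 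Take $h=k\ell$, use product codewords over the $k$-sub-blocks, and use the HSW decoder as the block measurement; this is a collective measurement, not a product one. Then $\frac1h I\ge\frac1k\chi(\pazocal{N}^{\otimes k}\circ\pazocal{E}_k)-o(1)$ as $\ell\to\infty$.
\end{itemize}
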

\end{boxedthm}

\begin{proof}
\errata{In Section~\ref{sec:converse} we prove the converse inequality $C(
\tilde{\pazocal{N}}) \leq C(\pazocal{N})$. In Section~\ref{sec:achivability_quantum}, we prove the matching achievability result by constructing codes whose communication rates converge to $C(\pazocal{N})$ while the error probability vanishes asymptotically. The construction relies on a preliminary analysis of the fully classical case given in Section~\ref{sec:achievability_classical}. This concludes the proof.}
\end{proof}

\subsection{The entropic converse}\label{sec:converse}

\begin{prop}\label{prop:converse} Let $\pazocal{N}$ be a classical-quantum channel and let $\tilde{\pazocal{N}}=\big(\tilde{\pazocal{N}}^{(n)}\big)_n$ be an almost i.i.d.\ process along $\pazocal{N}$. Then,
    \bb
        C(\tilde{\pazocal{N}})\leq C(\pazocal{N}).
    \ee
\end{prop}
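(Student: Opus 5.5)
We prove the converse by a Fano-type argument in which the mutual information across the perturbed channel is compared with that across the i.i.d.\ channel using the continuity of the von Neumann entropy with respect to the normalised $W_1$ distance (Theorem~\ref{thm:SW}). The club norm is defined on single input states, and a classical-quantum channel only ever sees inputs $x^n$, so this unassisted, unstabilised distance suffices here.

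Fix a sequence of codes $\pazocal{C}_n=(\pazocal{E}_n,\pazocal{D}_n)$ of size $M_n=\lceil 2^{rn}\rceil$ with $\limsup_n p_{\rm err}(\pazocal{C}_n,\tilde{\pazocal{N}}^{(n)})\le\epsilon$. Let $\pazocal{N}$ be the cq channel $x\mapsto\rho_x$. The encoder maps each message $m$ to a (possibly mixed) input, which we may take to be a distribution over strings $x^n$. Consider the classical-quantum state
\[
\omega_{MB^n}=\frac1{M_n}\sum_m \ketbra{m}\otimes\tilde{\pazocal{N}}^{(n)}(\pazocal{E}_n(m)).
\]
By data processing and Fano's inequality,
\[
(1-\epsilon_n)\log M_n - 1\le I(M:B^n)_{\omega}.
\]
Define $\omega'_{MB^n}$ analogously, but with $\pazocal{N}^{\otimes n}$ in place of $\tilde{\pazocal{N}}^{(n)}$.

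We then write $I(M:B^n)=S(B^n)-\frac1{M_n}\sum_m S(B^n|M=m)$ for both states. For each $m$, the output states $\tilde{\pazocal{N}}^{(n)}(\pazocal{E}_n(m))$ and $\pazocal{N}^{\otimes n}(\pazocal{E}_n(m))$ are at $W_1$ distance at most $\|\tilde{\pazocal{N}}^{(n)}-\pazocal{N}^{\otimes n}\|_\clubsuit\eqqcolon n w_n$. By convexity of the $W_1$ norm, the averaged outputs are also within $n w_n$ of each other. Theorem~\ref{thm:SW} then gives
\[
\frac1n\big|I(M:B^n)_\omega-I(M:B^n)_{\omega'}\big|\le 2\big(h_2(w_n)+w_n\ln(d_B^2-1)\big),
\]
where we use that $h_2$ is increasing on $[0,1/2]$, since $w_n\to0$ by the almost i.i.d.\ assumption. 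Finally, $I(M:B^n)_{\omega'}\le\chi(\pazocal{N}^{\otimes n})=n\chi(\pazocal{N})=nC(\pazocal{N})$ by additivity of the Holevo quantity for cq channels. More generally, $I(M:B^n)_{\omega'}\le\chi(\pazocal{N}^{\otimes n})\le nC(\pazocal{N})$ holds for any channel, by the HSW regularisation, since $\chi(\pazocal{N}^{\otimes n})/n\le\chi^\infty$ by superadditivity.

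Dividing by $n$ and taking $n\to\infty$ yields $(1-\epsilon)r\le C(\pazocal{N})$. Hence $C_\epsilon(\tilde{\pazocal{N}})\le C(\pazocal{N})/(1-\epsilon)$, and letting $\epsilon\to0$ gives the claim.

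The main obstacle is the entropy-continuity step. We must check that the club-norm bound, which is stated for input states of $A^n$, controls each conditional output uniformly in $m$, including mixed encodings; the supremum over all input states handles this. We must also take care with the dimension factor $d_B^2-1$ and with the monotonicity of $h_2$, which requires $w_n\le 1/2$ for large $n$. Everything else is the standard weak-converse bookkeeping.
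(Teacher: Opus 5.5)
Your proposal is correct and follows essentially the same route as the paper. It combines a weak-converse (Fano/Holevo-type) bound for $\tilde{\pazocal{N}}^{(n)}$ with the $W_1$-continuity of the von Neumann entropy (Theorem~\ref{thm:SW}), fed by the club-norm bound, to compare the perturbed and i.i.d.\ mutual informations, and then uses additivity/HSW to identify $\frac1n\chi(\pazocal{N}^{\otimes n})$ with $C(\pazocal{N})$. The only difference is minor: you write $I(M:B^n)=S(B^n)-\frac{1}{M_n}\sum_m S(B^n|M=m)$ and apply the continuity bound message by message on $B^n$ alone. The paper instead writes $I=S(B^n)+S(X^n)-S(X^nB^n)$ and applies the bound to the joint state on $X^nB^n$, which requires treating the classical register site-wise as part of the local subsystems; your version avoids this.
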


\begin{proof}
Recall that for a quantum channel $\Lambda_{A\to B}$, the Holevo information is defined as 
\bb
\chi(\Lambda) \coloneqq \sup_{\rho_{XA}} I(X:B)_{{\rho}'}, 
\ee
where the supremum is over classical-quantum states $\rho_{XA} = \sum_x P_X(x) \ketbra{x} \otimes \rho_x^A$, and
\bb
\rho_{XB}' \coloneqq ({\rm Id}_X\otimes\Lambda_{A\to B})(\rho_{XA}) = \sum_x P_X(x) \ketbra{x} \otimes \Lambda_{A\to B}\big(\rho_x^A\big) .
\ee
It is well known (see e.g.~\cite[Eq.~(20.56)--(20.65)]{MARK}) that, for any fixed $\varepsilon\in (0,1)$, the number of classical messages $M$ that can be transmitted via the channel $\Lambda$ with average error probability $\epsilon$ is upper bounded as
\bb
    \log M \leq\frac{\chi(\Lambda)+g(\epsilon)}{1-\epsilon},
\ee
where $g(\epsilon)\coloneqq (\epsilon+1)\log(\epsilon+1)-\epsilon\log\epsilon$. Then, we have
\bb\label{eq:152}
    C_\epsilon(\pazocal{\tilde{\pazocal{N}}})\leq \frac{1}{1-\epsilon}\liminf_{n\to\infty} \frac 1n \chi(\tilde{\pazocal{N}}^{(n)})\eqt{(a)} \errata{\liminf_{n\to\infty}}\frac 1n \chi(\pazocal{N}^{\otimes n})=C(\pazocal{N}),
\ee
where (a) follows from Theorem~\ref{thm:SW}. Indeed, calling
\bb
    \tilde\rho_{X^nB^n}'&\coloneqq \sum_{x^n}p(x^n)\ketbra{x^n}\otimes\tilde{\pazocal{N}}^{(n)}\big(\rho^{A^n}_{x^n}\big),\\
    \rho_{X^nB^n}'&\coloneqq \sum_{x^n}p(x^n)\ketbra{x^n}\otimes \pazocal{N}^{\otimes n}\big(\rho^{A^n}_{x^n}\big),
\ee
\errata{which satisfy $\tilde \rho_{X^n}'=\rho_{X^n}'$,}
we have
\bb
    I(X^n:B^n)_{\tilde \rho'} &= S(\tilde \rho_{B^n}')+S(\tilde \rho_{X^n}')-S(\tilde\rho_{X^nB^n}')\\
    &\leq S(\rho_{B^n}')+S(\rho_{X^n}')-S(\rho_{X^nB^n}')+2n f_d\left(\frac{w_n}{n}\right)=I(X^n:B^n)_{\rho'}+o(n),
\ee
with \errata{$f_d(x)\coloneqq -x\log x-(1-x)\log(1-x) + x\ln\left(\left(\dim\mathcal{H}_{XB}\right)^2-1\right)$} and
\bb
    \|\tilde \rho_{B^n}' - \rho_{B^n}'\|_{W_1}\leq \|\tilde \rho_{X^nB^n}' - \rho_{X^nB^n}'\|_{W_1} \leq \big\|\tilde{\pazocal{W}}^{(n)}-\pazocal{W}\big\|_{\,\clubsuit}\eqqcolon w_n.
\ee
Taking the limit $\epsilon\to 0^+$ in~\eqref{eq:152} completes the proof.
\end{proof}

\errata{
\begin{rem}
[(No matching converse bound for weakly converging output)] Having in mind the notion of weakly almost i.i.d. sources, we may be interested in its extension to channels, i.e.\ we would want to require
\bb \lim_{n\to\infty}\sup_{\rho_n}\EE{\substack{I\subseteq[n]\\|I|=k}}\Big\|\Tr_{I^c}\big[(\tilde{\pazocal{N}}^{(n)}(\rho_n)\big]-\Tr_{I^c}\big[\pazocal{N}^{\otimes n}(\rho_n)\big]\Big\|_1=0\qquad \forall k \geq 1.
\ee
One might wonder whether the converse bound of Proposition~\ref{prop:converse} still holds under the above assumption. The answer is negative, as shown by the following example.
\end{rem}}
\begin{prop}
    For any $n\in\mathbb{N}$, let us consider the classical channel that maps the input bit string $x\in\{0,1\}^{2n}$ to the following random bit string $Y\in\{0,1\}^{2n}$:
    \begin{equation}\label{eq:159}
        Y_{2i-1} = x_{2i-1} + Z_i\,,\qquad Y_{2i} = x_{2i} + Z_i\,,\qquad i\in[n]\,,
    \end{equation}
    where {\errata{ $Z\coloneqq (Z_1, \ldots, Z_n)$}} is a uniformly distributed string in $\{0,1\}^n$.
    Then, the channel is weakly almost-i.i.d. along the completely depolarizing channel, which has zero capacity, but its capacity is at least $\frac{1}{2}\log2$.
\end{prop}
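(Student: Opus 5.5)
The plan is to check the two claims separately: first that the channel is weakly almost i.i.d.\ along the completely depolarizing channel, uniformly over inputs; then that a simple zero-error code achieves rate $\tfrac12\log 2$. Throughout, let $\pazocal{W}^{(2n)}$ denote the channel in~\eqref{eq:159}, and let $\pazocal{R}$ be the completely depolarizing (replacement) channel on a bit, $\pazocal{R}(\cdot)=\Tr[\cdot]\,\tfrac{\id}{2}$. Then $\pazocal{R}^{\otimes 2n}(\rho_{2n})$ is the uniform distribution $U_{2n}$ on $\{0,1\}^{2n}$ for every input $\rho_{2n}$.

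\textbf{Weak almost i.i.d.-ness.} Since $\pazocal{W}^{(2n)}$ is classical, any quantum input acts only through its diagonal. So it suffices to take an arbitrary input distribution $p$ on $\{0,1\}^{2n}$ with $X\sim p$ independent of $Z$.

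1. Fix $k$ and a subset $I\subseteq[2n]$ with $|I|=k$. Call $I$ \emph{good} if it contains no pair $\{2i-1,2i\}$.
2. For good $I$, the coordinates $(Y_j)_{j\in I}$ equal $X_j+Z_{\lceil j/2\rceil}$ with pairwise distinct indices $\lceil j/2\rceil$. Conditioned on $X$, these are independent uniform bits, because the corresponding $Z$'s are i.i.d.\ uniform and independent of $X$. Hence $\Tr_{I^c}\pazocal{W}^{(2n)}(p)=U_I=\Tr_{I^c}\pazocal{R}^{\otimes 2n}(p)$ exactly.
3. For bad $I$, bound the trace distance trivially by $2$.
4. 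A union bound over the $\binom{k}{2}$ pairs of positions in $I$ shows that $\Pr[I \text{ bad}]\le \binom{k}{2}\frac{1}{2n-1}$. This is because each fixed pair of positions in a uniformly random $k$-subset is a specific pair $\{2i-1,2i\}$ with probability $1/(2n-1)$.
5. Therefore the expected distance is at most $k^2/(2n-1)\to0$. This bound is independent of the input, so it survives the supremum over $\rho_n$ in the definition from the preceding remark.

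Odd block lengths can be handled by appending one extra use of $\pazocal{R}$ (or by padding), which changes nothing asymptotically. The completely depolarizing channel has output independent of its input, so $\chi(\pazocal{R}^{\otimes m})=0$ and its capacity is $0$.

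\textbf{Capacity lower bound.} Adding the two equations in~\eqref{eq:159} modulo $2$ gives $Y_{2i-1}\oplus Y_{2i}=x_{2i-1}\oplus x_{2i}$, which is noiseless.

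1. \emph{Encoder:} map a message $b\in\{0,1\}^n$ to $x_{2i-1}=b_i$, $x_{2i}=0$.
2. \emph{Decoder:} output $b_i=Y_{2i-1}\oplus Y_{2i}$.
3. This gives a zero-error code of size $2^n$ over $2n$ channel uses, i.e.\ rate $\tfrac12$ bit $=\tfrac12\log 2$ per use.
4. For a general block length $m$, use $\lfloor m/2\rfloor$ pairs and ignore the leftover use. The resulting size is $2^{\lfloor m/2\rfloor}\ge\lceil 2^{rm}\rceil$ for any $r<\tfrac12$ and $m$ large, with error exactly $0$. Hence $C_\epsilon\ge\tfrac12\log2$ for every $\epsilon$, and so $C\ge \tfrac12\log 2$.

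\textbf{Main obstacle.} The only delicate point is step 2 of the first part: the marginal on a good $I$ must be exactly uniform for \emph{every} (possibly correlated) input distribution. This follows from the independence of $Z$ from $X$, together with the fact that adding an independent uniform bit modulo $2$ randomizes any bit. The counting estimate in step 4 is routine.
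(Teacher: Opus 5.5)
Your proposal is correct and follows essentially the same route as the paper. Both arguments use exact uniformity of the $k$-marginals on subsets containing no pair $\{2i-1,2i\}$, then show that the probability of hitting such a pair vanishes (you use the union bound $\binom{k}{2}/(2n-1)$, where the paper computes the exact product $\prod_{j=1}^{k}\frac{2(n-j+1)}{2n-j+1}$), and both use the same zero-error parity code $x_{2i-1}=w_i$, $x_{2i}=0$, decoded by $Y_{2i-1}\oplus Y_{2i}$. Your explicit treatment of arbitrary correlated inputs, the supremum over input states, and odd block lengths fills in details the paper leaves implicit.
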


\begin{proof}
Let us fix $k\in\mathbb{N}$.
The marginal of the output of the channel on a uniformly random subset of $[2n]$ of size $k$ is the uniform distribution on $\{0,1\}^k$ provided that the subset does not contain any 
pair of bits of the form $\{2i-1,\,2i\}$.
This happens with probability
\begin{equation}
    \frac{2\,n}{2\,n}\,\frac{2\left(n-1\right)}{2\,n-1}\ldots\frac{2\left(n-k+1\right)}{2\,n-k+1}\,,
\end{equation}
which tends to $1$ for $n\to\infty$.
\errata{Therefore, the sequence of channels defined through~\eqref{eq:159} constitutes a weakly almost i.i.d.\ process along the completely depolarizing channel.}

Let us show that the channel can send a message $w\in\{0,1\}^n$ with zero error probability.
Let us encode $w$ in $x\in\{0,1\}^{\errata{2n}}$ with
\begin{equation}
    x_{2i-1} = w_i\,,\qquad x_{2i} = 0\,,\qquad i\in[n]\,.
\end{equation}
We have
    \begin{equation}
        Y_{2i-1} = w_i + Z_i\,,\qquad Y_{2i} = Z_i\,,\qquad i\in[n]\,,
    \end{equation}
    hence the message can be recovered with
    \begin{equation}
        w_i = Y_{2i-1} + Y_{2i}\,.
    \end{equation}
    It follows that the capacity of the channel is at least $\frac{1}{2}\log2$.
\end{proof}

\subsection{Achievability in the classical case}\label{sec:achievability_classical}
Let $\pazocal{W}$ be a discrete memoryless channel from $\mathcal{X}$ to $\mathcal{Y}$. Let $\tilde{\pazocal{W}}$ be the sequence of channels
\bb
    \tilde{\pazocal{W}}\coloneqq\big(\tilde{\pazocal{W}}^{(n)}\big)_{n\geq 1},
\ee
where $\tilde{\pazocal{W}}^{(n)}$ is a discrete channel from $\mathcal{X}^n$ to $\mathcal{Y}^n$. We say that $(\pazocal{E},\pazocal{D})$ is a $(M,\epsilon)$-code for $\pazocal{W}$ if
\bb
    \frac{1}{M}\sum_{m=1}^{\lceil M\rceil}\mathbb{P}\big(m=\pazocal{D}\circ\pazocal{W}\circ\pazocal{E}(m)\big)\leq \epsilon.
\ee
The classical capacity $C(\pazocal{W})$ of $\pazocal{W}$ is defined as
\bb\label{eq:capacity_def}
    C(\pazocal{W})\coloneqq\sup\Big\{r\geq 0 :&&\!\!  \forall n \geq 1\; \exists\; (2^{nr}, \epsilon_n) \text{-code for } \pazocal{W}^{\times n},\;\;\\
    &&\text{ such that } \epsilon_n\to 0 \text{ as } n \to \infty\Big\}
\ee
The classical capacity $C(\tilde{\pazocal{W}})$ of the sequence of channels $\tilde{\pazocal{W}}$ is defined as
\bb\label{eq:capacity_def_2}
    C(\tilde{\pazocal{W}})\coloneqq\sup\Big\{r\geq 0 : &&\!\! \forall n \geq 1\; \exists\; ( 2^{nr}, \epsilon_n) \text{-code for } \tilde{\pazocal{W}}^{(n)},\;\;\\
    &&\text{ such that } \epsilon_n\to 0 \text{ as } n \to \infty\Big\}
\ee
Note that~\eqref{eq:capacity_def} is a particular case of~\eqref{eq:capacity_def_2} when considering -- with some abuse of notation -- the sequence $\pazocal{W}=\big(\pazocal{W}^{\times n}\big)_n$.

\subsubsection{A pictorial interpretation of almost i.i.d.\ classical processes}
The club distance for classical channels can be interpreted according to optimal transport formulation for the classical Wasserstein distance of order 1 based on the Hamming distance (see Section~\ref{sec:intro_W}):
\bb\label{eq:club_plan}
    \big\|\tilde{\pazocal{W}}^{(n)}-\pazocal{W}^{\times n}\big\|_{\, \clubsuit}=\max_{x^n}\min_{\pi_{x^n}}\EE{(\tilde Y^n,Y^n)\sim \pi_{x^n}}\big[d_H(\tilde Y^n,Y^n)\big]
\ee
where $d_H(\tilde y^n,y^n)$ represents the Hamming distance between $\tilde y^n$ and $y^n$, the maximisation runs over any $x^n\in\mathcal{X}^n$, and the minimum is taken over all the couplings $\pi_{x^n}:\mathcal{Y}^n\times \mathcal{Y}^n\to [0,1]$ of $\tilde{\pazocal{W}}^{(n)}(\,\cdot\,|x^n)$ and $\pazocal{W}^{\times n}(\,\cdot\,|x^n)$, namely
\bb\label{eq:marginals}
    \sum_{y^n\in\mathcal{Y}^n}\pi_{x^n}(\tilde y^n,y^n)=\tilde{\pazocal{W}}^{(n)}(\tilde y^n\,|x^n),\qquad \sum_{\tilde y^n\in\mathcal{Y}^n}\pi_{x^n}(\tilde y^n,y^n)=\pazocal{W}^{\times n}(y^n\,|x^n).
\ee
We can equivalently write the coupling in terms of a conditional distribution $\Phi_{x^n}$ as follows
\bb\label{eq:noise}
    \pi_{x^n}(\tilde y^n, y^n)=\Phi_{x^n}(\tilde y^n|y^n)\pazocal{W}^{\times n}(y^n\,|x^n),
\ee
i.e.
\bb\label{eq:phi_def}
    \Phi_{x^n}(\tilde y^n|y^n)\coloneqq \begin{cases}
        \frac{\pi_{x^n}(\tilde y^n, y^n)}{\pazocal{W}^{\times n}(y^n\,|x^n)}& \pazocal{W}^{\times n}(y^n\,|x^n)\neq 0,\\
        0 & \text{otherwise}.
    \end{cases}
\ee
Let $\pi^\ast_{x^n}$ be an optimal coupling in~\eqref{eq:club_plan}, yielding the optimal conditional distribution $\Phi_{x^n}^\ast$. Then, combining~\eqref{eq:marginals} with~\eqref{eq:noise}, we get
\bb\label{eq:interpr_noise}
    \tilde{\pazocal{W}}^{(n)}(\tilde y^n\,|x^n)=\sum_{y^n\in\mathcal{Y}^n} \Phi_{x^n}^\ast(\tilde y^n|y^n)\pazocal{W}^{\times n}(y^n\,|x^n)
\ee
and
\bb
    \EE{(\tilde Y^n,Y^n)\sim \pi^\ast_{x^n}}\big[d_H(\tilde Y^n,Y^n)\big]\leq \big\|\tilde{\pazocal{W}}^{(n)}-\pazocal{W}^{\times n}\big\|_{\, \clubsuit}\qquad \forall x^n\in\mathcal{X}^n.
\ee

\begin{figure}[t]
  \centering
  \def\svgwidth{0.8\linewidth}
  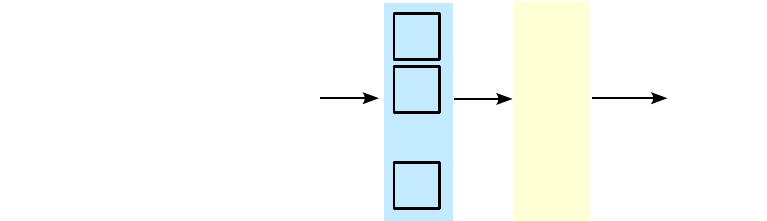
  \mycaption{Schematic interpretation of almost i.i.d.\ processes.}{Writing the definition of the club distance between an i.i.d.\ channel $\pazocal{W}^{\times n}$ and a non i.i.d.\ channel $\tilde{\pazocal{W}}^{(n)}$ in terms of couplings as in~\eqref{eq:club_plan}, it is possible to interpret the transformation of a random input $X^n$ to a random output \errata{$\tilde Y^n$} induced by $\tilde{\pazocal{W}}^{(n)}$ as a two-step transformation of $X^n$: first, the i.i.d.\ channel $\pazocal{W}^{\times n}$ is applied to $X^n$, yielding an output sequence $Y^n$; then, a further noisy channel $\Phi_{X^n}^\ast$, possibly depending on the input $X^n$, transforms $Y^n$ into $\tilde Y^n$ by modifying, on average, a number of bits that is upper bounded by the club distance between $\pazocal{W}^{\times n}$ and $\tilde{\pazocal{W}}^{(n)}$.} 
  \label{fig:coupling}
\end{figure}

The action of the channel $\tilde{\pazocal{W}}^{(n)}$ can therefore be interpreted as the composition of two channels: the use of the i.i.d.\ local channel $\pazocal{W}^{\times n}$ followed by the global noise $\Phi_{x^n}^\ast$ spoiling, on average, at most $\big\|\tilde{\pazocal{W}}^{(n)}-\pazocal{W}^{\times n}\big\|_{\, \clubsuit}$ symbols of the outcome of $\pazocal{W}^{\times n}$. In Figure~\ref{fig:coupling} we represent
\bb\label{eq:YY}
    X^n &:&& \text{random sequence in input, having distribution } P_{X^n};\\[0.3em]
    Y^n &:&& \text{random output of } \pazocal{W}^{\times n} \text{, when fed with } X^n;\\[0.3em]
    \tilde Y^n &:&& \text{random output of } \Phi_{X^n} \text{, when fed with } Y^n,\\
    & && \text{which has distribution } \tilde {\pazocal{W}}^{(n)}_{\tilde Y^n|X^n}P_{X^n};\\[0.3em]
     d_H(\tilde Y^n,Y^n) &:&& \text{number of symbols that get spoilt by } \Phi_{X^n}^\ast,\\
    & &&\text{which is upper bounded by } \big\|\tilde{\pazocal{W}}^{(n)}-\pazocal{W}^{\times n}\big\|_{\, \clubsuit} \text{ on average.}
\ee

\subsubsection[The $\delta$-smoothed maximum likelihood decoder]{The $\boldsymbol \delta$-smoothed maximum likelihood decoder}\label{sec:smoothed}

The aim of this section is to define a regularisation of the maximum likelihood decoder, and to prove that the supremum of the reliable communication rates that can be achieved with regularised codes is continuous as the smoothing parameter $\delta$ goes to zero. As a consequence, since for $\delta=0$ we retrieve the ordinary maximum likelihood decoder, the capacity of the channel can asymptotically be achieved by the regularised sequence of codes by taking the smoothing parameter to be arbitrarily small.
\begin{Def}
    Let $\pazocal{E}:[M]\to \mathcal{X}^n$ be any fixed deterministic encoder, and let \errata{$\pazocal{E}(m)= x^n_m=(x_{m}^{(1)},\dots, x_{m}^{(n)}$)}. For any $\delta\geq0$, we define the $\delta$\emph{-smoothed maximum likelihood decoder} for $\pazocal{W}^{\times n}$ corresponding to the encoding scheme $\pazocal{E}$ as 
    \bb\label{eq:arg_max}
        \pazocal{D}(y^n)\coloneqq \arg\max_{m\in[M]} \prod_{i=1}^{n}\pazocal{W}_\delta\big(y^{(i)}|x_m^{(i)}\big)
    \ee
    for all $y^n=(y^{(1)},\dots, y^{(n)})\in\mathcal{Y}^n$, where $\pazocal{W}_\delta(y|x)\coloneqq \pazocal{W}(y|x)+\delta$. If the maximiser in~\eqref{eq:arg_max} is not unique, then we arbitrarily define $\pazocal{D}(y^n)$ as any of those maximisers.
\end{Def}
The particular case $\delta=0$ corresponds to the ordinary maximum likelihood decoder. Choosing any discrete memoryless channel $\pazocal{W}$, we are now going to prove that the largest communication rate (with asymptotically vanishing error probability) that can be achieved using the $\delta$-smoothed maximum likelihood decoder gets arbitrarily close to the capacity of $C(\pazocal{W})$ as $\delta$ goes to zero.\\

Given $n\geq 1$ and a decoder $\pazocal{D}:\mathcal{Y}^n\to [M]$, we can partition $\mathcal{Y}^n$ into $M$ disjoint subsets $\{Y_m\}_{m\in [M]}$, called \emph{decoding regions}, whose items are the output sequences that are decoded according to the label of the subset, namely
\bb
    Y_m\coloneqq \pazocal{D}^{-1}(m)
\ee
and
\bb
    \mathcal{Y}^n=\bigsqcup_{m\in[M]}Y_m.
\ee

\begin{lemma}\label{lem:R_delta} Let $\pazocal{W}$ be a discrete memoryless channel from $\mathcal{X}$ to $\mathcal{Y}$. Let $r>0$. For every $n\geq 1$, we call $M_n\coloneqq \lceil 2^{rn}\rceil$. Then, for any fixed $\delta\geq 0$, there exists an sequence of encoders $(\pazocal{E}^\ast_n)_n$ such that
\bb\label{eq:R_delta}
        \lim_{\delta\to 0}R_\delta= C(\pazocal{W}),
    \ee
where
\begin{itemize}
    \item $R_\delta$ is the smallest rate at which the reliability function
\bb
    E_\delta(r,\pazocal{W})\coloneqq \liminf_{n\to \infty} -\frac{1}{n}\log\Bigg(\frac{1}{ M_n}\sum_{m=1}^{M_n} \mathbb{P}\big(m\neq \pazocal{D}_n\circ\pazocal{W}^{\times n}\circ\pazocal{E}^\ast_n(m)\big)\Bigg)
\ee
 for the sequence of codes $\pazocal{C}^\ast_n=(\pazocal{E}^\ast_n,\pazocal{D}_n)$ vanishes, i.e. $R_\delta\coloneqq \min\{r\geq 0: E_\delta(r,\pazocal{W})=0\}$;
 \item  $\pazocal{D}_n$ is the $\delta$-smoothed maximum likelihood decoder for $\pazocal{W}^{\times n}$ with encoder $\pazocal{E}^\ast_n$.
\end{itemize}

\end{lemma}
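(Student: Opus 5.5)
The plan is a random-coding argument in the style of Gallager, with the $\delta$-smoothed likelihood playing the role of the true one. Fix a capacity-achieving input distribution $P_X$, so that $I(P_X,\pazocal{W})=C(\pazocal{W})$, and draw the $M_n$ codewords i.i.d.\ from $P_X^{\times n}$. We then bound the ensemble-average error probability of the $\delta$-smoothed maximum likelihood decoder and derarandomise: some deterministic encoder $\pazocal{E}^\ast_n$ does at least as well as the average, and this is the encoder in the statement.

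\textbf{Step 1 (Gallager bound with a mismatched metric).} For $\rho\in[0,1]$ and $s>0$, treat a decoding error on message $m$ as the event that some $m'\neq m$ has $\pazocal{W}_\delta^{\times n}(y^n|x^n_{m'})\ge \pazocal{W}_\delta^{\times n}(y^n|x^n_m)$. The usual Chernoff/union bound then gives
\[
\bar p_{\rm err}\le 2^{-n\left[E_0^{\delta}(\rho,s,P_X)-\rho r\right]+o(n)},
\]
where
\[
E_0^\delta(\rho,s,P_X)\coloneqq-\log\sum_{y}\sum_x P_X(x)\,\pazocal{W}(y|x)\,\pazocal{W}_\delta(y|x)^{-s\rho}\Big(\sum_{x'}P_X(x')\,\pazocal{W}_\delta(y|x')^{s}\Big)^{\rho}.
\]
This is the standard generalized-mutual-information (mismatched decoding) exponent. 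Because everything is single-letter and the codewords are i.i.d., this is a routine computation. The output is $E_\delta(r,\pazocal{W})\ge \max_{\rho,s}\big[E_0^\delta(\rho,s,P_X)-\rho r\big]$. Hence $E_\delta(r,\pazocal{W})>0$ whenever $r<\partial_\rho E_0^\delta(0,s,P_X)$ for some $s$.

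\textbf{Step 2 (the derivative at $\rho=0$).} Compute
\[
\partial_\rho E_0^\delta\big|_{\rho=0}=I_{\rm GMI}^\delta(s)\coloneqq\sum_{x,y}P_X(x)\,\pazocal{W}(y|x)\log\frac{\pazocal{W}_\delta(y|x)^s}{\sum_{x'}P_X(x')\,\pazocal{W}_\delta(y|x')^s}.
\]
This gives $R_\delta\ge \sup_{s>0}I^\delta_{\rm GMI}(s)$. Conversely, we have $R_\delta\le C(\pazocal{W})$. If $E_\delta(r)>0$, the error vanishes, so the rate is achievable and therefore $r\le C(\pazocal{W})$. One small point needs checking: the definition of $R_\delta$ as a minimum should be read as the threshold where $E_\delta$ vanishes, and monotonicity in $r$ is clear.

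\textbf{Step 3 (continuity as $\delta\to0$), which I expect to be the main obstacle.} Take $s=1$. Where $\pazocal{W}(y|x)>0$, we have $\log \pazocal{W}_\delta(y|x)\to\log\pazocal{W}(y|x)$ and the denominators converge to $P_Y(y)>0$. Terms with $\pazocal{W}(y|x)=0$ carry zero weight in the outer sum, so the logarithm of $\delta$ never appears with positive weight. Since the alphabets are finite, $I^\delta_{\rm GMI}(1)\to I(P_X,\pazocal{W})=C(\pazocal{W})$.

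One subtlety needs care. The Step 1 bound must hold uniformly, so $\partial_\rho E_0^\delta$ must be controlled near $\rho=0$ without $\delta$-dependent blow-up at fixed $\delta$. This holds because $E_0^\delta(\cdot,1,P_X)$ is smooth and concave in $\rho$ on $[0,1]$ for each fixed $\delta>0$ (and for $\delta=0$ it is Gallager's function).

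Combining the steps gives $\liminf_{\delta\to0}R_\delta\ge C(\pazocal{W})\ge\limsup_{\delta\to 0}R_\delta$, which proves \eqref{eq:R_delta}.
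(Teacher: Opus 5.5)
Your proof is correct, but it takes the limit $\delta\to0$ by a different mechanism than the paper.

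After the same random-coding and $\xi\le\xi^\rho$ step, the paper relaxes the outer weight further, using $\pazocal{W}\le\pazocal{W}_\delta$, and sets $s=1/(1+\rho)$. This gives a ``modified Gallager function'' $E_G(\rho,P,\delta)$ of exactly Gallager's form, with $\pazocal{W}$ replaced by the unnormalised $\pazocal{W}_\delta$. Then $E_G(0,P,\delta)=-\log(1+\delta|\mathcal{Y}|)<0$, so no derivative-at-$\rho=0$ argument is available at fixed $\delta$. The paper instead shows that $E_G(\cdot,\cdot,\delta)\to E_G(\cdot,\cdot,0)$, and that this convergence passes through the maximisation over the compact set $[0,1]\times\mathcal{P}(\mathcal{X})$. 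Hence the whole lower-bound exponent curve $\bar E_\delta(r,\pazocal{W})$ tends to Gallager's random-coding exponent, whose threshold is $C(\pazocal{W})$, and $\bar R_\delta$ is sandwiched.

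You keep the true channel in the outer average, i.e.\ the mismatched-decoding exponent. Then $E_0^\delta(0,s,P_X)=0$ exactly, so you can differentiate at $\rho=0$ for each fixed $\delta$ and get the explicit single-letter bound $R_\delta\ge I^\delta_{\rm GMI}(1)$. The limit $\delta\to0$ is then taken in one elementary expression, and you work directly with $\liminf$ and $\limsup$.

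Your bound is tighter and quantitative: it gives a concrete achievable rate for every $\delta>0$. The paper's route instead gives convergence of the full exponent curve at every $r$, not only of its threshold. It is also in a form the paper reuses verbatim in Proposition~\ref{prop:achiev_cl}. Your exponent would work there equally well, since that proof only adds the factor $e^{2\rho n\eta\log\frac{1+\delta}{\delta}}$ and uses $\sum_{\tilde y^n}\Phi^\ast_{x^n_m}(\tilde y^n|y^n)\le1$.

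The ``subtlety'' at the end of your Step~3 is unnecessary. No uniformity in $\delta$ is needed, and you use only right-differentiability of $E_0^\delta(\cdot,1,P_X)$ at $\rho=0$, not concavity.
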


\begin{proof}
    The core idea of the proof is to adapt the strategy of the random coding bound to the reliability function. Let us fix \errata{$\delta> 0$} and $n\geq 1$; then, let $\{\pazocal{C}_n\}=\{(\pazocal{E}_n,\pazocal{D}_n)\}$ be the family of all encoding functions $\pazocal{E}_n:[M]\to \mathcal{X}^n$ together with their respective $\delta$-smoothed maximum likelihood decoder. Given any arbitrary $P_X\in\mathcal{P}(\mathcal{X})$, we define a corresponding probability distribution $Q_P$ on $\pazocal{C}_n=(\pazocal{E}_n,\pazocal{D}_n)$ given by
    \bb\label{eq:above}
        Q_P(\pazocal{C}_n)\coloneqq \prod_{m\in[M]}P_X^{\times n}\big(\pazocal{E}(m)\big)=\prod_{\substack{m\in[M]\\1\leq i \leq n}}P_X(x_m^{(i)}),
    \ee
    where $x_m^n=(x_m^{(1)},\dots, x_m^{(n)})\coloneqq\pazocal{E}_n(m)$. For every code $\pazocal{C}_n$, we call $\{Y_m^{\pazocal{C}_n}\}_{m\in [M]}$ the partition of $\mathcal{Y}^n$ into decoding regions according to the $\delta$-smoothed maximum likelihood decoder $\pazocal{D}_n$ for $\pazocal{W}^{\times n}$ associated with the encoding $\pazocal{E}_n$, namely $Y_m\coloneqq \pazocal{D}_n^{-1}(m)$ for all $m\in [M]$. When transmitting a message $m$ with a code $\pazocal{C}_n$ using a codeword $x_m^n=\pazocal{E}_n(m)$, the probability $p_{m\to m'|\,y^n,\,\pazocal{C}_n}$ of decoding $m'\neq m$ after observing $y^n$ as the output of the channel $\pazocal{W}^{\times n}$ is either $0$ (if $y^n\notin Y_{m'}^{\pazocal{C}_n}$) or $1$ (if $y^n\in Y_{m'}^{\pazocal{C}_n}$), since the decoder is a deterministic function. Hence, we can upper bound
        \bb\label{eq:upper_bound_MLD}
        p_{m\to m'\,|\,y^n,\,\pazocal{C}_n}=\errata{\id_{Y_{m'}^{\pazocal{C}_n}}}(y^n)\leq \left(\frac{\pazocal{W}_\delta^{\times n}(y^{n}|x_{m'}^n)}{\pazocal{W}_\delta^{\times n}(y^{n}|x_{m}^n)}\right)^s,
    \ee
    where $\id_S(\,\cdot\,)$ denotes the characteristic function of the set $S$, and the last upper bound holds for any arbitrary $s\geq 0$. Fo our purposes, it is sufficient to consider $s\in[0,1]$. The last inequality is a simple consequence of the very definition of the $\delta$-smoothed maximum likelihood decoder. The overall error probability for a fixed code when sending the message $m$ with a code $\pazocal{C}_n$ is
    \bb\label{eq:p_err_iid}
        p_{m\to {\rm err}\,|\,\pazocal{C}_n}\coloneqq \sum_{y^n}\pazocal{W}^{\times n}(y^n|x^n_m)p_{m\to {\rm err} \,|\,x_m^n,y^n}
    \ee
    where
    \bb
        p_{m\to {\rm err} \,|\,y^n,\, \pazocal{C}_n}\coloneqq \sum_{m'\neq m}p_{m\to m' |\,y^n,\, \pazocal{C}_n}
    \ee
    is the total error probability when transmitting a message $m$ and observing $y^n$ after the use of the channel. Its expectation value over the family of codes $\pazocal{C}_n$ and conditioned on the particular encoding $m\mapsto x^n_m=\pazocal{E}_n(m)$ of the message $m$ is upper bounded as
   \bb\label{eq:average}
        \EE{\pazocal{C}_n\sim Q_P}\big[p_{m\to {\rm err}\,|\, x_m^n,y^n}\,\big|\,\pazocal{E}_n(m)=x_m^n\big]&=\sum_{m'\neq m}\EE{\pazocal{C}_n\sim Q_P}\big[p_{m\to m'\,|\,x_m^n,y^n}\,\big|\,\pazocal{E}_n(m)=x_m^n\big]\\
        &\leqt{(i)} \sum_{m'\neq m}\underbrace{\sum_{x_{m'}^n\coloneqq \pazocal{E}_n(m')}P^{\times n}\big(x_{m'}^n\big)\left(\frac{\pazocal{W}_\delta^{\times n}(y^{n}|x_{m'}^n)}{\pazocal{W}_\delta^{\times n}(y^{n}|x_{m}^n)}\right)^s}_{\rm (a)}\\[0.5em]
        &\eqt{(ii)}(M-1)\sum_{x^n}P^{\times n}\big(x^n\big)\left(\frac{\pazocal{W}_\delta^{\times n}(y^{n}|x^n)}{\pazocal{W}_\delta^{\times n}(y^{n}|x_{m}^n)}\right)^s,
    \ee
    where in (i) we first have leveraged the upper bound~\eqref{eq:upper_bound_MLD} and then used the particular form of $Q_P$ as in~\eqref{eq:above}; in (ii) we have noticed that (a) does not depend on the particular message $m'$, but only on its encoding $x_{m'}^n$. Hence, the overall error probability when transmitting $m$ averaged \errata{over} the family of codes $\{\pazocal{C}_n\}$ is
    \begin{align*}
        \EE{\pazocal{C}_n\sim Q_P}\big[p_{m\to {\rm err}}\big]&=\sum_{y^n}\sum_{x^n_m}\PP{\pazocal{C}_n\sim Q_P}\big(\pazocal{E}_n(m)=x_m^n\big)\pazocal{W}^{\times n}(y^n|x^n_m)\underbrace{\EE{\pazocal{C}_n\sim Q_P}\big[p_{m\to {\rm err} \,|\,x_m^n,y^n}\,\big|\,\pazocal{E}_n(m)=x_m^n\big]}_{\eqcolon \xi}\\
        &\leqt{(iii)}\sum_{y^n}\sum_{x^n_m}P^{\times n}(x_m^n)\pazocal{W}^{\times n}(y^n|x^n_m)\left(\EE{\pazocal{E}_n\sim Q_P}\big[p_{m\to {\rm err}\,|\, x_m^n,y^n}\,\big|\,\pazocal{E}_n(m)=x_m^n\big]\right)^{\rho}\\
        &\leqt{(iv)} (M-1)^{\rho}\sum_{y^n}\left(\sum_{x^n_m}P^{\times n}(x_m^n)\pazocal{W}^{\times n}(y^n|x^n_m)\pazocal{W}_\delta^{\times n}(y^{n}|x^n_m)^{-\rho s}\right)\\
        &\qquad\qquad \qquad \qquad\times\left(\sum_{x^n}P^{\times n}\big(x^n\big)\pazocal{W}_\delta^{\times n}(y^{n}|x^n)^s\right)^\rho\\
        &\leqt{(v)} (M-1)^{\rho}\sum_{y^n}\left(\sum_{x^n_m}P^{\times n}(x_m^n)\pazocal{W}_\delta^{\times n}(y^n|x^n_m)^{\errata{1-\rho s}}\right)\left(\sum_{x^n}P^{\times n}\big(x^n\big)\pazocal{W}_\delta^{\times n}(y^{n}|x^n)^s\right)^\rho\\
        &\eqt{(vi)} (M-1)^{\rho}\sum_{y^n}\left(\sum_{x^n}P^{\times n}\big(x^n\big)\pazocal{W}_\delta(y^{n}|x^n)^{\frac{1}{1+\rho}}\right)^{1+\rho},
    \end{align*}
    where in
    (iii) we have used the elementary inequality $\xi\leq \xi^\rho$ for any arbitrary $0\leq\xi\leq 1$ and $0\leq \rho\leq 1$, with the usual convention $0^0=1$; in
    (iv) we have used~\eqref{eq:average}; in
    (v) we have noticed that $\pazocal{W}(y|x)\leq \pazocal{W}_\delta(y|x)$ for all $x,y$; in
    (vi) we have chosen $s=\frac{1}{1+\rho}\in[0,1]$.
    By the product structure of the upper bound, and setting $M=\lceil 2^{rn}\rceil$, we get
    \bb\label{eq:last}
    	\EE{\pazocal{C}_n\sim Q_P}\big[p_{m\to {\rm err}}\big]&\leq
    	2^{\rho r n}\prod_{i=1}^n\sum_{y^{(i)}}\left(\sum_{x^{(i)}}P\big(x^{(i)}\big){\errata{\pazocal{W}_\delta}}\big(y^{(i)}\big|x^{(i)})\big)^{\frac{1}{1+\rho}}\right)^{1+\rho}\\
    	&= 2^{\rho r n}\left(\sum_{y}\left(\sum_{x}P(x)\pazocal{W}_\delta(y|x)^{\frac{1}{1+\rho}}\right)^{1+\rho}\right)^n.
    \ee
    By the previous upper bound, there exists at least one code $\pazocal{C}_n^\ast=(\pazocal{E}_n^\ast,\pazocal{D}_n)$ which has (average) error probability
    \bb\label{eq:random_encoder}
    	p_{{\rm err},\delta}^\ast =\frac 1M\sum_{m=1}^Mp_{m\to {\rm err}}^\ast&\leq 2^{\rho r n}\left(\sum_{y}\left(\sum_{x}P(x){\errata{\pazocal{W}_\delta}}(y|x)^{\frac{1}{1+\rho}}\right)^{1+\rho}\right)^n\eqqcolon \bar p_{{\rm err},\delta}^\ast.
    \ee
    Therefore, given any rate $r>0$, the exponent $E_\delta(r,\pazocal W)$ at which the error probability of the sequence of codes $(\pazocal E_n^\ast, \pazocal D_n)$ decays is lower bounded by
    \bb\label{eq:modified_gallager}
    	E_\delta(r,\pazocal W)\geq \bar E_\delta(r,\pazocal W)\coloneqq \max_{0\leq \rho\leq 1}\max_{P_X}\Bigg(\underbrace{-\log \sum_{y}\left(\sum_{x}P(x)\pazocal{W}_\delta(y|x)^{\frac{1}{1+\rho}}\right)^{1+\rho}}_{\eqqcolon\; E_G(\rho,P,\delta)}\;-\;\rho r\Bigg),
    \ee
    where we call $E_G(\rho,P,\delta)$ the \emph{modified Gallager function}. Defining, for $\delta\geq 0 $,
    \bb
        R_\delta&\coloneqq \min\{r\geq 0: E_\delta(r,\pazocal{W})=0\},\\
        \bar R_\delta&\coloneqq \errata{\min\{r\geq 0: \bar E_\delta(r,\pazocal{W})\leq0\},}
    \ee
    we have
    \bb\label{eq:ineq_RR}
        \bar R_\delta\leq R_\delta \leq C(\pazocal{W})=R_0,
    \ee
    where the last inequality follows from the sphere packing bound, which coincides with the ordinary random coding exponent (i.e. $\delta=0$) for rates above a critical value. 
\errata{
Now, since \(\pazocal W_\delta(y|x)\to \pazocal W(y|x)\) as  \(\delta\to0\) for every \(x,y\), the modified Gallager function converges pointwise:
\[
E_G(\rho,P,\delta)\to E_G(\rho,P,0) \quad \text{as} \quad \delta \to 0.
\]
Moreover, since the alphabets are finite and the parameter sets \([0,1]\) and \(\mathcal P(\mathcal X)\) are compact, this convergence passes through the maximization, yielding
\[
\lim_{\delta\to0}\bar E_\delta(r,\pazocal W)=\bar E_0(r,\pazocal W).
\]

Let \(L\coloneqq\lim_{\delta\to0}\bar R_\delta\). If \(r>L\), then for all sufficiently small \(\delta>0\) one has \(r>\bar R_\delta\), and hence \(\bar E_\delta(r,\pazocal W)\le0\). Passing to the limit \(\delta\to0\) gives \(\bar E_0(r,\pazocal W)\le0\), which implies \(r\ge R_0\). Since this holds for every \(r>L\), we conclude that \(L\ge R_0\).
Combined with the previously established inequality \(L\le R_0\), this yields \(L=R_0\).}
\end{proof}

\subsubsection{The protocol for achievability}

\begin{prop}\label{prop:achiev_cl} Let $\pazocal{W}$ be a classical channel and let $\tilde{\pazocal{W}}=\big(\tilde{\pazocal{W}}^{(n)}\big)_n$ be an almost i.i.d.\ process along $\pazocal{W}$. Then,
    \bb
        C(\tilde{\pazocal{W}})\geq C(\pazocal{W}),
    \ee
    where the sequence of codes used to achieve the communication rate $C(\pazocal{W})$ via $\tilde{\pazocal{W}}$ only depends on the channel 
    $\pazocal{W}$, not on the particular almost i.i.d.\ process $\tilde{\pazocal{W}}$.
\end{prop}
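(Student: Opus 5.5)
We plan to use the codes $\pazocal{C}^\ast_n=(\pazocal{E}^\ast_n,\pazocal{D}_n)$ from Lemma~\ref{lem:R_delta} with a fixed $\delta>0$ and a rate $r<R_\delta$. These codes depend only on $\pazocal{W}$. We will show that their error probability on $\tilde{\pazocal{W}}^{(n)}$ vanishes. Letting $\delta\to 0$ and using $R_\delta\to C(\pazocal{W})$ then yields $C(\tilde{\pazocal{W}})\geq C(\pazocal{W})$. For such $r$ we have $E_\delta(r,\pazocal{W})>0$, so the error of $\pazocal{C}^\ast_n$ on $\pazocal{W}^{\times n}$ is at most $2^{-nE}$ for some $E>0$ and all large $n$.

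The smoothing is what makes the decoder stable under a few symbol changes. Changing one output symbol multiplies each likelihood $\pazocal{W}_\delta^{\times n}(y^n|x^n_m)$ by a factor in $[\kappa^{-1},\kappa]$, where $\kappa\coloneqq(1+\delta)/\delta$. Hence if $d_H(\tilde y^n,y^n)\le k$, all log-likelihood ratios change by at most $2k\log\kappa$.

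We then use the coupling representation~\eqref{eq:interpr_noise}: $\tilde Y^n$ is obtained from $Y^n\sim\pazocal{W}^{\times n}(\cdot|x^n_m)$ through $\Phi^\ast_{x^n_m}$, and $\mathbb{E}\,d_H(\tilde Y^n,Y^n)\le w_n\coloneqq\|\tilde{\pazocal{W}}^{(n)}-\pazocal{W}^{\times n}\|_\clubsuit=o(n)$. By Markov's inequality, $\Pr[d_H>\eta n]\le w_n/(\eta n)\to 0$ for any fixed $\eta>0$. On the event $d_H\le \eta n$, an error at $\tilde y^n$ means that some $m'\ne m$ satisfies $\pazocal{W}_\delta^{\times n}(\tilde y^n|x_{m'})\ge \pazocal{W}_\delta^{\times n}(\tilde y^n|x_m)$. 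This in turn implies the ratio bound $\pazocal{W}_\delta^{\times n}(y^n|x_{m'})/\pazocal{W}_\delta^{\times n}(y^n|x_m)\ge \kappa^{-2\eta n}$ for the unperturbed $Y^n=y^n$.

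We therefore rerun the Gallager argument with a relaxed decision threshold. In~\eqref{eq:upper_bound_MLD} we replace the indicator by $\kappa^{2\eta n s}\big(\pazocal{W}_\delta^{\times n}(y^n|x_{m'})/\pazocal{W}_\delta^{\times n}(y^n|x_m)\big)^s$. The random-coding bound then acquires a factor $\kappa^{2\eta n s\rho}$, so the exponent becomes $\bar E_\delta(r,\pazocal{W})-2\eta\log\kappa$. We choose $\eta$ small enough, depending on $\delta$ and $r$, that this stays positive. To keep the code universal, we fix the code as one that is good for this "$\eta$-robust" error event; since the modified bound is still an average over codes, such a code exists and depends only on $\pazocal{W},\delta,r,\eta$.

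The total error on $\tilde{\pazocal{W}}^{(n)}$ is then at most $2^{-n(\bar E_\delta-2\eta\log\kappa)}+w_n/(\eta n)\to 0$. Passing from average to worst-case over messages is not needed, since Markov's inequality applies per message and is averaged anyway. One subtlety is that $\bar R_\delta\le R_\delta$; we therefore work directly with $r<\bar R_\delta$ and use $\bar R_\delta\to C(\pazocal{W})$, which is what the proof of Lemma~\ref{lem:R_delta} established.

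The main obstacle is the third step: proving that the Gallager bound survives the relaxed $\kappa^{2\eta n}$ margin uniformly in the code choice. The code must be selected before knowing $\tilde{\pazocal{W}}$, which is why we build the margin into the random-coding step rather than into the analysis.
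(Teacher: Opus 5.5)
Your proposal is correct and follows essentially the same route as the paper: the $\delta$-smoothed maximum likelihood decoder for $\pazocal{W}^{\times n}$, the optimal coupling $\pi^\ast_{x^n}$ together with the bound $\kappa^{2 d_H(\tilde y^n,y^n)}$ on how much the likelihood ratios can move, a split at $d_H\le \eta n$ with Markov's inequality controlling the tail via the club distance, and the Gallager random-coding bound picking up an extra code-independent multiplicative factor, followed by $\bar R_\delta\to C(\pazocal{W})$. The differences are minor. You keep $\eta$ fixed (depending only on $\delta$ and $r$) and lose $2\eta\log\kappa$ in the exponent, whereas the paper takes $\eta_n=\sqrt{w_n/n}\to 0$ so that no exponent is lost. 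Also, the code should be selected via this robust (Gallager-functional) bound from the outset rather than by reusing the codes of Lemma~\ref{lem:R_delta} as your first paragraph suggests; this is in fact what you end up doing.
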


\begin{proof}
    Let us fix $\delta>0$. For every $n\geq 1$, choosing any arbitrary encoder $\pazocal{E}_n:[M]\to \mathcal{X}^n$ let us consider the code $(\pazocal{E}_n,\pazocal{D}_n)$, where $\pazocal{D}_n$ is the $\delta$-smoothed maximum likelihood decoder for $\pazocal{W}^{\times n}$ with encoding given by $\pazocal{E}_n$.
    We use the code $(\pazocal{E}_n, \pazocal{D}_n)$ in order to communicate with the almost i.i.d.\ channel $\tilde{\pazocal{W}}^{(n)}$. Let us call $\tilde p_{m\to {\rm err}\,|\,x_m^n,\tilde y^n}$ the probability of decoding any $m'\neq m$ when the input message is $m$, assuming that it gets encoded in $x_m^n=\pazocal{E}_n(m)$ and it gets transformed into $\tilde y^n$ by the channel $\tilde{\pazocal{W}}$:
    \begin{align}\nonumber
        \tilde p_{m\to m'\,|\,x_m^n,\tilde y^n}=\id_{Y_{m'}}(\tilde y^n)
        &\leqt{(i)}\left(\frac{\pazocal{W}_\delta^{\times n}( \tilde y^{n}|x_{m'}^n)}{\pazocal{W}_\delta^{\times n}(\tilde  y^{n}|x_{m}^n)}\right)^s\\ \nonumber
        &=
        \left(\frac{\pazocal{W}_\delta^{\times n}( y^{n}|x_{m'}^n)}{\pazocal{W}_\delta^{\times n}( y^{n}|x_{m}^n)}\right)^s\times \left(\prod_{i:y_i\neq \tilde y_i}\frac{\pazocal{W}_\delta(\tilde y^{(i)}|x_{m'}^{(i)})}{\pazocal{W}_\delta( y^{(i)}|{x_{m'}^{(i)}}
        )}\cdot \frac{\pazocal{W}_\delta( y^{(i)}|x_{m}^{(i)})}{\pazocal{W}_\delta(\tilde y^{(i)}|x_{m}^{(i)})}\right)^s\\ \nonumber
        &\leqt{(ii)} \left(\frac{\pazocal{W}_\delta^{\times n}( y^{n}|x_{m'}^n)}{\pazocal{W}_\delta^{\times n}( y^{n}|x_{m}^n)}\right)^s \times \left(\frac{(1+\delta)^2}{\delta^2}\right)^{sd_H(\tilde y,y)}\\ \label{eq:p_upper}
        &\leq \left(\frac{\pazocal{W}_\delta^{\times n}( y^{n}|x_{m'}^n)}{\pazocal{W}_\delta^{\times n}( y^{n}|x_{m}^n)}\right)^s e^{2d_H(\tilde y,y)\log\frac{1+\delta}{\delta}},
    \end{align}
    where $y^n$ is any arbitrary sequence in $\mathcal{Y}^n$, and $s\in [0,1]$. In particular, in (i) we have noticed that, for the $\delta$-smoothed maximum likelihood decoder for $\pazocal{W}^{\times n}$ -- we stress that it is designed for $\pazocal{W}^{\times n}$, not for the particular almost i.i.d.\ channel $\tilde{\pazocal{W}}^{(n)}$ -- we can proceed as in~\eqref{eq:above}, and in (ii) we have used the bound $\delta\leq \pazocal{W}_\delta\leq 1+\delta$. 
    \errata{For every input string \(x^n\in\mathcal X^n\), let us denote by \(\pi_{x^n}^\ast\) the optimal coupling between the output distributions \(\tilde{\pazocal W}^{(n)}(\,\cdot\,|x^n)\) and \(\pazocal W^{\times n}(\,\cdot\,|x^n)\) achieving the club distance. The corresponding conditional distribution \(\Phi^\ast_{x^n}(\tilde y^n|y^n)\) is then defined from this coupling via \(\pi^\ast_{x^n}(\tilde y^n,y^n)=\Phi^\ast_{x^n}(\tilde y^n|y^n)\pazocal W^{\times n}(y^n|x^n)\) as in~\eqref{eq:phi_def}.}
    Calling $p_{m\to {\rm err}\,|\,x_m^n,\tilde y^n}\coloneqq\sum_{m'\neq m}\tilde p_{m\to m'\,|\,x_m^n,\tilde y^n}$, the overall error probability when sending the message $m$ is 
    \begin{align*}
        &\EE{\pazocal{E}_n\sim Q_P}\big[\tilde p_{m\to {\rm err}}\big]\\
        &\quad= \sum_{\tilde y^n}\sum_{x^n_m}\PP{\pazocal{E}_n\sim Q_P}\big(\pazocal{E}_n(m)=x_m^n\big)\tilde{ \pazocal{W}}^{(n)}(\tilde y^n|x^n_m)\EE{\pazocal{E}_n\sim Q_P}\big[\tilde p_{m\to {\rm err}\,|\,x_m^n,\tilde y^n}\,\big|\,\pazocal{E}_n(m)=x_m^n\big]\\
        &\quad\eqt{(iii)} \sum_{\tilde y^n,y^n}\sum_{x^n_m}P^{\times n}(x_m^n)\;\Phi^\ast_{x^n_m}(\tilde y^n|y^n)\pazocal{W}^{\times n}(y^n\,|x_m^n)\underbrace{\EE{\pazocal{E}_n\sim Q_P}\big[\tilde p_{m\to {\rm err}\,|\,x_m^n,\tilde y^n}\,\big|\,\pazocal{E}_n(m)=x_m^n\big]}_{\eqcolon \xi\in[0,1]}\\
        &\quad\leqt{(iv)} \sum_{d_H(\tilde y^n,y^n)\leq n\eta}\sum_{x^n_m}P^{\times n}(x_m^n)\;\Phi^\ast_{x^n_m}(\tilde y^n|y^n)\pazocal{W}^{\times n}(y^n\,|x_m^n)\Big(\EE{\pazocal{E}_n\sim Q_P}\big[\tilde p_{m\to {\rm err}\,|\,x_m^n,\tilde y^n}\,\big|\,\pazocal{E}_n(m)=x_m^n\big]\Big)^\rho\\
        &\quad\qquad + \PP{(\tilde Y^n,Y^n)\sim \pi^\ast_{x^n_m}}\big(d_H(\tilde Y^n,Y^n)>n\eta\big)\\
        &\quad\leqt{(v)} (M-1)^\rho\sum_{d_H(\tilde y^n,y^n)\leq n\eta}\sum_{x^n_m}P^{\times n}(x_m^n)\;\Phi^\ast_{x^n_m}(\tilde y^n|y^n) e^{2\rho\,d_H(\tilde y,y)\log \frac{1+\delta}{\delta}}\\
        &\quad\phantom{\leqt{(v)} (M-1)^\rho\sum_{d_H(\tilde y^n,y^n)\leq n\eta}\sum_{x^n_m,x^n}}\quad \times 
        \pazocal{W}^{\times n}(y^n\,|x_m^n)\left(\sum_{x^n}P^{\times n}(x^n)\left(\frac{\pazocal{W}_\delta^{\times n}( y^{n}|x^n)}{\pazocal{W}_\delta^{\times n}( y^{n}|x_{m}^n)}\right)^{s}\right)^{\rho} + p_\eta\\
        &\quad\leqt{(vi)} e^{2\rho n\eta \log \frac{1+\delta}{\delta}}(M-1)^\rho \\
        &\quad\qquad\times \sum_{y^n}\sum_{x^n_m}P^{\times n}(x_m^n)\Big(\sum_{\tilde y^n} \Phi^\ast_{x^n_m}(\tilde y^n|y^n)\Big)\pazocal{W}_\delta^{\times n}(y^n\,|x_m^n)^{1-\rho s}\left(\sum_{x^n}P^{\times n}(x^n)\pazocal{W}_\delta^{\times n}( y^{n}|x^n)^{s}\right)^\rho  + p_\eta\\
        &\quad=e^{2\rho n\eta \log\frac{1+\delta}{\delta}}(M-1)^\rho \exp\big(-nE_G(\rho,P,\delta)\big)  + p_\eta\\
    \end{align*}
    where
    \bb
        \errata{p_\eta = \max_{x^n}\PP{(\tilde Y^n,Y^n)\sim \pi^\ast_{x^n}}\big(d_H(\tilde Y^n,Y^n)>n\eta\big).}
    \ee
    \errata{In particular,} in (iii) we have used~\eqref{eq:interpr_noise}, in (iv) we have leveraged two elementary upper bounds 
    \bb
        \xi \leq \begin{cases}
            \xi^{\rho}  \quad \text{when summing on $\tilde y^n$ and $y^n$ such that} & d_H(\tilde y^n,y^n)\leq n\eta,\\
            1  \qquad ''\qquad\qquad  ''\qquad\qquad ''\qquad\qquad  '' & d_H(\tilde y^n,y^n)> n\eta,
        \end{cases}
    \ee
    where $\rho$ is an arbitrary parameter between $0$ and $1$; in (v) we have used the upper bound~\eqref{eq:p_upper} and we have proceeded similarly to~\eqref{eq:average}; finally, in (vi) we have relaxed the summation on $\tilde y^n,y^n$ such that $d_H(\tilde y^n,y^n)\leq n\eta $ to a sum on all $\tilde y^n$ and $y^n$; the last equality is analogous to~\eqref{eq:last}. The probabilistic argument over all the encoders $\{\pazocal{E}_n\}$ used in Lemma~\ref{lem:R_delta} can be identically applied here, ensuring the existence of a \emph{universal} encoder $\pazocal{E}_n^\ast$ -- i.e. $\pazocal{E}_n^\ast$ does only depend on $\pazocal{W}^{\times n}$, not on the specific almost i.i.d.\ channel $\tilde{\pazocal{W}}^{(n)}$ -- which satisfies
    \bb
        (M-1)^\rho\sum_{y^n}\pazocal{W}^{\times n}(y^n\,|x^n)\left(\frac{\pazocal{W}_\delta^{\times n}( y^{n}|x_{m'}^n)}{\pazocal{W}_\delta^{\times n}( y^{n}|x_{m}^n)}\right)^s\leq \bar p_{\rm err}^\ast,
    \ee
    where $ \bar p_{\rm err}^\ast$ is the upper bound on the error probability for $\pazocal{W}^{\times n}$ given in~\eqref{eq:random_encoder}. Hence, the overall error probability $\tilde p_{{\rm err},\delta}^\ast$ when communicating $M=\lceil 2^{rn}\rceil$ messages over $\tilde{\pazocal{W}}^{(n)}$ with the code $(\pazocal{E}_n^\ast,\pazocal{D}_n)$ is upper bounded as
    \bb\label{eq:p_tilde}
        \tilde p_{{\rm err},\delta}^\ast\leq \underbrace{e^{2\rho n\eta\log \frac{1+\delta}{\delta}}\bar p_{{\rm err},\delta}^\ast}_{\rm (a)}+\underbrace{p_\eta}_{\rm (b)}.
    \ee
    By Markov's inequality, we can upper bound (b) as
    \bb
        p_\eta &=\errata{\max_{x^n}} \PP{(\tilde Y^n,Y^n)\sim \pi^\ast_{x^n}}\big(d_H(\tilde Y^n,Y^n)>n\eta\big) \\
        &\leq \frac{1}{\eta}\cdot\frac{1}{n}\errata{\max_{x^n}}\EE{(\tilde Y^n,Y^n)\sim \pi^\ast_{x^n}}\big[d_H(\tilde Y^n,Y^n)\big]\leq \frac{1}{\eta}\cdot\frac{1}{n}\big\|\tilde{\pazocal{W}}^{(n)}-\pazocal{W}^{\times n}\big\|_{\,\clubsuit}.
    \ee
    By choosing $\eta=\sqrt{\tfrac{1}{n}\big\|\tilde{\pazocal{W}}^{(n)}-\pazocal{W}^{\times n}\big\|_{\,\clubsuit}}$, the asymptotic decay rate of (a) is
    \bb
        \liminf_{n\to\infty}-\frac{1}{n}\log\Big(e^{2\rho n\eta\log \frac{1+\delta}{\delta}}\bar p_{{\rm err},\delta}^\ast\Big)&=\bar E_\delta(r,\pazocal{W})-2\log\tfrac{1+\delta}{\delta}\lim_{n\to\infty}\sqrt{\tfrac{1}{n}\big\|\tilde{\pazocal{W}}^{(n)}-\pazocal{W}^{\times n}\big\|_{\,\clubsuit}}\\
        &= \bar E_\delta(r,\pazocal{W}).
    \ee
    Since for all \errata{$r<\bar R_\delta\coloneqq \min\{r\geq 0: \bar E_\delta(r,\pazocal{W})\leq 0\}$} both (a) and (b) are vanishing asymptotically in $n$, we conclude that
    \bb
        C(\tilde {\pazocal{W}})\geq \bar R_\delta.
    \ee
    Since this holds for any arbitrary $\delta>0$, by Lemma~\ref{lem:R_delta} -- more precisely, by~\eqref{eq:R_delta} -- we get
    \bb
        C(\tilde {\pazocal{W}})\geq \lim_{\delta\to 0^+}\bar R_\delta=C(\pazocal{W}),
    \ee
    which completes the proof. 
\end{proof}

\errata{
\begin{rem} 
The random-coding argument used in the proof of Lemma~\ref{lem:R_delta} ensures the existence of a deterministic sequence of encoders \(\{\pazocal E_n^\ast\}_n\), together with the associated \(\delta\)-smoothed maximum-likelihood decoders \(\{\pazocal D_n\}_n\), depending only on the reference channel \(\pazocal W\), such that
$
\bar p_{{\rm err},\delta}^\ast
$
satisfies the bound in~\eqref{eq:random_encoder}.
Importantly, the encoder and decoder depend only on the reference channel \(\pazocal W\). The particular almost i.i.d.\ process \(\tilde{\pazocal W}\) enters the error estimate only through the term \(p_\eta\), which is controlled by the club distance between \(\tilde{\pazocal W}^{(n)}\) and \(\pazocal W^{\times n}\). Since \(\tilde{\pazocal W}\) is almost i.i.d.\ along \(\pazocal W\), this term vanishes asymptotically uniformly along the chosen sequence of codes.
\end{rem}
}

\subsection{Achievability in the quantum case}\label{sec:achivability_quantum}

\errata{The quantum Wasserstein distance of order 1 depends on a choice of the partition of the global quantum system into subsystems.
In this section, we will consider cases where such a choice is not unique.
Indeed, let $n = k\,r$, and let $\mathcal{H}_n=\mathcal{H}^{\otimes n}$ be the Hilbert space of the quantum system made by $n$ copies of the quantum system with Hilbert space $\mathcal{H}$.
Let us associate each copy with an element of the set $[n]$.
Then, each partition of $[n]$ corresponds to a partition of the global system into subsystems.
We denote with $W_1^n$ the $W_1$ distance on $\mathcal{D}(\mathcal{H}_n)$ associated with the partition of $[n]$ into $n$ subsystems of one element each:
\bb
    \|\rho-\sigma\|_{W_1^n} \coloneqq \min \left\{\sum_{i=1}^n c_i\right.\quad  &\text{such that}\quad  && c_i\geq 0, \qquad  \rho-\sigma=\sum_{i=1}^n c_i\left(\tau^{(i)}-\eta^{(i)}\right)\\
    & \text{with}&& \left.\tau^{(i)},\eta^{(i)}\in \mathcal{D}(\mathcal{H}_n), \quad \Tr_i\tau^{(i)}=\Tr_i\eta^{(i)}\right\}.
\ee
We denote with $W_1^r$ the $W_1$ distance on $\mathcal{D}(\mathcal{H}_{kr})$ associated to the partition of $[k\,r]$ into the subsets
    \bb 
    S_1=\{1,\,\ldots,\,k\} \quad \cdots \quad S_r = \{kr - k + 1,\,\ldots,kr\}
    \ee
of $k$ elements each:
\bb
    \|\rho-\sigma\|_{W_1^{kr}} \coloneqq \min \left\{\sum_{j=1}^r c_j\right.\quad  &\text{such that}\quad  && c_j\geq 0, \qquad  \rho-\sigma=\sum_{j=1}^r c_j\left(\tau^{(j)}-\eta^{(j)}\right)\\
    & \text{with}&& \left.\tau^{(j)},\eta^{(j)}\in \mathcal{D}(\mathcal{H}_n), \quad \Tr_{S_j}\tau^{(j)}=\Tr_{S_j}\eta^{(j)}\right\}.
\ee
We stress that the superscript of $W_1$ always denotes the number of subsystems of the partition.

\errata{
\begin{lemma}\label{lem:kr}
The norms $W_1^r$ and $W_1^{kr}$ satisfy
    \begin{equation}
        \left\|\,\cdot\,\right\|_{W_1^r} \le \left\|\,\cdot\,\right\|_{W_1^{kr}} \le 2\,k\left\|\,\cdot\,\right\|_{W_1^r}\,.
    \end{equation}
\end{lemma}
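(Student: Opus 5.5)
Although the lemma writes $W_1^{kr}$ and $W_1^r$, I read the superscripts, as the paper stresses, as the number of subsystems in the partition. Thus $\|\cdot\|_{W_1^{kr}}$ is the $W_1$ norm for the fine partition of $[kr]$ into singletons, and $\|\cdot\|_{W_1^r}$ is the one for the coarse partition into the blocks $S_1,\dots,S_r$. The plan is to compare the two minimisation problems directly at the level of their feasible decompositions, in both directions.

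\emph{First inequality.} Suppose $\tau,\eta$ are neighbouring for the fine partition at a site $i$, i.e.\ $\Tr_i\tau=\Tr_i\eta$. Let $S_j$ be the block containing $i$. Then
\[
\Tr_{S_j}\tau=\Tr_{S_j\setminus\{i\}}\Tr_i\tau=\Tr_{S_j\setminus\{i\}}\Tr_i\eta=\Tr_{S_j}\eta,
\]
so $\tau,\eta$ are also neighbouring for the coarse partition. Now take any feasible fine decomposition $\rho-\sigma=\sum_i c_i(\tau^{(i)}-\eta^{(i)})$. Grouping the terms by block gives a feasible coarse decomposition with total cost $\sum_j\sum_{i\in S_j}c_i=\sum_i c_i$. (If a block receives several terms, either allow repeated indices, or note that the set of $\sum_{i\in S_j} c_i(\tau^{(i)}-\eta^{(i)})$ is a cone of multiples of neighbouring differences, since that set is convex.) Minimising over fine decompositions gives $\|\cdot\|_{W_1^r}\le\|\cdot\|_{W_1^{kr}}$.

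\emph{Second inequality.} It suffices to show the following: if $\Tr_{S_j}\tau=\Tr_{S_j}\eta$, then $\|\tau-\eta\|_{W_1^{kr}}\le 2k$. Homogeneity and the triangle inequality, applied to an optimal coarse decomposition, then give $\|\rho-\sigma\|_{W_1^{kr}}\le 2k\sum_j c_j=2k\|\rho-\sigma\|_{W_1^r}$. For the key claim I will use a hybrid argument.
\begin{itemize}
\item Fix a state $\omega\in\mathcal{D}(\mathcal{H})$ and label the sites of $S_j$ as $i_1,\dots,i_k$.
\item Define $\gamma_0=\tau$ and $\gamma_l=(\Tr_{i_1\cdots i_l}\tau)\otimes\omega^{\otimes l}$, where the copies of $\omega$ sit on the sites $i_1,\dots,i_l$.
\item Then $\Tr_{i_l}\gamma_{l-1}=\Tr_{i_l}\gamma_l=(\Tr_{i_1\cdots i_l}\tau)\otimes\omega^{\otimes(l-1)}$. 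So $\gamma_{l-1},\gamma_l$ are fine-neighbouring, and taking the single coefficient $c_{i_l}=1$ in the definition shows $\|\gamma_{l-1}-\gamma_l\|_{W_1^{kr}}\le 1$.
\item Telescoping gives $\|\tau-\gamma_k\|_{W_1^{kr}}\le k$.
\item Build the analogous chain $\gamma'_l$ starting from $\eta$, which gives $\|\eta-\gamma'_k\|_{W_1^{kr}}\le k$.
\item Finally, $\gamma_k=(\Tr_{S_j}\tau)\otimes\omega^{\otimes k}=(\Tr_{S_j}\eta)\otimes\omega^{\otimes k}=\gamma'_k$, so the triangle inequality yields the bound $2k$.
\end{itemize}

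The only delicate point is bookkeeping: the tensor factors $\omega$ must be placed on the correct sites, which I handle by interpreting $\otimes$ up to the canonical reordering of tensor factors. No earlier result of the paper beyond the definitions is needed.
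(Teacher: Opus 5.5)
Your proposal is correct and follows essentially the same route as the paper. The first inequality comes from the fact that fine-neighbouring pairs are coarse-neighbouring, which the paper phrases as an inclusion of unit balls; the second comes from replacing the sites of a block, one at a time, by a fixed state $\omega$ and telescoping. The only cosmetic difference is that you telescope $\tau$ and $\eta$ separately towards the common endpoint $(\Tr_{S_j}\tau)\otimes\omega^{\otimes k}$ at cost $1$ per step, whereas the paper telescopes the traceless block term $X^{(i)}$ directly and uses $\|X^{(i,j)}\|_1\le 2\|X^{(i)}\|_1$; both give the same factor $2k$.
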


\begin{proof}
    The unit ball of $\left\|\,\cdot\,\right\|_{W_1^{kr}}$ is the convex hull of the differences between states that differ in a single $A$ system, and is contained in the unit ball of $\left\|\,\cdot\,\right\|_{W_1^{r}}$, which is the convex hull of the differences between states that differ in a single block of $k$ copies of $A$.
    Therefore, $\left\|\,\cdot\,\right\|_{W_1^r} \le \left\|\,\cdot\,\right\|_{W_1^{kr}}$.

    Let $X$ be a self-adjoint traceless operator acting on $\mathcal{H}_A^{\otimes kr}$, and let
    \begin{equation}
        X = \sum_{i=1}^r X^{(i)}\,,\qquad \mathrm{Tr}_{S_i}X^{(i)} = 0\,,\qquad \left\|X\right\|_{W_1^r} = \frac{1}{2}\sum_{i=1}^r\left\|X^{(i)}\right\|_1\,.
    \end{equation}
    Let $\omega$ be the maximally mixed state on one copy of $A$, and let
    \begin{equation}
        X^{(1,1)} = X^{(1)} - \omega\otimes\mathrm{Tr}_1 X^{(1)}\quad \cdots\quad X^{(1,k)} = \omega^{\otimes\left(k-1\right)}\otimes\mathrm{Tr}_{1\ldots k-1}X^{(1)} - \omega^{\otimes k}\otimes\mathrm{Tr}_{1\ldots k}X^{(1)}\,,
    \end{equation}
    such that
    \begin{equation}
        X^{(1)} = X^{(1,1)} + \ldots + X^{(1,k)}\,,\qquad \mathrm{Tr}_j X^{(1,j)} = 0\,,\qquad \left\|X^{(1,j)}\right\|_1 \le 2\left\|X^{(1)}\right\|_1\,.
    \end{equation}
    Analogously, for any $i\in[r]$ and any $j\in[k]$ we build $X^{(i,j)}$ such that
    \begin{equation}
        X^{(i)} = X^{(i,1)} + \ldots + X^{(i,k)}\,,\qquad \mathrm{Tr}_j X^{(i,j)} = 0\,,\qquad \left\|X^{(i,j)}\right\|_1 \le 2\left\|X^{(i)}\right\|_1\,.
    \end{equation}
    We then have
    \begin{equation}
        X = \sum_{i=1}^r\sum_{j=1}^k X^{(i,j)}\,,\qquad \left\|X\right\|_{W_1^{kr}} \le \frac{1}{2}\sum_{i=1}^r\sum_{j=1}^k \left\|X^{(i,j)}\right\|_1 \le k\sum_{i=1}^r\left\|X^{(i)}\right\|_1 = 2\,k\left\|X\right\|_{W_1^r}\,.
    \end{equation}
    The claim follows.
\end{proof}
}

Similarly, we will use a more explicit notation to emphasise the partition underlying the output system appearing in the the club norm.
Namely, we denote with $\left\|\cdot\right\|_{\,\clubsuit, r}$ the norm obtained employing the norm $W_1^r$ in Definition~\ref{def:club_norm}:
\bb
    \left\|\Delta\Phi\right\|_{\,\clubsuit, r}\coloneqq \sup_{\rho}\big\|\Delta\Phi(\rho)\big\|_{W_1^r}.
\ee
}

\subsubsection{Extension to the classical-quantum case}
\errata{
\begin{prop}\label{prop:achiev_cq}
Let $\pazocal M$ be a classical-quantum channel, and let
$
\tilde{\pazocal M}
=
(\tilde{\pazocal M}^{(n)})_n
$
be an almost i.i.d.\ process along $\pazocal M$. Then,
\bb
C(\tilde{\pazocal M})
\ge
C(\pazocal M).
\ee
Moreover, for every rate $r<C(\pazocal M)$, there exists a sequence of codes achieving the rate $r$ over $\tilde{\pazocal M}$ that depends only on the reference channel $\pazocal M$, and not on the particular almost i.i.d.\ process $\tilde{\pazocal M}$.
\end{prop}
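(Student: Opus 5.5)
The plan is to reduce the classical-quantum case to the classical Proposition~\ref{prop:achiev_cl} by blocking and measuring. Fix $r<C(\pazocal M)$. For a cq channel, the Holevo--Schumacher--Westmoreland theorem together with its standard measured formulation gives
\[
C(\pazocal M)=\lim_{k\to\infty}\frac1k\,\sup_{\pazocal{M}_k}C\big(\pazocal W_k\big),
\]
where $\pazocal M_k$ ranges over finite-outcome POVMs on $\mathcal H_B^{\otimes k}$. Here $\pazocal W_k\coloneqq \pazocal M_k\circ\pazocal M^{\otimes k}$ is a classical channel from $\mathcal X^k$ to the finite outcome alphabet $\mathcal Y_k$. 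I will therefore choose a block length $k$ and a POVM $\pazocal M_k$, both depending only on $\pazocal M$ and $r$, such that $C(\pazocal W_k)>k r$. The whole code uses only $k$, $\pazocal M_k$ and the reference channel $\pazocal M$.

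The protocol is as follows. Write $n=k\,r_n+\ell$ with $0\le \ell<k$. The encoder fixes the last $\ell$ inputs to a constant symbol, which costs nothing in rate. The decoder traces out the last $\ell$ outputs and measures each of the $r_n$ blocks of $k$ outputs with $\pazocal M_k$. The result is an effective classical channel $\tilde{\pazocal W}^{(r_n)}$ from $(\mathcal X^k)^{r_n}$ to $\mathcal Y_k^{r_n}$. On this channel we use the universal code of Proposition~\ref{prop:achiev_cl} designed for $\pazocal W_k$, consisting of the random-coding encoder and the $\delta$-smoothed maximum likelihood decoder. Since $\pazocal W_k^{\times r_n}$ arises from $\pazocal M^{\otimes n}$ through the same post-processing, Proposition~\ref{prop:achiev_cl} yields any rate below $C(\pazocal W_k)$ per block, i.e.\ above $r$ per channel use. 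The only remaining requirement is that $(\tilde{\pazocal W}^{(r_n)})_{r_n}$ be an almost i.i.d.\ process along $\pazocal W_k$ (with respect to the block alphabet).

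The main step is this club-distance estimate. Fix an input string $x^n$ and write $\Delta\coloneqq\tilde{\pazocal M}^{(n)}(x^n)-\pazocal M^{\otimes n}(x^n)$. I proceed in four steps.

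\begin{enumerate}
\item Tracing out the last $\ell$ systems does not increase $\|\cdot\|_{W_1^n}$. A decomposition into differences of neighboring states restricts to one of the same cost on the remaining systems, with the terms acting on discarded sites vanishing.
\item Lemma~\ref{lem:kr} gives $\|\cdot\|_{W_1^{r_n}}\le\|\cdot\|_{W_1^{kr_n}}$ for the block partition $S_1,\dots,S_{r_n}$.
\item Applying $\pazocal M_k$ block by block is a composition of channels, each acting on a single block subsystem. By the contraction property~\cite[Proposition 3]{De_Palma_2021}, applied with respect to the block partition, each of them does not increase $W_1^{r_n}$.
\item The measured states are diagonal, so Proposition~\ref{prop:cl} identifies their quantum $W_1^{r_n}$ distance with the classical Hamming-$W_1$ distance on $\mathcal Y_k^{r_n}$. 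This is exactly the quantity appearing in~\eqref{eq:club_plan}.
\end{enumerate}

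Taking the maximum over $x^n$ gives
\[
\frac1{r_n}\big\|\tilde{\pazocal W}^{(r_n)}-\pazocal W_k^{\times r_n}\big\|_{\clubsuit}\le \frac{1}{r_n}\big\|\tilde{\pazocal M}^{(n)}-\pazocal M^{\otimes n}\big\|_{\clubsuit}\le \frac{2k}{n}\big\|\tilde{\pazocal M}^{(n)}-\pazocal M^{\otimes n}\big\|_{\clubsuit}\longrightarrow0,
\]
where the second inequality uses $r_n\ge n/(2k)$ for $n\ge 2k$. One subtlety is that Proposition~\ref{prop:achiev_cl} is indexed by the block count, and the error bound there depends on $\tilde{\pazocal W}$ only through $p_\eta$, which is controlled by this normalised club distance. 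Hence vanishing error follows for every almost i.i.d.\ $\tilde{\pazocal M}$ simultaneously, with the same codes.

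I expect the main obstacle to be the partition bookkeeping in step 3. The contraction of $W_1$ under local channels is stated for the finest partition, and I must invoke it for the coarse block partition $W_1^{r_n}$, treating each block of $k$ copies as a single subsystem. This is legitimate, since the definition is partition-agnostic, but it is precisely why Lemma~\ref{lem:kr} is needed to pass from $W_1^{kr_n}$ to $W_1^{r_n}$ first. A second, softer point is quoting the measured characterisation of the cq capacity. I would cite HSW in the form $\lim_k \frac1k\sup_{\pazocal M_k,P}I(X^k:Y_k)=\chi(\pazocal M)=C(\pazocal M)$, for example via sequential or pretty-good-measurement decoding on typical blocks. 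Combined with Shannon's theorem, this says that the classical channel $\pazocal W_k$ has capacity at least $k(C(\pazocal M)-o(1))$.
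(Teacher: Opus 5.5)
Your proposal is correct and follows essentially the same route as the paper. You group the uses into blocks of length $k$, measure each block with a fixed POVM, and use contractivity of $W_1$ under single-block channels together with Lemma~\ref{lem:kr} to show that the induced classical process is almost i.i.d.\ along $\Lambda_k\circ\pazocal M^{\otimes k}$, then invoke Proposition~\ref{prop:achiev_cl}. The only differences are minor. The paper derives the measured characterisation of $C(\pazocal M)$ itself, by taking $\Lambda_k$ to be the decoder of a capacity-achieving code for $\pazocal M^{\otimes k}$ and applying Fano's inequality, whereas you cite it. You also handle block lengths $n$ not divisible by $k$ (by padding the inputs and tracing out the extra outputs), which the paper sidesteps by working only with $n=kr$.
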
}
\begin{proof}
Let us consider an almost i.i.d.\ classical-quantum process $\tilde{\pazocal{M}}=\big(\tilde{\pazocal{M}}^{(n)}\big)_{n}$ along a given channel $\pazocal{M}:\mathcal{X}\to \mathcal{D}(\mathcal{H}_B)$. \errata{Fix $k\ge1$. For each $r\ge1$, regard $kr$ uses of the channel as $r$ blocks of length $k$:}
\bb
    \mathbf{x}^r\coloneqq \begin{pmatrix}
        x_1^k\\
        \vdots\\
        x_r^k
    \end{pmatrix}\in(\mathcal{X}^{k})^{\times r},\qquad \text{where}\qquad x_i^k=(x_i^{(1)},\dots, x_i^{(k)})\in\mathcal{X}^k, \quad i=1,\dots, r.
\ee
Let
\bb
\Lambda_k:\mathcal D(\mathcal H_B^{\otimes k})\to\mathcal{P}(\mathcal Y_k)
\ee
be an arbitrary measurement channel. We then define the induced classical channel
\bb
\tilde{\pazocal W}_k^{(r)}(\mathbf x^r)
\coloneqq
\Lambda_k^{\otimes r}
\big(
\tilde{\pazocal M}^{(kr)}(\mathbf x^r)
\big).
\ee
The corresponding i.i.d.\ reference channel is
\bb
\pazocal W_k
\coloneqq
\Lambda_k\circ \pazocal M^{\otimes k}.
\ee
Then, $\big(\tilde{\pazocal{W}}_k^{(r)}\big)_{r\geq 1}$ is an almost-i.i.d.\ source along
\bb
    \pazocal{W}_k\coloneqq\pazocal{D}_k\circ \pazocal{M}^{\otimes k}.
\ee
Indeed, we have
\bb\label{eq:150}
    \frac{1}{r}\big\|\tilde{\pazocal{W}}_k^{(r)}-{\pazocal{W}}_k^{\times r}\big\|_{\,\clubsuit, r}&=\max_{\mathbf{x}^r}\frac{1}{r}\big\|\tilde{\pazocal{W}}_k^{(r)}(\mathbf{x}^r)-{\pazocal{W}}_k^{\times r}(\mathbf{x}^r)\big\|_{W_1^r}\\
    &=\max_{\mathbf{x}^r}\frac{1}{r}\big\|\pazocal{D}^{\otimes r}\big(\tilde{\pazocal{M}}^{(kr)}(\mathbf{x}^r)\big)-\pazocal{D}^{\otimes r}\big(\pazocal{M}^{\otimes kr}(\mathbf{x}^r)\big)\big\|_{W_1^r}\\
    &\leq \max_{\mathbf{x}^r}\frac{1}{r}\big\|\tilde{\pazocal{M}}^{(kr)}(\mathbf{x}^r)-\pazocal{M}^{\otimes kr}(\mathbf{x}^r)\big\|_{W_1^r}\\
    &\leq \max_{x^{kr}}\frac{1}{r}\big\|\tilde{\pazocal{M}}^{(kr)}(x^{kr})-\pazocal{M}^{\otimes kr}(x^{kr})\big\|_{W_1^{kr}}\\
    &= k\times \frac{1}{kr}\big\|\tilde{\pazocal{M}}^{(kr)}-\pazocal{M}^{\otimes kr}\big\|_{\,\clubsuit,kr}=o(1) \quad \text{as} \quad r\to\infty,
\ee
\errata{where the first inequality follows since the measurement channel \(\Lambda_k^{\otimes r}\) is contractive for the \(W_1\) distance~\cite[Proposition 3]{De_Palma_2021}:
\bb
\big\|
\Lambda_k^{\otimes r}(\rho)
-
\Lambda_k^{\otimes r}(\sigma)
\big\|_{W_1^r}
\le
\|\rho-\sigma\|_{W_1^r},
\ee
and the second inequality follows since from Lemma~\ref{lem:kr} we have $\left\|\,\cdot\,\right\|_{W_1^r} \le \left\|\,\cdot\,\right\|_{W_1^{kr}}$.
}

Now, by Proposition~\ref{prop:achiev_cl}, 
\bb
\label{eq:CtildeW}
    C(\tilde{\pazocal{W}_k})\geq C(\pazocal{W}_k)=\max_{p_{X^k}}I(X^k:Y_k)=\max_{\rho_{X^kB_k}}I(X^k:Y_k)_{({\rm Id}\otimes \errata{\Lambda_k})(\rho),}
\ee
\errata{where $Y_k$ is the output of $\pazocal{W}_k$ with random input $X^k\sim p_{X^k}$, and the second maximisation is restricted to be over states of the form
$$
\rho_{X^kB_k}
=
\sum_{x^k}
p(x^k)\,
\ketbra{x^k}
\otimes
\pazocal M^{\otimes k}(x^k),
$$
induced by an input distribution \(p_{X^k}\) through the fixed channel \(\pazocal M^{\otimes k}\).} \errata{Note that any code for the induced classical channel \(\tilde{\pazocal W}_k^{(r)}\), operating on \(r\) blocks of size \(k\), naturally induces a code for the channel \(\tilde{\pazocal M}^{(kr)}\). Consequently, the corresponding communication rate is divided by a factor \(k\). Hence, for all $k\geq 1$, we have}
\bb\label{eq:note}
     C(\tilde{\pazocal{M}})\geq \frac 1k  C(\tilde{\pazocal{W}_k}).
\ee

Let $(\pazocal{E}_k,\pazocal{D}_k)$ be a sequence of $(M_k,\epsilon_k)$-codes
for $\pazocal{M}^{\otimes k}$ asymptotically achieving the capacity as communication rate:
\bb
    \pazocal{E}_k:[M_k]\to \mathcal{X}^k\qquad\pazocal{D}_k:\mathcal{D}(\mathcal{H}^{\otimes k})\to \errata{\mathcal{P}([M_k])}
\ee
and
\begin{align}
    \label{eq:rate}
    &\lim_{k\to \infty}\frac{\log M_k}{k}=C(\pazocal{M}),\\ 
    \label{eq:error}
    &\errata{\lim_{k\to\infty}\epsilon_k=\lim_{k\to\infty}\frac 1 {M_k}\sum_{m=1}^{M_k}\mathbb{P}\big(m\neq m'\sim \pazocal{D}_k\circ\pazocal{M}^{\otimes k}\circ \pazocal{E}_k(m)\big)=0.}
\end{align}
\errata{Now we want to show that, by choosing $\Lambda_k\coloneqq\pazocal{D}_k$ and combining~\eqref{eq:CtildeW} with~\eqref{eq:note}, we can obtain the desired lower bound \mbox{$C(\tilde{\pazocal{M}})\geq C(\pazocal{M})$}.}
Let $Z_k$ be a uniform random variable on $[M_k]$, and let $\hat Z_k\coloneqq \big(\pazocal{D}_k\circ \pazocal{M}^{\otimes k}\circ \pazocal{E}_k\big)(Z_k)$, which takes values in $\mathcal{Y}_k=[M_k]$; then the joint probability distribution $Q_k(m,m')=\mathbb{P}(Z_k=m,\hat Z_k=m')$ satisfies
\bb
    \lim_{k\to\infty}\frac 12\big\|Q_k-R^{(M_k)}\big\|_1&\leqt{(a)} \lim_{k\to\infty}\PP{(Z_k,\hat Z_k)\sim Q_k}\big((Z_k,\hat Z_k)\neq (Z_k,Z_k)\big)\\
    &= \lim_{k\to\infty}\PP{(Z_k,\hat Z_k)\sim Q_k}\big(\hat Z_k\neq Z_k\big)\eqt{(b)}\lim_{k\to\infty} \epsilon_k=0,
\ee
where $R^{(M_k)}(m,m')\coloneqq\frac{\delta_{m\,m'}}{M_k}$ is the maximally correlated distribution, and
\begin{itemize}
    \item in (a) we have leveraged the coupling lemma $\displaystyle{\frac 12 \|P-Q\|_1=\inf_{\substack{X\sim P\\ Y\sim Q}}\mathbb{P}(X\neq Y)}$,
    \item \vspace{-1.2em}  in (b) we have recalled~\eqref{eq:error}.
\end{itemize} 
\errata{Since \(Z_k\) is uniform on \([M_k]\) and \(\mathbb P(\hat Z_k\neq Z_k)=\epsilon_k\), Fano's inequality yields
\bb
H(Z_k|\hat Z_k)
\le
h_2(\epsilon_k)+\epsilon_k\log(M_k-1).
\ee
Hence,
\bb
\big|
I(Z_k:\hat Z_k)-\log M_k
\big|
\le
h_2(\epsilon_k)+\epsilon_k\log(M_k-1).
\ee
After division by \(k\), the right-hand side vanishes because \(\epsilon_k\to0\) and \(\log M_k/k\to C(\pazocal M)\) as $k \to \infty$:
\bb
\label{eq:IZk}
    \lim_{k\to\infty} \frac{1}{k}I(Z_k:\hat Z_k)_{Q_k}\eqt{(i)}\lim_{k\to\infty} \frac{1}{k}I(Z_k:\hat Z_k)_{R^{(M_k)}}\eqt{(ii)}\lim_{k\to\infty}\frac{\log M_k}{k}\eqt{(iii)}C(\pazocal{M}),
\ee
}

where (i) is the asymptotic continuity of the Shannon entropy, in (ii) we have recalled that the mutual information of a uniform random variable over a set is the logarithm of the cardinality of the set, and in (iii) we have used~\eqref{eq:rate}. Without loss of generality, we can assume that $\pazocal{E}_k$ is injective, i.e. $\pazocal{E}_k(m)\neq \pazocal{E}_k(m')$ whenever $m\neq m'$, \errata{since, if two messages share the same codeword, they cannot both be decoded reliably with small average error.} Choosing $p(x^k)=\mathbb{P}(\pazocal{E}(Z_k)=x^k)$, we have
\bb
    ({\rm Id}\otimes \pazocal{D}_k)(\rho_{X^kB^k})&= \sum_{x^k}p(x^k)\ketbra{x^k}\otimes (\pazocal{D}_k\circ \pazocal{M}^{\otimes k})(\ketbra {x^k})\\
    &=\sum_{m,m'\in [M_k]}\mathbb{P}\big(Z_k=m,\hat Z_k=m'\big)\ketbra{\pazocal{E}_k(m)}\otimes \ketbra{m'}.
\ee
Since $\pazocal{E}_k$ is invertible on its image, we conclude that
\bb\label{eq:IZk2}
    I(Z_k:\hat Z_k)_{Q_k}=I(X^k:Y_k)_{({\rm Id}\otimes \pazocal{D}_k)(\rho)}
\ee
\errata{For each fixed block size \(k\), Proposition~\ref{prop:achiev_cl} yields coding schemes for the induced classical process \(\tilde{\pazocal W}_k\) whose rates approach \(C(\pazocal W_k)\) in the limit \(r\to\infty\). One then sends \(k\to\infty\) by choosing a sequence of \(k\)-block codes for \(\pazocal M^{\otimes k}\) whose rates asymptotically achieve \(C(\pazocal M)\):} 
\bb
      C(\tilde{\pazocal{M}})\geq \lim_{k\to \infty}\frac 1k C(\tilde{\pazocal{W}_k})\geq C(\pazocal{M}).
\ee
This completes the proof.
\end{proof}

\subsubsection{Lifting to the fully quantum case}

\errata{Let \(\tilde{\pazocal N}=(\tilde{\pazocal N}^{(n)})_n\) be an almost i.i.d.\ process along a quantum channel \(\pazocal N_{A\to B}\). Fix \(k\ge1\), and let \(\pazocal E_k:\mathcal X\to\mathcal D(\mathcal H_A^{\otimes k})\) be a generic encoding function with $\mathcal{X}$ being an arbitrary input space of finite size. For each \(r\ge1\), set \(n=kr\) and define the induced classical-quantum process
\bb
\tilde{\pazocal M}^{(r)}
\coloneqq
\tilde{\pazocal N}^{(kr)}\circ \pazocal E_k^{\otimes r}.
\ee
Then \(\tilde{\pazocal M}=(\tilde{\pazocal M}^{(r)})_r\) is an almost i.i.d.\ process along the classical-quantum channel
\bb
\pazocal M_k
\coloneqq
\pazocal N^{\otimes k}\circ\pazocal E_k.
\ee
Indeed, by Lemma~\ref{lem:kr}, we can upper bound
\bb
    \frac 1r\|\tilde{\pazocal{M}}^{(r)}-\pazocal M_k^{\otimes r}\|_{\clubsuit, r}&=\frac 1r\sup_{x^r}\|\;\tilde{\pazocal N}^{(kr)}\circ \pazocal E_k^{\otimes r}(x^r)-(\pazocal N^{\otimes k})^{\otimes r}\circ\pazocal E_k^{\otimes r}(x^r)\,\|_{W_1^r}\\
    &\leq \frac 1r\sup_{\rho\in\mathcal{D}(\mathcal{H}_A^{\otimes kr})}\|\;\tilde{\pazocal N}^{(kr)}(\rho)-(\pazocal N^{\otimes k})^{\otimes r}(\rho)\,\|_{W_1^r}\\
    &\leq \frac 1r\sup_{\rho\in\mathcal{D}(\mathcal{H}_A^{\otimes kr})}\|\;\tilde{\pazocal N}^{(kr)}(\rho)-(\pazocal N^{\otimes k})^{\otimes r}(\rho)\,\|_{W_1^{kr}}\\
    &=k\times \frac 1{kr}\|\tilde{\pazocal N}^{(kr)}-\pazocal{N}^{\otimes kr}\|_{\clubsuit,kr}
\ee
By Proposition~\ref{prop:achiev_cq},
\bb
C(\tilde{\pazocal M})
\ge
C(\pazocal M_k)
=
\chi(\pazocal N^{\otimes k}\circ\pazocal E_k).
\ee
Moreover, any code for \(\tilde{\pazocal M}^{(r)}\) with \(M_r\) messages naturally induces a code for \(\tilde{\pazocal N}^{(kr)}\) with the same number of messages. Hence, a communication rate \(\frac1r\log M_r\) for \(\tilde{\pazocal M}\) corresponds to the rate \(\frac1{kr}\log M_r\) for \(\tilde{\pazocal N}\). Therefore,
\bb
C(\tilde{\pazocal N})
\ge
\frac1k\,C(\tilde{\pazocal M})
\ge
\frac1k\,
\chi(\pazocal N^{\otimes k}\circ\pazocal E_k).
\ee
By arbitrariness of $k\geq 1$ and of $\pazocal{E}_k$, we get
\bb
    C(\tilde{\pazocal{M}})\geq \lim_{k\to\infty}\frac 1k\sup_{\pazocal{E}_k}\chi(\pazocal{N}^{\otimes k}\circ \pazocal{E}_k)=C(\pazocal{N}).
\ee
This argument concludes the achievability part of Theorem~\ref{thm:channels}.}

\subsection{Reliability function}\label{sec:reliability}
We have just proved that, for Wasserstein almost i.i.d.\ channels, the classical capacity is robust. Is this also the case for the reliability function? A simple example in the fully classical setting serves as a counterexaple. Let $\pazocal{I}_2:\{0,1\}\to \{0,1\}$ be the binary noiseless channel. Since the communication at any rate $0<r\leq 1$ can be made errorless, the reliability function $E_R(r,\pazocal{I}_2)$ is constantly equal to $+\infty$ in $0<r\leq 1$. Calling $\pazocal{R}_0$ the classical channel that outputs the symbol $0$ with probability $1$, consider the sequence of channels
\bb
    \tilde{\pazocal{I}}_2^{(n)}\coloneqq \left(1-\frac 1n\right)\pazocal{I}_2^{\times n}+\frac{1}{n}\pazocal{R}_0^{\times n}, \qquad n\geq 1.
\ee
This is actually an almost i.i.d.\ process, as it satisfies
\bb
    \big\|\tilde{\pazocal{I}}_2^{(n)}-\pazocal{I}_2^{\times n}\big\|_\clubsuit = \frac{1}{n}\|\tilde{\pazocal{I}}_2^{\times n}-\pazocal{R}_0^{\times n}\|_\clubsuit\leq 1,
\ee
whence
\bb
    \lim_{n\to\infty}\frac 1n\big\|\tilde{\pazocal{I}}_2^{(n)}-\pazocal{I}_2^{\times n}\big\|_\clubsuit=0.
\ee
By Theorem~\ref{thm:channels}, it is possible to communicate at rate $r=1$ via $\tilde{\pazocal{I}}$ with asymptotically vanishing error, but it is not difficult to see that, for any arbitrary rate $0<r\leq 1$, no code can communicate with exponentially vanishing error probability. Indeed, let $(\pazocal{E}_n,\pazocal{D}_n)_n$ be a sequence of codes with asymptotical rate $r$ for $\tilde{\pazocal{I}}$, with 
\bb
    \pazocal{E}_n:[M_n]\to \{0,1\}^n\qquad \text{and}\qquad \pazocal{D}_n:\{0,1\}^n\to [M_n].
\ee
Then, calling $0^n$ be the sequence of length $n$ having all zeroes, let $m_n\coloneq \pazocal{D}(0^n)$. The channel $\tilde{\pazocal{I}}^{(n)}_2$, regardless of the input, with probability \errata{at least} $1/n$ outputs the sequence $0^n$, which gets decoded into $m_n$. The average probability of error can then be lower bounded as
\bb
    \frac 1 {M_n}\sum_{m=1}^{M_n}\mathbb P \big(m\neq \pazocal{D}_n\circ \tilde{\pazocal{I}}^{(n)}_2\circ \pazocal{E}_n(m)\big)\geq \frac 1 {M_n}\sum_{m=1}^{M_n} \frac{\id_{m\neq m_n}}{n}=\frac{M_n-1}{M_n}\times \frac 1n.
\ee
This immediately implies that
\bb
    E(r,\tilde{\pazocal{I}})\leq \liminf_{n\to\infty}-\frac{1}{n}\log\left(\frac{M_n-1}{M_n} \times \frac 1n\right)=0.
\ee
Therefore, in this example there is an infinite gap between the reliability function of the i.i.d.\ channel and the one of the almost i.i.d.\ process. \bigskip

With the previous counterexample in mind, by carefully looking at~\eqref{eq:p_tilde}, we can grasp what might go wrong with a Wasserstein almost i.i.d.\ process in the reliability function. More precisely, in this upper bound that, of course, might not be optimal, but still instructive, we notice that
\begin{itemize}
    \item the term (a) in~\eqref{eq:p_tilde} reproduces the exponential decay of the error probability in the i.i.d.\ setting, up to a multiplicative correction which does not affect the reliability function when $\eta\to 0$;
    \item the term (b) in~\eqref{eq:p_tilde} indicates that, with small probability, the number of symbols getting spoilt by the non-i.i.d.\ nature of the channel might be larger than the chosen threshold $n\eta$; this yields an additive correction to the bound on the error probability.
\end{itemize}
As a consequence,
\begin{itemize}
    \item if (b) does not decay exponentially fast, then the lower bound error exponent is zero, as the additive correction to the i.i.d.\ case dominates;
    \item if (b) is absent or decays exponentially with a sufficiently large rate, then the reliability function is unaffected in the limit $\eta\to 0$;
   
\end{itemize}
It is not hard to imagine that the reason beyond the counterexample we have provided above is strictly connected to the first point. We leave as an open problem the investigation of the potential settings in which the almost i.i.d.\ behaviour of sequences of channel does not affect the reliability function.

\section{Conclusion and open questions}

\errata{
In this paper, we studied three paradigmatic operational tasks in quantum Shannon theory -- hypothesis testing, data compression, and channel coding -- beyond the idealised i.i.d.\ regime, focusing instead on more realistic almost i.i.d.\ resources. Our results show that different notions of almost i.i.d.\ structure are not operationally equivalent: depending on the nature of the perturbation, the asymptotic rates of information-theoretic protocols may collapse, remain stable, or even increase when suitably tailored strategies exploit the non-i.i.d.\ structure of the resource.

At the same time, we identified classes of almost i.i.d.\ sources and channels for which robust protocols exist. These protocols achieve the same asymptotic rates and error exponents as in the ideal i.i.d.\ setting, despite 
not relying on a detailed knowledge of the precise form of the perturbation.

Our analysis focused on asymptotic notions of almost i.i.d.\ structure. An important open direction is to develop a corresponding finite-blocklength or one-shot theory, capable of quantitatively capturing the operational effects of deviations from the i.i.d.\ regime at finite system sizes.}

\subsection*{Acknowledgements} FG would like to express his gratitude to Franco Flandoli for answering some questions that arose during his probability course at SNS with an inspiring detour into classical transportation distances. FG and LL acknowledge financial support from the European Union (ERC StG ETQO, Grant Agreement no.\ 101165230). ND is supported by the Engineering and Physical Sciences Research Council [Grant Ref: EP/Y028732/1].
GDP has been supported by the UNA EUROPA SeedFunding project QUANTUMUnaE (CUP J37G25000380006).
GDP is a member of the ``Gruppo Nazionale per la Fisica Matematica (GNFM)'' of the ``Istituto Nazionale di Alta Matematica ``Francesco Severi'' (INdAM)''.

\bibliography{biblio}

\end{document}

%% file: macros.tex
\usepackage{newpxtext,newpxmath}

\let\coloneqq\relax
\let\eqqcolon\relax

\usepackage[utf8]{inputenc}
\usepackage{amsthm}
\usepackage{amssymb}
\usepackage{amsmath}
\usepackage{bbold}
\usepackage{bbm}
\usepackage[pdftex, backref=page]{hyperref}
\usepackage{braket}
\usepackage{dsfont}
\usepackage{mathdots}
\usepackage{mathtools}
\usepackage{enumerate}
\usepackage[shortlabels]{enumitem}
\usepackage{csquotes}
\usepackage{stmaryrd}
\usepackage[cal=boondox]{mathalfa}
\usepackage{graphicx}
\usepackage{stackengine}
\usepackage{scalerel}
\usepackage{tensor}       
\usepackage{array}
\usepackage{makecell}
\newcolumntype{x}[1]{>{\centering\arraybackslash}p{#1}}
\usepackage{tikz}
\usepackage{pgfplots}
\usetikzlibrary{shapes.geometric, shapes.misc, positioning, arrows, arrows.meta, decorations.pathreplacing, decorations.pathmorphing, patterns, angles, quotes, calc}
\usepackage{booktabs}
\usepackage{xfrac}
\usepackage{siunitx}
\usepackage{centernot}
\usepackage{comment}
\usepackage{chngcntr}
\usepackage{caption}
\usepackage{subcaption}

\newtheorem{thm}{Theorem}
\newtheorem*{thm*}{Theorem}
\newtheorem{prop}[thm]{Proposition}
\newtheorem*{prop*}{Proposition}
\newtheorem{lemma}[thm]{Lemma}
\newtheorem*{lemma*}{Lemma}

\newtheorem*{cor*}{Corollary}

\newtheorem*{cj*}{Conjecture}
\newtheorem{Def}[thm]{Definition}
\newtheorem*{Def*}{Definition}

\newtheorem*{question*}{Question}

\newtheorem*{problem*}{Problem}

\makeatletter
\def\thmhead@plain#1#2#3{%
  \thmname{#1}\thmnumber{\@ifnotempty{#1}{ }\@upn{#2}}%
  \thmnote{ {\the\thm@notefont#3}}}
\let\thmhead\thmhead@plain
\makeatother

\theoremstyle{definition}
\newtheorem{rem}[thm]{Remark}

\newcommand{\bb}{\begin{equation}\begin{aligned}\hspace{0pt}}
\newcommand{\bbb}{\begin{equation*}\begin{aligned}}
\newcommand{\ee}{\end{aligned}\end{equation}}
\newcommand{\eee}{\end{aligned}\end{equation*}}
\newcommand*{\coloneqq}{\mathrel{\vcenter{\baselineskip0.5ex \lineskiplimit0pt \hbox{\scriptsize.}\hbox{\scriptsize.}}} =}
\newcommand*{\eqqcolon}{= \mathrel{\vcenter{\baselineskip0.5ex \lineskiplimit0pt \hbox{\scriptsize.}\hbox{\scriptsize.}}}}
\newcommand\floor[1]{\left\lfloor#1\right\rfloor}

\newcommand{\eqt}[1]{\stackrel{\mathclap{\scriptsize \mbox{#1}}}{=}}
\newcommand{\leqt}[1]{\stackrel{\mathclap{\scriptsize \mbox{#1}}}{\leq}}

\newcommand{\geqt}[1]{\stackrel{\mathclap{\scriptsize \mbox{#1}}}{\geq}}
\newcommand{\ketbra}[1]{\ket{#1}\!\!\bra{#1}}

\newcommand{\sumno}{\sum\nolimits}

\newcommand{\e}{\varepsilon}
\renewcommand{\epsilon}{\varepsilon}

\newcommand{\id}{\mathds{1}}

\newcommand{\N}{\mathds{N}}

\newcommand{\E}{\mathds{E}}

\DeclareMathOperator{\Tr}{Tr}

\DeclareMathAlphabet{\pazocal}{OMS}{zplm}{m}{n}

\DeclareMathOperator{\pr}{Pr}
\DeclareMathOperator{\supp}{supp}

\DeclareMathOperator{\stein}{Stein}

\newcommand{\HH}{\mathcal{H}}
\newcommand{\T}{\pazocal{T}}

\newcommand{\EE}{\pazocal{E}}
\newcommand{\MM}{\pazocal{M}}
\newcommand{\D}{\mathcal{D}}

\newcommand{\XX}{\mathcal{X}}

\newcommand{\PP}{\pazocal{P}}

\newcommand{\lsmatrix}{\left(\begin{smallmatrix}}
\newcommand{\rsmatrix}{\end{smallmatrix}\right)}

\newcommand{\deff}[1]{\emph{#1}}
\newcommand{\rel}[3]{#1\big(#2\,\big\|\,#3\big)}

\stackMath
\newcommand\xxrightarrow[2][]{\mathrel{%
  \setbox2=\hbox{\stackon{\scriptstyle#1}{\scriptstyle#2}}%
  \stackunder[5pt]{%
    \xrightarrow{\makebox[\dimexpr\wd2\relax]{$\scriptstyle#2$}}%
  }{%
   \scriptstyle#1\,%
  }%
}}

\newcommand{\tendsn}[1]{\xxrightarrow[\! n\rightarrow \infty\!]{#1}}

\stackMath

\makeatletter
\newcommand*\rel@kern[1]{\kern#1\dimexpr\macc@kerna}
\newcommand*\widebar[1]{%
  \begingroup
  \def\mathaccent##1##2{%
    \rel@kern{0.8}%
    \overline{\rel@kern{-0.8}\macc@nucleus\rel@kern{0.2}}%
    \rel@kern{-0.2}%
  }%
  \macc@depth\@ne
  \let\math@bgroup\@empty \let\math@egroup\macc@set@skewchar
  \mathsurround\z@ \frozen@everymath{\mathgroup\macc@group\relax}%
  \macc@set@skewchar\relax
  \let\mathaccentV\macc@nested@a
  \macc@nested@a\relax111{#1}%
  \endgroup
}

\counterwithin*{equation}{part}
\counterwithin*{thm}{part}
\counterwithin*{figure}{part}

\tikzset{meter/.append style={draw, inner sep=10, rectangle, font=\vphantom{A}, minimum width=30, line width=.8, path picture={\draw[black] ([shift={(.1,.3)}]path picture bounding box.south west) to[bend left=50] ([shift={(-.1,.3)}]path picture bounding box.south east);\draw[black,-latex] ([shift={(0,.1)}]path picture bounding box.south) -- ([shift={(.3,-.1)}]path picture bounding box.north);}}}
\tikzset{roundnode/.append style={circle, draw=black, fill=gray!20, thick, minimum size=10mm}}
\tikzset{squarenode/.style={rectangle, draw=black, fill=none, thick, minimum size=10mm}}

\definecolor{Blues5seq1}{RGB}{239,243,255}
\definecolor{Blues5seq2}{RGB}{189,215,231}
\definecolor{Blues5seq3}{RGB}{107,174,214}
\definecolor{Blues5seq4}{RGB}{49,130,189}
\definecolor{Blues5seq5}{RGB}{8,81,156}

\definecolor{Greens5seq1}{RGB}{237,248,233}
\definecolor{Greens5seq2}{RGB}{186,228,179}
\definecolor{Greens5seq3}{RGB}{116,196,118}
\definecolor{Greens5seq4}{RGB}{49,163,84}
\definecolor{Greens5seq5}{RGB}{0,109,44}

\definecolor{Reds5seq1}{RGB}{254,229,217}
\definecolor{Reds5seq2}{RGB}{252,174,145}
\definecolor{Reds5seq3}{RGB}{251,106,74}
\definecolor{Reds5seq4}{RGB}{222,45,38}
\definecolor{Reds5seq5}{RGB}{165,15,21}

\allowdisplaybreaks

\usepackage[most,breakable]{tcolorbox}
\newenvironment{boxedthm}[1]%
	{\expandafter\ifstrequal\expandafter{#1}{orange}{\begin{tcolorbox}[colback=red!15,colframe=orange!15,breakable,enhanced]}{\begin{tcolorbox}[colback=Blues5seq1,colframe=Blues5seq5,breakable,enhanced]}}%
	{\end{tcolorbox}}

%% file: almost.pdf_tex
\begingroup%
  \makeatletter%
  \providecommand\color[2][]{%
    \errmessage{(Inkscape) Color is used for the text in Inkscape, but the package 'color.sty' is not loaded}%
    \renewcommand\color[2][]{}%
  }%
  \providecommand\transparent[1]{%
    \errmessage{(Inkscape) Transparency is used (non-zero) for the text in Inkscape, but the package 'transparent.sty' is not loaded}%
    \renewcommand\transparent[1]{}%
  }%
  \providecommand\rotatebox[2]{#2}%
  \newcommand*\fsize{\dimexpr\f@size pt\relax}%
  \newcommand*\lineheight[1]{\fontsize{\fsize}{#1\fsize}\selectfont}%
  \ifx\svgwidth\undefined%
    \setlength{\unitlength}{380.17526702bp}%
    \ifx\svgscale\undefined%
      \relax%
    \else%
      \setlength{\unitlength}{\unitlength * \real{\svgscale}}%
    \fi%
  \else%
    \setlength{\unitlength}{\svgwidth}%
  \fi%
  \global\let\svgwidth\undefined%
  \global\let\svgscale\undefined%
  \makeatother%
  \begin{picture}(1,0.4066788)%
    \lineheight{1}%
    \setlength\tabcolsep{0pt}%
    \put(0,0){\includegraphics[width=\unitlength,page=1]{almost.pdf}}%
    \put(0.09707351,0.38659857){\color[rgb]{0,0,0}\makebox(0,0)[lt]{\lineheight{1.25}\smash{\begin{tabular}[t]{l}weakly almost i.i.d.\end{tabular}}}}%
    \put(-0.00060762,0.06378345){\color[rgb]{0,0,0}\makebox(0,0)[lt]{\lineheight{1.25}\smash{\begin{tabular}[t]{l}Wasserstein almost i.i.d.\end{tabular}}}}%
    \put(0.77726561,0.04703612){\color[rgb]{0,0,0}\makebox(0,0)[lt]{\lineheight{1.25}\smash{\begin{tabular}[t]{l}MSR almost i.i.d.\end{tabular}}}}%
    \put(0,0){\includegraphics[width=\unitlength,page=2]{almost.pdf}}%
    \put(0.4079873,0.30348001){\color[rgb]{0,0,0}\makebox(0,0)[lt]{\lineheight{1.25}\smash{\begin{tabular}[t]{l}$\tilde p_n$\end{tabular}}}}%
    \put(0.48244395,0.23545102){\color[rgb]{0,0,0}\makebox(0,0)[lt]{\lineheight{1.25}\smash{\begin{tabular}[t]{l}$\xi_n$\end{tabular}}}}%
    \put(0.29213611,0.24237575){\color[rgb]{0,0,0}\makebox(0,0)[lt]{\lineheight{1.25}\smash{\begin{tabular}[t]{l}$\Psi_n$\end{tabular}}}}%
    \put(0,0){\includegraphics[width=\unitlength,page=3]{almost.pdf}}%
    \put(0.40225365,0.0001737){\color[rgb]{0,0,0}\makebox(0,0)[lt]{\lineheight{1.25}\smash{\begin{tabular}[t]{l}trace distance almost i.i.d.\end{tabular}}}}%
    \put(0,0){\includegraphics[width=\unitlength,page=4]{almost.pdf}}%
    \put(0.62009634,0.17025699){\color[rgb]{0,0,0}\makebox(0,0)[lt]{\lineheight{1.25}\smash{\begin{tabular}[t]{l}$\eta_n$\end{tabular}}}}%
    \put(0.68957729,0.18282651){\color[rgb]{0,0,0}\makebox(0,0)[lt]{\lineheight{1.25}\smash{\begin{tabular}[t]{l}$\gamma_n$\end{tabular}}}}%
    \put(0.5249737,0.16852607){\color[rgb]{0,0,0}\makebox(0,0)[lt]{\lineheight{1.25}\smash{\begin{tabular}[t]{l}$\tilde\sigma_n$\end{tabular}}}}%
  \end{picture}%
\endgroup%

%% file: almost_iid_channel.pdf_tex
\begingroup%
  \makeatletter%
  \providecommand\color[2][]{%
    \errmessage{(Inkscape) Color is used for the text in Inkscape, but the package 'color.sty' is not loaded}%
    \renewcommand\color[2][]{}%
  }%
  \providecommand\transparent[1]{%
    \errmessage{(Inkscape) Transparency is used (non-zero) for the text in Inkscape, but the package 'transparent.sty' is not loaded}%
    \renewcommand\transparent[1]{}%
  }%
  \providecommand\rotatebox[2]{#2}%
  \newcommand*\fsize{\dimexpr\f@size pt\relax}%
  \newcommand*\lineheight[1]{\fontsize{\fsize}{#1\fsize}\selectfont}%
  \ifx\svgwidth\undefined%
    \setlength{\unitlength}{505.26555579bp}%
    \ifx\svgscale\undefined%
      \relax%
    \else%
      \setlength{\unitlength}{\unitlength * \real{\svgscale}}%
    \fi%
  \else%
    \setlength{\unitlength}{\svgwidth}%
  \fi%
  \global\let\svgwidth\undefined%
  \global\let\svgscale\undefined%
  \makeatother%
  \begin{picture}(1,0.20652522)%
    \lineheight{1}%
    \setlength\tabcolsep{0pt}%
    \put(0,0){\includegraphics[width=\unitlength,page=1]{almost_iid_channel.pdf}}%
    \put(0.02099201,0.12608123){\color[rgb]{0,0,0}\makebox(0,0)[lt]{\lineheight{1.25}\smash{\begin{tabular}[t]{l}\textit{$m$}\end{tabular}}}}%
    \put(0.38911074,0.12608123){\color[rgb]{0,0,0}\makebox(0,0)[lt]{\lineheight{1.25}\smash{\begin{tabular}[t]{l}\textit{$m'$}\end{tabular}}}}%
    \put(0,0){\includegraphics[width=\unitlength,page=2]{almost_iid_channel.pdf}}%
    \put(0.21642781,0.07220858){\color[rgb]{0,0,0}\makebox(0,0)[lt]{\lineheight{1.25}\smash{\begin{tabular}[t]{l}\textit{$\vdots$}\end{tabular}}}}%
    \put(0.09139638,0.11064769){\color[rgb]{0,0,0}\makebox(0,0)[lt]{\lineheight{1.25}\smash{\begin{tabular}[t]{l}\textit{$\pazocal{E}$}\end{tabular}}}}%
    \put(0.33287698,0.11219104){\color[rgb]{0,0,0}\makebox(0,0)[lt]{\lineheight{1.25}\smash{\begin{tabular}[t]{l}\textit{$\pazocal{D}$}\end{tabular}}}}%
    \put(0.20825211,0.16775173){\color[rgb]{0,0,0}\makebox(0,0)[lt]{\lineheight{1.25}\smash{\begin{tabular}[t]{l}\textit{$\pazocal{N}$}\end{tabular}}}}%
    \put(0.20825211,0.11728317){\color[rgb]{0,0,0}\makebox(0,0)[lt]{\lineheight{1.25}\smash{\begin{tabular}[t]{l}\textit{$\pazocal{N}$}\end{tabular}}}}%
    \put(0.20825211,0.02525227){\color[rgb]{0,0,0}\makebox(0,0)[lt]{\lineheight{1.25}\smash{\begin{tabular}[t]{l}\textit{$\pazocal{N}$}\end{tabular}}}}%
    \put(0,0){\includegraphics[width=\unitlength,page=3]{almost_iid_channel.pdf}}%
    \put(0.51366649,0.12608123){\color[rgb]{0,0,0}\makebox(0,0)[lt]{\lineheight{1.25}\smash{\begin{tabular}[t]{l}\textit{$m$}\end{tabular}}}}%
    \put(0.89069204,0.12608123){\color[rgb]{0,0,0}\makebox(0,0)[lt]{\lineheight{1.25}\smash{\begin{tabular}[t]{l}\textit{$\tilde m'$}\end{tabular}}}}%
    \put(0,0){\includegraphics[width=\unitlength,page=4]{almost_iid_channel.pdf}}%
    \put(0.58407062,0.11064769){\color[rgb]{0,0,0}\makebox(0,0)[lt]{\lineheight{1.25}\smash{\begin{tabular}[t]{l}\textit{$\pazocal{E}$}\end{tabular}}}}%
    \put(0.83445811,0.11219104){\color[rgb]{0,0,0}\makebox(0,0)[lt]{\lineheight{1.25}\smash{\begin{tabular}[t]{l}\textit{$\pazocal{D}$}\end{tabular}}}}%
    \put(0.69275435,0.10752521){\color[rgb]{0,0,0}\makebox(0,0)[lt]{\lineheight{1.25}\smash{\begin{tabular}[t]{l}\textit{$\tilde{\pazocal{N}}^{(n)}$}\end{tabular}}}}%
  \end{picture}%
\endgroup%

%% file: almosthptest.pdf_tex
\begingroup%
  \makeatletter%
  \providecommand\color[2][]{%
    \errmessage{(Inkscape) Color is used for the text in Inkscape, but the package 'color.sty' is not loaded}%
    \renewcommand\color[2][]{}%
  }%
  \providecommand\transparent[1]{%
    \errmessage{(Inkscape) Transparency is used (non-zero) for the text in Inkscape, but the package 'transparent.sty' is not loaded}%
    \renewcommand\transparent[1]{}%
  }%
  \providecommand\rotatebox[2]{#2}%
  \newcommand*\fsize{\dimexpr\f@size pt\relax}%
  \newcommand*\lineheight[1]{\fontsize{\fsize}{#1\fsize}\selectfont}%
  \ifx\svgwidth\undefined%
    \setlength{\unitlength}{264.66356413bp}%
    \ifx\svgscale\undefined%
      \relax%
    \else%
      \setlength{\unitlength}{\unitlength * \real{\svgscale}}%
    \fi%
  \else%
    \setlength{\unitlength}{\svgwidth}%
  \fi%
  \global\let\svgwidth\undefined%
  \global\let\svgscale\undefined%
  \makeatother%
  \begin{picture}(1,0.42782669)%
    \lineheight{1}%
    \setlength\tabcolsep{0pt}%
    \put(0,0){\includegraphics[width=\unitlength,page=1]{almosthptest.pdf}}%
    \put(-0.00232742,0.24546257){\color[rgb]{0,0,0}\makebox(0,0)[lt]{\lineheight{1.25}\smash{\begin{tabular}[t]{l}$H_0$\end{tabular}}}}%
    \put(0,0){\includegraphics[width=\unitlength,page=2]{almosthptest.pdf}}%
    \put(-0.00130829,0.15543593){\color[rgb]{0,0,0}\makebox(0,0)[lt]{\lineheight{1.25}\smash{\begin{tabular}[t]{l}$H_1$\end{tabular}}}}%
    \put(0,0){\includegraphics[width=\unitlength,page=3]{almosthptest.pdf}}%
    \put(0.62364518,0.19910666){\color[rgb]{0,0,0}\makebox(0,0)[lt]{\lineheight{1.25}\smash{\begin{tabular}[t]{l}$?$ \end{tabular}}}}%
    \put(0.75470197,0.19669676){\color[rgb]{0,0,0}\makebox(0,0)[lt]{\lineheight{1.25}\smash{\begin{tabular}[t]{l}$E_n$\end{tabular}}}}%
    \put(0,0){\includegraphics[width=\unitlength,page=4]{almosthptest.pdf}}%
    \put(0.94002579,0.2385344){\color[rgb]{0,0,0}\makebox(0,0)[lt]{\lineheight{1.25}\smash{\begin{tabular}[t]{l}$0$\end{tabular}}}}%
    \put(0.94002579,0.15918986){\color[rgb]{0,0,0}\makebox(0,0)[lt]{\lineheight{1.25}\smash{\begin{tabular}[t]{l}$1$\end{tabular}}}}%
    \put(0,0){\includegraphics[width=\unitlength,page=5]{almosthptest.pdf}}%
    \put(0.40108951,0.25045365){\color[rgb]{0,0,0}\makebox(0,0)[lt]{\lineheight{1.25}\smash{\begin{tabular}[t]{l}\textbf{\textit{$\rho^{\otimes n}$}}\end{tabular}}}}%
    \put(0.39872795,0.15953126){\color[rgb]{0,0,0}\makebox(0,0)[lt]{\lineheight{1.25}\smash{\begin{tabular}[t]{l}\textbf{\textit{$\sigma^{\otimes n}$}}\end{tabular}}}}%
    \put(0,0){\includegraphics[width=\unitlength,page=6]{almosthptest.pdf}}%
    \put(0.27788447,0.24527529){\color[rgb]{0,0,0}\makebox(0,0)[lt]{\lineheight{1.25}\smash{\begin{tabular}[t]{l}$\cdots$\end{tabular}}}}%
    \put(0,0){\includegraphics[width=\unitlength,page=7]{almosthptest.pdf}}%
    \put(0.27788447,0.1545942){\color[rgb]{0,0,0}\makebox(0,0)[lt]{\lineheight{1.25}\smash{\begin{tabular}[t]{l}$\cdots$\end{tabular}}}}%
    \put(0,0){\includegraphics[width=\unitlength,page=8]{almosthptest.pdf}}%
    \put(-0.00232742,0.37585732){\color[rgb]{0,0,0}\makebox(0,0)[lt]{\lineheight{1.25}\smash{\begin{tabular}[t]{l}$\tilde{H}_0$\end{tabular}}}}%
    \put(0,0){\includegraphics[width=\unitlength,page=9]{almosthptest.pdf}}%
    \put(0.27788447,0.37562959){\color[rgb]{0,0,0}\makebox(0,0)[lt]{\lineheight{1.25}\smash{\begin{tabular}[t]{l}$\cdots$\end{tabular}}}}%
    \put(0,0){\includegraphics[width=\unitlength,page=10]{almosthptest.pdf}}%
    \put(0.40108951,0.3751402){\color[rgb]{0,0,0}\makebox(0,0)[lt]{\lineheight{1.25}\smash{\begin{tabular}[t]{l}\textbf{\textit{$\rho_n$}}\end{tabular}}}}%
    \put(0,0){\includegraphics[width=\unitlength,page=11]{almosthptest.pdf}}%
    \put(-0.00232742,0.02446756){\color[rgb]{0,0,0}\makebox(0,0)[lt]{\lineheight{1.25}\smash{\begin{tabular}[t]{l}$\tilde{H}_1$\end{tabular}}}}%
    \put(0,0){\includegraphics[width=\unitlength,page=12]{almosthptest.pdf}}%
    \put(0.27788447,0.02424006){\color[rgb]{0,0,0}\makebox(0,0)[lt]{\lineheight{1.25}\smash{\begin{tabular}[t]{l}$\cdots$\end{tabular}}}}%
    \put(0.40108951,0.02375068){\color[rgb]{0,0,0}\makebox(0,0)[lt]{\lineheight{1.25}\smash{\begin{tabular}[t]{l}\textbf{\textit{$\sigma_n$}}\end{tabular}}}}%
    \put(0,0){\includegraphics[width=\unitlength,page=13]{almosthptest.pdf}}%
  \end{picture}%
\endgroup%

%% file: data_compression.pdf_tex
\begingroup%
  \makeatletter%
  \providecommand\color[2][]{%
    \errmessage{(Inkscape) Color is used for the text in Inkscape, but the package 'color.sty' is not loaded}%
    \renewcommand\color[2][]{}%
  }%
  \providecommand\transparent[1]{%
    \errmessage{(Inkscape) Transparency is used (non-zero) for the text in Inkscape, but the package 'transparent.sty' is not loaded}%
    \renewcommand\transparent[1]{}%
  }%
  \providecommand\rotatebox[2]{#2}%
  \newcommand*\fsize{\dimexpr\f@size pt\relax}%
  \newcommand*\lineheight[1]{\fontsize{\fsize}{#1\fsize}\selectfont}%
  \ifx\svgwidth\undefined%
    \setlength{\unitlength}{418.27269547bp}%
    \ifx\svgscale\undefined%
      \relax%
    \else%
      \setlength{\unitlength}{\unitlength * \real{\svgscale}}%
    \fi%
  \else%
    \setlength{\unitlength}{\svgwidth}%
  \fi%
  \global\let\svgwidth\undefined%
  \global\let\svgscale\undefined%
  \makeatother%
  \begin{picture}(1,0.16967385)%
    \lineheight{1}%
    \setlength\tabcolsep{0pt}%
    \put(0.99913589,0.07055804){\color[rgb]{0,0,0}\makebox(0,0)[rt]{\lineheight{1.25}\smash{\begin{tabular}[t]{r}\textit{$X_1,  \dots, X_n$}\end{tabular}}}}%
    \put(0,0){\includegraphics[width=\unitlength,page=1]{data_compression.pdf}}%
    \put(0.09799362,0.03366514){\color[rgb]{0,0,0}\makebox(0,0)[lt]{\lineheight{1.25}\smash{\begin{tabular}[t]{l}\textit{$X_1\quad X_2\quad \cdots \quad X_n$}\end{tabular}}}}%
    \put(0,0){\includegraphics[width=\unitlength,page=2]{data_compression.pdf}}%
    \put(0.09799362,0.12332095){\color[rgb]{0,0,0}\makebox(0,0)[lt]{\lineheight{1.25}\smash{\begin{tabular}[t]{l}\textit{$X_1\quad X_2\quad \cdots \quad X_n$}\end{tabular}}}}%
    \put(0,0){\includegraphics[width=\unitlength,page=3]{data_compression.pdf}}%
    \put(0.49889748,0.0719187){\color[rgb]{0,0,0}\makebox(0,0)[lt]{\lineheight{1.25}\smash{\begin{tabular}[t]{l}\textit{$\pazocal{E}_n$}\end{tabular}}}}%
    \put(0,0){\includegraphics[width=\unitlength,page=4]{data_compression.pdf}}%
    \put(0.57420706,0.09343609){\color[rgb]{0,0,0}\makebox(0,0)[lt]{\lineheight{1.25}\smash{\begin{tabular}[t]{l}\textit{$m\in[M]$}\end{tabular}}}}%
    \put(0,0){\includegraphics[width=\unitlength,page=5]{data_compression.pdf}}%
    \put(0.74275626,0.0719187){\color[rgb]{0,0,0}\makebox(0,0)[lt]{\lineheight{1.25}\smash{\begin{tabular}[t]{l}\textit{$\pazocal{D}_n$}\end{tabular}}}}%
    \put(0,0){\includegraphics[width=\unitlength,page=6]{data_compression.pdf}}%
    \put(-0.00241944,0.12332095){\color[rgb]{0,0,0}\makebox(0,0)[lt]{\lineheight{1.25}\smash{\begin{tabular}[t]{l}(a)\end{tabular}}}}%
    \put(-0.00241944,0.03007891){\color[rgb]{0,0,0}\makebox(0,0)[lt]{\lineheight{1.25}\smash{\begin{tabular}[t]{l}(b)\end{tabular}}}}%
  \end{picture}%
\endgroup%

%% file: q_data_comp.pdf_tex
\begingroup%
  \makeatletter%
  \providecommand\color[2][]{%
    \errmessage{(Inkscape) Color is used for the text in Inkscape, but the package 'color.sty' is not loaded}%
    \renewcommand\color[2][]{}%
  }%
  \providecommand\transparent[1]{%
    \errmessage{(Inkscape) Transparency is used (non-zero) for the text in Inkscape, but the package 'transparent.sty' is not loaded}%
    \renewcommand\transparent[1]{}%
  }%
  \providecommand\rotatebox[2]{#2}%
  \newcommand*\fsize{\dimexpr\f@size pt\relax}%
  \newcommand*\lineheight[1]{\fontsize{\fsize}{#1\fsize}\selectfont}%
  \ifx\svgwidth\undefined%
    \setlength{\unitlength}{540.32881129bp}%
    \ifx\svgscale\undefined%
      \relax%
    \else%
      \setlength{\unitlength}{\unitlength * \real{\svgscale}}%
    \fi%
  \else%
    \setlength{\unitlength}{\svgwidth}%
  \fi%
  \global\let\svgwidth\undefined%
  \global\let\svgscale\undefined%
  \makeatother%
  \begin{picture}(1,0.26187784)%
    \lineheight{1}%
    \setlength\tabcolsep{0pt}%
    \put(0,0){\includegraphics[width=\unitlength,page=1]{q_data_comp.pdf}}%
    \put(0.45065554,0.1271209){\color[rgb]{0,0,0}\makebox(0,0)[lt]{\lineheight{1.25}\smash{\begin{tabular}[t]{l}\textit{$\pazocal{E}_n$}\end{tabular}}}}%
    \put(0,0){\includegraphics[width=\unitlength,page=2]{q_data_comp.pdf}}%
    \put(0.5394909,0.14377768){\color[rgb]{0,0,0}\makebox(0,0)[lt]{\lineheight{1.25}\smash{\begin{tabular}[t]{l}\textit{$\mathcal{K}_n$}\end{tabular}}}}%
    \put(0,0){\includegraphics[width=\unitlength,page=3]{q_data_comp.pdf}}%
    \put(0.63942854,0.1271209){\color[rgb]{0,0,0}\makebox(0,0)[lt]{\lineheight{1.25}\smash{\begin{tabular}[t]{l}\textit{$\pazocal{D}_n$}\end{tabular}}}}%
    \put(0,0){\includegraphics[width=\unitlength,page=4]{q_data_comp.pdf}}%
    \put(0.22427142,0.16332664){\color[rgb]{0,0,0}\makebox(0,0)[lt]{\lineheight{1.25}\smash{\begin{tabular}[t]{l}\textit{$\color{Violet}\rho^{\otimes n}$}\end{tabular}}}}%
    \put(0.22427142,0.10225266){\color[rgb]{0,0,0}\makebox(0,0)[lt]{\lineheight{1.25}\smash{\begin{tabular}[t]{l}\textit{$\color{Violet}\rho_n$}\end{tabular}}}}%
    \put(0.2131669,0.24660935){\color[rgb]{0,0,0}\makebox(0,0)[lt]{\lineheight{1.25}\smash{\begin{tabular}[t]{l}\textit{$\Psi_{A^nR}$}\end{tabular}}}}%
    \put(0.05698276,0.19888895){\color[rgb]{0,0,0}\makebox(0,0)[lt]{\lineheight{1.25}\smash{\begin{tabular}[t]{l}\textit{$R$}\end{tabular}}}}%
    \put(0.05420671,0.16557587){\color[rgb]{0,0,0}\makebox(0,0)[lt]{\lineheight{1.25}\smash{\begin{tabular}[t]{l}\textit{$A^n$}\end{tabular}}}}%
    \put(0.05698276,0.06008444){\color[rgb]{0,0,0}\makebox(0,0)[lt]{\lineheight{1.25}\smash{\begin{tabular}[t]{l}\textit{$R$}\end{tabular}}}}%
    \put(0.05420671,0.09339753){\color[rgb]{0,0,0}\makebox(0,0)[lt]{\lineheight{1.25}\smash{\begin{tabular}[t]{l}\textit{$A^n$}\end{tabular}}}}%
    \put(0.22149529,0.01064169){\color[rgb]{0,0,0}\makebox(0,0)[lt]{\lineheight{1.25}\smash{\begin{tabular}[t]{l}\textit{$\tilde{\Psi}_{A^nR}$}\end{tabular}}}}%
    \put(0.90872517,0.24660935){\color[rgb]{0,0,0}\makebox(0,0)[rt]{\lineheight{1.25}\smash{\begin{tabular}[t]{r}\textit{$\sigma_{A^nR}$}\end{tabular}}}}%
    \put(0.9173006,0.00508951){\color[rgb]{0,0,0}\makebox(0,0)[rt]{\lineheight{1.25}\smash{\begin{tabular}[t]{r}\textit{$\tilde{\sigma}_{A^nR}$}\end{tabular}}}}%
    \put(0,0){\includegraphics[width=\unitlength,page=5]{q_data_comp.pdf}}%
    \put(0.00090317,0.17823932){\color[rgb]{0,0,0}\makebox(0,0)[lt]{\lineheight{1.25}\smash{\begin{tabular}[t]{l}(a)\end{tabular}}}}%
    \put(-0.00187288,0.07552295){\color[rgb]{0,0,0}\makebox(0,0)[lt]{\lineheight{1.25}\smash{\begin{tabular}[t]{l}(b)\end{tabular}}}}%
  \end{picture}%
\endgroup%

%% file: coupling.pdf_tex
\begingroup%
  \makeatletter%
  \providecommand\color[2][]{%
    \errmessage{(Inkscape) Color is used for the text in Inkscape, but the package 'color.sty' is not loaded}%
    \renewcommand\color[2][]{}%
  }%
  \providecommand\transparent[1]{%
    \errmessage{(Inkscape) Transparency is used (non-zero) for the text in Inkscape, but the package 'transparent.sty' is not loaded}%
    \renewcommand\transparent[1]{}%
  }%
  \providecommand\rotatebox[2]{#2}%
  \newcommand*\fsize{\dimexpr\f@size pt\relax}%
  \newcommand*\lineheight[1]{\fontsize{\fsize}{#1\fsize}\selectfont}%
  \ifx\svgwidth\undefined%
    \setlength{\unitlength}{367.78260239bp}%
    \ifx\svgscale\undefined%
      \relax%
    \else%
      \setlength{\unitlength}{\unitlength * \real{\svgscale}}%
    \fi%
  \else%
    \setlength{\unitlength}{\svgwidth}%
  \fi%
  \global\let\svgwidth\undefined%
  \global\let\svgscale\undefined%
  \makeatother%
  \begin{picture}(1,0.28615852)%
    \lineheight{1}%
    \setlength\tabcolsep{0pt}%
    \put(0,0){\includegraphics[width=\unitlength,page=1]{coupling.pdf}}%
    \put(0.5366511,0.09920133){\color[rgb]{0,0,0}\makebox(0,0)[lt]{\lineheight{1.25}\smash{\begin{tabular}[t]{l}\textit{$\vdots$}\end{tabular}}}}%
    \put(0.52541919,0.23046001){\color[rgb]{0,0,0}\makebox(0,0)[lt]{\lineheight{1.25}\smash{\begin{tabular}[t]{l}\textit{$\pazocal{W}$}\end{tabular}}}}%
    \put(0.52541919,0.16112545){\color[rgb]{0,0,0}\makebox(0,0)[lt]{\lineheight{1.25}\smash{\begin{tabular}[t]{l}\textit{$\pazocal{W}$}\end{tabular}}}}%
    \put(0.52541919,0.034692){\color[rgb]{0,0,0}\makebox(0,0)[lt]{\lineheight{1.25}\smash{\begin{tabular}[t]{l}\textit{$\pazocal{W}$}\end{tabular}}}}%
    \put(0,0){\includegraphics[width=\unitlength,page=2]{coupling.pdf}}%
    \put(0.016214,0.17401207){\color[rgb]{0,0,0}\makebox(0,0)[lt]{\lineheight{1.25}\smash{\begin{tabular}[t]{l}\textit{$X^n$}\end{tabular}}}}%
    \put(0.09702074,0.15015074){\color[rgb]{0,0,0}\makebox(0,0)[lt]{\lineheight{1.25}\smash{\begin{tabular}[t]{l}\textit{$\tilde{\pazocal{W}}^{(n)}$}\end{tabular}}}}%
    \put(0.20382498,0.17401207){\color[rgb]{0,0,0}\makebox(0,0)[lt]{\lineheight{1.25}\smash{\begin{tabular}[t]{l}\textit{$\tilde{Y}^n$}\end{tabular}}}}%
    \put(0.42814244,0.17401207){\color[rgb]{0,0,0}\makebox(0,0)[lt]{\lineheight{1.25}\smash{\begin{tabular}[t]{l}\textit{$X^n$}\end{tabular}}}}%
    \put(0,0){\includegraphics[width=\unitlength,page=3]{coupling.pdf}}%
    \put(0.69671479,0.14073324){\color[rgb]{0,0,0}\makebox(0,0)[lt]{\lineheight{1.25}\smash{\begin{tabular}[t]{l}\textit{$\Phi^\ast_{X^n}$}\end{tabular}}}}%
    \put(0.6110674,0.17743921){\color[rgb]{0,0,0}\makebox(0,0)[lt]{\lineheight{1.25}\smash{\begin{tabular}[t]{l}\textit{$Y^n$}\end{tabular}}}}%
    \put(0.80275426,0.17743921){\color[rgb]{0,0,0}\makebox(0,0)[lt]{\lineheight{1.25}\smash{\begin{tabular}[t]{l}\textit{$\tilde{Y}^n$}\end{tabular}}}}%
    \put(0.33025849,0.14954142){\color[rgb]{0,0,0}\makebox(0,0)[lt]{\lineheight{1.25}\smash{\begin{tabular}[t]{l}\textit{$=$}\end{tabular}}}}%
  \end{picture}%
\endgroup%